\title{Parameterized (Approximate) Defective Coloring\footnote{The authors were supported by the GRAPA – Graph Algorithms for Parameterized Approximation – 38593YJ PHC Sakura Project}}
\titlerunning{Parameterized (Approximate) Defective Coloring} 
\author[1]{Rémy Belmonte\footnote{Supported by the ELC project (Grant-in-Aid for Scientific Research on Innovative Areas, MEXT Japan)}}
\author[2]{Michael Lampis}
\author[3]{Valia Mitsou}
\affil[1]{University of Electro-Communications, Chofu, Tokyo, 182-8585, Japan, \texttt{remy.belmonte@uec.ac.jp}}
\affil[2]{Université Paris-Dauphine, PSL Research University, CNRS, UMR 7243 \\ LAMSADE, 75016, Paris, France, \texttt{michail.lampis@dauphine.fr}}
\affil[3]{Université Paris-Diderot, IRIF, CNRS, 8243, \\ Université Paris-Diderot – Paris 7, 75205, Paris, France, \texttt{vmitsou@irif.fr}}
\authorrunning{R. Belmonte, M. Lampis, and V. Mitsou} 
\subjclass{F.1.3 Complexity Measures and Classes, G.2.2 Graph Theory}
\keywords{Treewidth, Parameterized Complexity, Approximation, Coloring}
\begin{document}

\maketitle

\newcommand{\tw}{\mathrm{tw}}
\newcommand{\pw}{\mathrm{pw}}
\newcommand{\fvs}{\mathrm{fvs}}
\newcommand{\td}{\mathrm{td}}
\newcommand{\vc}{\mathrm{vc}}
\newcommand{\DC}{\textsc{Defective Coloring}} 
\newcommand{\MCC}{$k$-\textsc{Multi-Colored Clique}} 
\newcommand{\D}{\ensuremath\Delta^*} 
\newcommand{\C}{\ensuremath\mathrm{\chi_d}} 

\newcommand{\T}{\ensuremath\mathcal{T}} 

\begin{abstract}

In \DC\ we are given a graph $G=(V,E)$ and two integers $\C,\D$ and are asked
if we can partition $V$ into $\C$ color classes, so that each class induces a
graph of maximum degree $\D$. We investigate the complexity of this
generalization of \textsc{Coloring} with respect to several well-studied graph
parameters, and show that the problem is W-hard parameterized by treewidth,
pathwidth, tree-depth, or feedback vertex set, if $\C=2$. As expected, this
hardness can be extended to larger values of $\C$ for most of these parameters,
with one surprising exception: we show that the problem is FPT parameterized by
feedback vertex set for any $\C\neq 2$, and hence $2$-coloring is the only hard
case for this parameter.  In addition to the above, we give an ETH-based lower
bound for treewidth and pathwidth, showing that no algorithm can solve the
problem in $n^{o(\pw)}$, essentially matching the complexity of an algorithm
obtained with standard techniques. 

We complement these results by considering the problem's approximability and
show that, with respect to $\D$, the problem admits an algorithm which for any
$\epsilon>0$ runs in time $(\tw/\epsilon)^{O(\tw)}$ and returns a solution with
exactly the desired number of colors that approximates the optimal $\D$ within
$(1+\epsilon)$. We also give a $(\tw)^{O(\tw)}$ algorithm which achieves the
desired $\D$ exactly while $2$-approximating the minimum value of $\C$. We show
that this is close to optimal, by establishing that no FPT algorithm can (under
standard assumptions) achieve a better than $3/2$-approximation to $\C$, even
when an extra constant additive error is also allowed.

 \end{abstract}

\newcounter{count:constr} 

\newcounter{count:constr2}

\section{Introduction}

%
%
%

\DC\ is the following problem: we are given a graph $G=(V,E)$, and two integer
parameters $\C,\D$, and are asked whether there exists a partition of $V$ into
at most $\C$ sets (color classes), such that each set induces a graph with
maximum degree at most $\D$. \DC, which is also sometimes referred to in the
literature as \textsc{Improper Coloring}, is a natural generalization of the
classical \textsc{Coloring} problem, which corresponds to the case $\D=0$. The
problem was introduced more than thirty years ago
\cite{andrews1985generalization,CCW86}, and since then has attracted a great
deal of attention
\cite{AchuthanAS11,AngeliniBLD0KMR17,Archdeacon87,BorodinKY13,ChoiE16,CowenGJ97,FrickH94,GoddardX16,HavetS06,KangM10,KimKZ14,KimKZ16}.

From the point of view of applications, \DC\ is particularly interesting in the
context of wireless communication networks, where the assignment of colors to
vertices often represents the assignment of frequencies to communication nodes.
In many practical settings, the requirement of traditional coloring that all
neighboring nodes receive distinct colors is too rigid, as a small amount of
interference is often tolerable, and may lead to solutions that need
drastically fewer frequencies. \DC\ allows one to model this tolerance through
the parameter $\D$. As a result the problem's complexity has been
well-investigated in graph topologies motivated by such applications, such as
unit-disk graphs and various classes of grids
\cite{AraujoBGHMM12,ArchettiBHCG15,Bang-JensenH15,BermondHHS10,GudmundssonMS16,HavetKS09}.
For more background we refer to \cite{frick1993survey,Kang08}.

In this paper we study \DC\ from the point of view of parameterized complexity
\cite{CyganFKLMPPS15,DowneyF13,FlumG06,Niedermeier06}.  The problem is of
course NP-hard, even for small values of $\C,\D$, as it generalizes
\textsc{Coloring}.  We are therefore strongly motivated to bring to bear the
powerful toolbox of structural graph parameters, such as treewidth, which have
proved extremely successful in tackling other intractable hard problems.
Indeed, \textsc{Coloring} is one of the success stories of this domain, since
the complexity of this flagship problem with respect to treewidth (and related
parameters pathwidth, feedback vertex set, vertex cover) is by now extremely
well-understood \cite{LokshtanovMS11a,JaffkeJ17}. We pose the natural question
of whether similar success can be achieved for \DC, or whether the addition of
$\D$ significantly alters the complexity behavior of the problem.  Such results
are not yet known for \DC, except for the fact that it was observed in
\cite{BelmonteLM17} that the problem admits (by standard techniques) a roughly
$(\C\D)^\tw$-time algorithm, where $\tw$ is the graph's treewidth. In
parameterized complexity terms, this shows that the problem is FPT
parameterized by $\tw+\D$. One of our main motivating questions  is whether
this running time can be improved qualitatively (is the problem FPT
parameterized only by $\tw$?) or quantitavely.

Our first result is to establish that the problem is W-hard not just for
treewidth, but also for several much more restricted structural graph
parameters, such as pathwidth, tree-depth, and feedback vertex set. We recall
that for \textsc{Coloring}, the standard $\C^\tw$ algorithm is FPT by $\tw$, as
graphs of bounded treewidth also have bounded chromatic number (Lemma
\ref{lem:basic}). Our result shows that the complexity of the problem changes
drastically with the addition of the new parameter $\D$, and it appears likely
that $\tw$ must appear in the exponent of $\D$ in the running time, even when
$\D$ is large.  More strongly, we establish this hardness even for the case
$\C=2$, which corresponds to the problem of partitioning a graph into two parts
so as to minimize their maximum degree.  This identifies \DC\ as another member
of a family of generalizations of \textsc{Coloring} (such as \textsc{Equitable
Coloring} or \textsc{List Coloring}) which are hard for treewidth
\cite{FellowsFLRSST11}.

As one might expect, the W-hardness results on \DC\ parameterized by treewidth
(or pathwidth, or tree-depth) easily carry over for values of $\C$ larger than
$2$.  Surprisingly, we show that this is \emph{not} the case for the parameter
feedback vertex set, for which the only W-hard case is $2$-coloring: we
establish with a simple win/win argument that the problem is FPT for any other
value of $\C$. We also show that if one considers sufficiently restricted
parameters, such as vertex cover, the problem does eventually become FPT.

Our second step is to enhance the W-hardness result mentioned above with the
aim of determining as precisely as possible the complexity of \DC\
parameterized by treewidth. Our reduction for tree-depth and feedback vertex
set is quadratic in the parameter, and hence implies that no algorithm can
solve the problem in time $n^{o(\sqrt{\tw})}$ under the Exponential Time
Hypothesis (ETH) \cite{IPZ01}.  We therefore present a second reduction, which
applies only to pathwidth and treewidth, but manages to show that no algorithm
can solve the problem in time $n^{o(\pw)}$ or $n^{o(\tw)}$ under the ETH. This
lower bound is tight, as it matches asymptotically the exponent given in the
algorithm of \cite{BelmonteLM17}.

To complement the above results, we also consider the problem from the point of
view of (parameterized) approximation. Here things become significantly better:
we give an algorithm using a technique of \cite{Lampis14} which for any $\C$
and error $\epsilon>0$ runs in time $(tw/\epsilon)^{O(\tw)}n^{O(1)}$ and
approximates the optimal value of $\D$ within a factor of $(1+\epsilon)$.
Hence, despite the problem's W-hardness, we produce a solution arbitrarily
close to optimal in FPT time.

Motivated by this algorithm we also consider the complementary approximation
problem: given $\D$ find a solution that comes as close to the minimum number
of colors needed as possible. By building on the approximation algorithm for
$\D$, we are able to present a $(\tw)^{O(\tw)}n^{O(1)}$ algorithm that achieves
a $2$-approximation for this problem. One can observe that this is not far from
optimal, since an FPT algorithm with approximation ratio better than $3/2$
would contradict the problem's W-hardness for $\C=2$. However, this simple
argument is unsatisfying, because it does not rule out algorithms with a ratio
significantly better than $3/2$, if one also allows a small additive error;
indeed, we observe that when parameterized by feedback vertex set the problem
admits an FPT algorithm that approximates the optimal $\C$ within an additive
error of just $1$.  To resolve this problem we present a gap-introducing
version of our reduction which, for any $i$ produces an instance for which the
optimal value of $\C$ is either $2i$, or at least $3i$.  In this way we show
that, when parameterized by tree-depth, pathwidth, or treewidth, approximating
the optimal value of $\C$ better than $3/2$ is ``truly'' hard, and this is not an
artifact of the problem's hardness for $2$-coloring.

\begin{table}

\centering

\begin{tabular}{|p{2cm}|p{3.5cm}|l|p{3.5cm}|l|} 
\hline
Parameter & Result (Exact solution) & Ref. & Result (Approximation) & Ref. \\
\hline
Feedback\hspace{0.5cm}  Vertex Set & W[1]-hard for $\C=2$ & Thm \ref{thm:fvs1} & $+1$-approximation in time $\fvs^{O(\fvs)}$ & Cor \ref{thm:fvs-approx}  \\
 & FPT for $\C\neq 2$ & Thm \ref{thm:alg-fvs}&& \\
\hline
Tree-depth & W[1]-hard for any $\C\ge2$ & Thm \ref{thm:fvs1} & W[1]-hard to color with $(3/2-\epsilon)\C+O(1)$ colors & Thm \ref{thm:apxhard} \\
\hline
Treewidth, Pathwidth & No $n^{o(\pw)}$ or $n^{o(\tw)}$ algorithm under ETH & Thm \ref{thm:eth} & $(1+\epsilon)$-approximation for $\D$  in $(\tw/\epsilon)^{O(\tw)}$& Thm \ref{thm:tw-approx1}\\
 & & & $2$-approximation for $\C$ in $\tw^{O(\tw)}$& Thm \ref{thm:tw-approx2}\\
\hline
Vertex Cover & $\vc^{O(\vc)}$ algorithm & Thm \ref{thm:alg-vc} && \\
\hline

\end{tabular}

\caption{Summary of results. Hardness results for tree-depth imply the same bounds for treewidth and pathwidth. Conversely, algorithms which apply to treewidth apply also to all other parameters.}

\end{table}

\section{Definitions and Preliminaries}

For a graph $G=(V,E)$ and two integers $\C\ge 1$, $\D\ge 0$, we say that $G$
admits a $(\C,\D)$-coloring if one can partition $V$ into $\C$ sets
such that the graph induced by each set has maximum degree at most $\D$. \DC\
is the problem of deciding, given $G,\C,\D$, whether $G$ admits a
$(\C,\D)$-coloring. For $\D=0$ this corresponds to
\textsc{Coloring}.

We assume the reader is familiar with basic notions in parameterized
complexity, such as the classes FPT and W[1]. For the relevant definitions we
refer to the standard textbooks
\cite{CyganFKLMPPS15,DowneyF13,FlumG06,Niedermeier06}.  We rely on a number of
well-known graph measures: treewidth \cite{BodlaenderK08}, pathwidth,
tree-depth  \cite{NesetrilM06}, feedback vertex set, and vertex cover, denoted
respectively as $\tw(G),\pw(G),\td(G),\fvs(G),\vc(G)$, where we drop $G$ if it
is clear from the context.

\begin{lemma}\label{lem:basic}

For any graph $G$ we have $\tw(G)-1 \le \fvs(G) \le \vc(G)$ and $\tw(G) \le
\pw(G) \le \td(G) -1 \le \vc(G)$. Furthermore, any graph $G$ admits a
$(\tw(G)+1,0)$-coloring, a $(\pw(G)+1,0)$-coloring, a $(\td(G),0)$-coloring,
and a $(\fvs(G)+2,0)$-coloring.

\end{lemma}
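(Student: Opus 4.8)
The plan is to treat the parameter inequalities and the coloring bounds separately, as each is a standard consequence of the definitions of the parameters involved.

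\textbf{The inequalities.} For the chain $\tw(G) \le \pw(G) \le \td(G)-1 \le \vc(G)$: the first inequality is immediate, since a path decomposition is in particular a tree decomposition. For $\pw(G) \le \td(G)-1$, I would take an elimination forest witnessing tree-depth $d=\td(G)$ (a rooted forest of height $d$ whose ancestor-closure contains all edges of $G$), perform a depth-first traversal of this forest, and use as the $i$-th bag the set of ancestors of the $i$-th visited vertex together with that vertex itself; each such bag has size at most $d$, every vertex and every edge is covered, and the bags containing any fixed vertex form a contiguous subpath, so this is a path decomposition of width $d-1$. For $\td(G)-1 \le \vc(G)$, given a vertex cover $S$ with $|S|=k$ I would build an elimination forest by arranging the vertices of $S$ into a single path (each the unique child of the previous) and attaching every vertex of $V\setminus S$ as a child of the bottom vertex of this path; since $V\setminus S$ is independent, all non-tree edges join a vertex of $V\setminus S$ to an ancestor in $S$, so this is a valid elimination forest of height $k+1$. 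For the other chain, $\fvs(G)\le\vc(G)$ holds because deleting a vertex cover leaves an independent (hence acyclic) set, and $\tw(G)-1\le\fvs(G)$ follows by taking a feedback vertex set $S$ with $|S|=k$, a width-$1$ tree decomposition of the forest $G-S$, and adding $S$ to every bag, which yields a tree decomposition of $G$ of width at most $k+1$.

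\textbf{The coloring bounds.} A graph of treewidth $w$ is $w$-degenerate: in any tree decomposition of width $w$, a leaf bag $B$ not contained in its parent bag contains a vertex $v$ all of whose neighbours lie in $B$, so $\deg(v)\le w$; deleting $v$ and recursing gives the degeneracy bound, and hence a greedy proper $(w+1)$-coloring. This yields the $(\tw(G)+1,0)$-coloring, and the $(\pw(G)+1,0)$-coloring then follows from $\tw(G)\le\pw(G)$. For the $(\td(G),0)$-coloring, colour each vertex by its depth in an elimination forest of height $\td(G)$; two adjacent vertices are in ancestor--descendant relation, hence lie at distinct depths, so this is a proper colouring using $\td(G)$ colours. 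For the $(\fvs(G)+2,0)$-coloring, assign $\fvs(G)$ private colours to the vertices of a minimum feedback vertex set $S$ and properly $2$-colour the forest $G-S$ with two fresh colours.

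\textbf{Main obstacle.} I do not expect a genuine difficulty here; the only step needing care is the construction of the path decomposition from an elimination forest, where one must verify the contiguity axiom along the DFS order, and one must be consistent with the $\pm 1$ offsets in the definitions of tree-depth versus path/tree-width throughout.
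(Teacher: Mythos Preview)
Your proposal is correct and follows essentially the same approach as the paper: both prove the parameter inequalities via the standard constructions (adding a feedback vertex set to all bags of a forest decomposition, building an elimination tree from a vertex cover path, etc.) and derive the coloring bounds from degeneracy, the elimination-forest structure, and two-coloring the forest left after removing a feedback vertex set. The only cosmetic difference is that for $\pw(G)\le\td(G)-1$ and for the $(\td(G),0)$-coloring you give direct constructions (DFS bags along the elimination forest, coloring by depth), whereas the paper phrases both via the recursive characterization $\td(G)=1+\min_v\td(G\setminus v)$; these are two presentations of the same argument.
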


The Exponential Time Hypothesis (ETH) states that 3-\textsc{SAT} on instances
with $n$ variables and $m$ clauses cannot be solved in time $2^{o(n+m)}$
\cite{IPZ01}. We define the \MCC\ problem as follows: we are given a graph
$G=(V,E)$, a partition of $V$ into $k$ independent sets $V_1,\ldots,V_k$, such
that for all $i\in\{1,\ldots,k\}$ we have $|V_i|=n$, and we are asked if $G$
contains a $k$-clique. It is well-known that this problem is W[1]-hard
parameterized by $k$, and that it does not admit any $n^{o(k)}$ algorithm,
unless the ETH is false \cite{CyganFKLMPPS15}.

\section{W-hardness for Feedback Vertex Set and Tree-depth}\label{sec:whard}

The main result of this section states that deciding if a graph admits a
$(2,\D)$-coloring, where $\D$ is part of the input, is W[1]-hard parameterized
by either $\fvs$ or $\td$. Because of standard relations between graph
parameters (Lemma \ref{lem:basic}), this implies also the same problem's
W-hardness for parameters $\pw$ and $\tw$. As might be expected, it is not hard
to extend our proof to give hardness for deciding if a $(\C,\D)$-coloring
exists, for any constant $\C$, parameterized by tree-depth (and hence, also
treewidth and pathwidth). What is perhaps more surprising is that this cannot
be done in the case of feedback vertex set. Superficially, the reason we cannot
extend the reduction in this case is that one of the gadgets we use in many
copies in our construction has large $\fvs$ if $\C>2$. However, we give a much
more convincing reason in Theorem \ref{thm:alg-fvs} of Section \ref{sec:exact}
where we show that \DC\ is FPT parameterized by $\fvs$ for $\C\ge 3$, and
therefore, if we could extend our reduction in this case it would prove that
FPT=W[1].

The main theorem of this section is stated below. We then present the reduction
in Sections \ref{sec:fvsred0}, \ref{sec:fvsred1}, and give the Lemmata that
imply Theorem \ref{thm:fvs1} in Section \ref{sec:fvsred2}.

\begin{theorem} \label{thm:fvs1}

Deciding if a graph $G$ admits a $(2,\D)$-coloring, where $\D$ is part of the
input, is W[1]-hard parameterized by $\fvs(G)$. Deciding if a graph $G$ admits
a $(\C,\D)$-coloring, where $\C\ge 2$ is any fixed constant and $\D$ is part of
the input is W[1]-hard parameterized by $\td(G)$.

\end{theorem}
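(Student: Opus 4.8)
The plan is to reduce from \MCC, encoding a $k$-clique instance so that a valid $(2,\D)$-coloring exists if and only if a $k$-clique exists, while keeping $\fvs(G)$ bounded by a function of $k$ only. The key idea is to realize the two color classes as a ``choice'' structure: for each partite set $V_i$ we want a gadget that, in any valid coloring, forces a canonical assignment encoding the selection of one vertex from $V_i$. The natural building block is a \emph{weighted edge} (or vertex) gadget: by attaching to a vertex $v$ a controlled number of pendant-like structures whose only feasible coloring pushes exactly $d_v$ of $v$'s monochromatic neighbors into its own class, we can make the ``slack'' $\D - (\text{current monochromatic degree})$ at $v$ act as an integer counter. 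Concretely, I would build $k$ ``selector'' vertices $s_1,\dots,s_k$ (these, plus a few global gadget hubs, will form the feedback vertex set), and for each $i$ a set of $n$ parallel paths/gadgets corresponding to the $n$ choices in $V_i$; the degree budget at $s_i$ forces all but one of these to be colored one way, thereby selecting a vertex $x_i \in V_i$.

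Next I would add \emph{verification} gadgets for edges of $G$: for each pair $i<j$ and each non-edge $\{u,w\}$ with $u\in V_i, w\in V_j$, I want a gadget that becomes ``overloaded'' (forces some vertex to exceed degree $\D$) precisely when $x_i = u$ and $x_j = w$ are selected simultaneously. This is the standard multicolored-clique pattern: selection of a clique means avoiding all forbidden pairs. The technical heart is implementing both the selectors and the verifiers using only trees plus a bounded number of special vertices, so that deleting $\{s_1,\dots,s_k\}$ together with $O(1)$ or $O(k)$ auxiliary hubs leaves a forest. Because paths and stars have feedback vertex set $0$, the arithmetic of degree budgets can be carried out along tree-like structures; the quadratic blow-up in the parameter (hence the $n^{o(\sqrt{\tw})}$ ETH bound mentioned in the introduction) comes from needing roughly $k$ selectors each interacting with roughly $k$ verifier hubs, i.e.\ $\Theta(k^2)$ vertices in the ``core.'' For the tree-depth statement with fixed $\C\ge 2$, I would replace the global binary choice by a local one: reuse the same selector/verifier logic but work with a constant-size clique-like gadget that forces $\C$ colors locally and leaves exactly one ``free'' color to propagate the choice; since we no longer need the two classes to be global, the construction can be made shallow (bounded tree-depth as a function of $k$), though it may use gadgets whose $\fvs$ is unbounded when $\C\ge 3$ --- which is exactly why this extension fails for feedback vertex set, consistent with Theorem \ref{thm:alg-fvs}.

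The main obstacle I anticipate is the precise design of the degree-counting gadgets so that (a) in the forward direction a clique yields a coloring with \emph{every} vertex at monochromatic degree at most $\D$, with no accidental slack that would let a non-clique also succeed, and (b) in the reverse direction any $(2,\D)$-coloring is forced into the canonical ``one vertex per class'' form --- i.e.\ ruling out ``cheating'' colorings that distribute load unevenly. This typically requires amplification gadgets (many parallel copies of a small forcing gadget attached to a hub) to make the budget tight, and careful bookkeeping to ensure these amplification copies contribute $0$ to $\fvs$. A secondary subtlety is keeping $\D$ polynomially bounded in $n+k$ (it must be, since $\D$ is given in unary-equivalent terms as part of the input and the reduction must be polynomial), which constrains how much ``padding'' each gadget may use. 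Once the gadgets are fixed, the equivalence proof and the $\fvs$/$\td$ bounds should follow by routine case analysis, which I would defer to the lemmata of Section \ref{sec:fvsred2}.
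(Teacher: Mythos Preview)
Your high-level plan---reduce from \MCC, encode one choice per colour class via degree budgets, verify pairwise compatibility, and keep a ``core'' of $O(k^2)$ vertices---matches the paper's architecture. But your two concrete design choices diverge from the paper, and one of them hides a real obstacle for the $\fvs$ bound.

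\textbf{Encoding and verification.} The paper does \emph{not} use a one-hot encoding (``all but one path coloured one way'') together with per-non-edge forbidding gadgets. It encodes the choice in $V_i$ as an \emph{integer}: among $2n$ choice vertices $c^i_1,\dots,c^i_{2n}$, the number receiving colour~$1$ among the first $n$ is the selected index. This count is read by transfer vertices $l_{i,j},h_{i,j}$ (these, plus $2k$ guards, form the $O(k^2)$ core). Verification is over \emph{edges}: a single checker $c_U$ of degree $|E|$ forces at least $\binom{k}{2}$ edge gadgets to be ``active'', and each active edge gadget checks, via the transfer vertices' tight budgets, that the two integers equal its endpoints.

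\textbf{Why your non-edge approach is risky for $\fvs$.} If a gadget for non-edge $(u,w)$ overloads exactly when both $c_{i,u}$ and $c_{j,w}$ are selected, the natural implementation connects the gadget vertex directly to $c_{i,u}$ and $c_{j,w}$. The resulting bipartite graph between choice vertices and non-edge gadgets then contains long cycles (e.g.\ $c_{i,u}\text{--}v_{u,w}\text{--}c_{j,w}\text{--}v_{u',w}\text{--}c_{i,u'}\text{--}v_{u',w'}\text{--}c_{j,w'}\text{--}v_{u,w'}\text{--}c_{i,u}$) lying entirely outside your $O(k)$ selectors. Routing instead through a single hub per pair $(i,j)$ restores a bounded core, but such a hub cannot distinguish \emph{which} $u,w$ were selected unless the signal it reads is an aggregate quantity---which is exactly why the paper's counting encoding is used. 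In other words, the counting trick is not a stylistic choice; it is what allows all verification traffic to be funnelled through $O(k^2)$ vertices so that deleting them leaves a forest (Lemma~\ref{lem:fvsred3}).

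\textbf{The missing primitive.} Either route needs a mechanism to pin designated vertices to a prescribed colour (your selectors and hubs; the paper's guards, transfer vertices, and padding). The paper supplies this explicitly as the Equality Gadget $Q(u_1,u_2,\C,\D)$ (Definition~\ref{def:eq}, Lemma~\ref{lem:eq}); for $\C=2$ it is an independent set of common neighbours, so deleting one endpoint leaves only leaves, which is what makes the $\fvs$ bound go through (Lemma~\ref{lem:eq2}). Your ``pendant-like structures'' and ``amplification gadgets'' gesture at this but do not build it. For the extension to fixed $\C\ge 3$, the paper does not change the selector logic; it simply attaches a Palette Gadget $P(p_A,p_B,v)$ (Definition~\ref{def:palette}) to every relevant vertex to force it onto one of two global colours---a gadget of bounded tree-depth but unbounded $\fvs$ when $\C\ge 3$, exactly as you anticipated.
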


\subsection{Basic Gadgets}\label{sec:fvsred0}

Before we proceed, we present some basic gadgets that will be useful in all the
reductions of this paper (Theorems
\ref{thm:fvs1}, \ref{thm:eth}, \ref{thm:apxhard}).  We first define a building
block $\T(i,j)$ which is a graph that can be properly colored with $i$ colors,
but admits no $(i-1,j)$-coloring (similar constructions appears in \cite{HavetS06}).  
We then use this graph to build two gadgets:
the Equality Gadget and the Palette Gadget (Definitions \ref{def:eq} and
\ref{def:palette}).  Informally, for given $\C,\D$, the equality gadget allows
us to express the constraint that two vertices $v_1,v_2$ of a graph must
receive the same color in any valid $(\C,\D)$-coloring. The palette gadget will
be used to express the constraint that, among three vertices $v_1,v_2,v_3$,
there must exist two with the same color. For both gadgets we first prove
formally that they express these constraints (Lemmata \ref{lem:eq} and
\ref{lem:palette}). We then show that, under certain conditions, these gadgets
can be added to any graph without significantly increasing its  
tree-depth or feedback vertex set (Lemmata \ref{lem:eq2} and
\ref{lem:palette2}).

\begin{definition}

Given two integers $i>0, j\ge 0$, we define the graph $\T(i,j)$ recursively as
follows: $T(1,j) = K_1$ for all $j$; for $i>1$, $T(i,j)$ is the graph obtained
by taking $(j+1)$ disjoint copies of $T(i-1,j)$ and adding to the graph a new
universal vertex.

\end{definition}

\begin{lemma}\label{lem:T}

For all $i>0, j\ge 0$ we have: $\T(i,j)$ admits an $(i,0)$-coloring; $\T(i,j)$
does not admit an $(i-1,j)$-coloring; $\td(\T(i,j)) = \pw(\T(i,j)) +1 =
\tw(\T(i,j))+1 = i$.

\end{lemma}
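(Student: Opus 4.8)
The plan is to establish the three assertions separately, each by induction on $i$, exploiting the recursive description of $\T(i,j)$ as $(j+1)$ disjoint copies of $\T(i-1,j)$ together with one extra universal vertex, which I will call $u$.

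For the first assertion I would show that $\T(i,j)$ has a proper $i$-coloring. The base case $\T(1,j)=K_1$ is immediate. For $i>1$, by induction each of the $j+1$ copies of $\T(i-1,j)$ admits a proper $(i-1)$-coloring; colour all copies using the same fixed set of $i-1$ colours and give $u$ a fresh $i$-th colour. Since every neighbour of $u$ lies in one of the copies and no copy uses colour $i$, this is a proper $i$-coloring, i.e.\ an $(i,0)$-coloring.

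For the second assertion I would prove by induction that $\T(i,j)$ has no $(i-1,j)$-coloring. For $i=1$ this says $K_1$ has no $(0,j)$-coloring, which holds since a one-vertex graph cannot be partitioned into zero classes. For $i>1$, suppose towards a contradiction that $\T(i,j)$ admits an $(i-1,j)$-coloring and let $c$ be the colour assigned to $u$. Every other vertex is a neighbour of $u$, so in total at most $j$ vertices across all $j+1$ copies of $\T(i-1,j)$ can receive colour $c$, as otherwise $u$ would have more than $j$ neighbours in its own colour class. Hence, by pigeonhole, at least one copy of $\T(i-1,j)$ contains no vertex of colour $c$, and restricting the colouring to that copy uses only the remaining $i-2$ colours, giving an $(i-2,j)$-coloring of $\T(i-1,j)$ --- contradicting the induction hypothesis. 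I expect this pigeonhole step to be the only genuinely nonroutine point of the lemma; it is exactly the reason the construction uses $j+1$ copies rather than fewer.

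For the third assertion, the lower bounds are essentially free: by the second assertion (applied with any $j\ge 0$) $\T(i,j)$ admits no $(i-1,0)$-coloring, so its chromatic number is at least $i$, and Lemma~\ref{lem:basic} then gives $\td(\T(i,j))\ge i$ and $\tw(\T(i,j))+1\ge i$ (alternatively one may observe directly that $\T(i,j)$ contains a clique on $i$ vertices, formed by the universal vertices along one branch of the recursion). For the matching upper bound I would again induct: since $u$ is universal, deleting it leaves a disjoint union of copies of $\T(i-1,j)$, each of tree-depth at most $i-1$ by induction, so placing $u$ as the root above optimal elimination forests of these components yields $\td(\T(i,j)) \le 1 + \td(\T(i-1,j)) \le i$ (using that the tree-depth of a disjoint union is the maximum over its components). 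Finally, Lemma~\ref{lem:basic} gives $\tw(\T(i,j)) \le \pw(\T(i,j)) \le \td(\T(i,j)) - 1 \le i-1$, and combining this with the lower bounds forces $\tw(\T(i,j)) = \pw(\T(i,j)) = i-1$ and $\td(\T(i,j)) = i$, which is precisely the claimed chain of equalities. Everything beyond the pigeonhole argument is routine bookkeeping on the recursion.
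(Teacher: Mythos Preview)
Your proof is correct and follows essentially the same approach as the paper: induction on $i$ exploiting the recursive structure, with the clique lower bound and universal-vertex deletion for the width claims. The only cosmetic differences are that the paper derives the $(i,0)$-colorability from the tree-depth bound via Lemma~\ref{lem:basic} rather than by a separate induction, and phrases the pigeonhole for the second assertion dually (arguing each copy must use all $i-1$ colours, so every colour appears $j+1$ times and the universal vertex cannot be coloured), but the content is the same.
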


\begin{definition}\label{def:eq}(Equality Gadget) For $i\ge 2, j\ge0$, we
define the graph $Q(u_1,u_2,i,j)$ as follows: $Q$ contains $ij+1$ disjoint
copies of $\T(i-1,j)$ as well as two vertices $u_1,u_2$ which are connected to
all vertices except each other.  \end{definition}

\begin{lemma}\label{lem:eq} Let $G=(V,E)$ be a graph with $v_1,v_2\in V$ and
let $G'$ be the graph obtained from $G$ by adding to it a copy of
$Q(u_1,u_2,\C,\D)$ and identifying $u_1$ with $v_1$ and $u_2$ with $v_2$. Then,
any $(\C,\D)$-coloring of $G'$ must give the same color to $v_1,v_2$.
Furthermore, if there exists a $(\C,\D)$-coloring of $G$ that gives the same
color to $v_1,v_2$, this coloring can be extended to a $(\C,\D)$-coloring of
$G'$.  \end{lemma}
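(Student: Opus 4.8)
The plan is to prove the two assertions of Lemma~\ref{lem:eq} separately, using only the properties of $\T(\C-1,\D)$ from Lemma~\ref{lem:T}. The structural fact I will rely on throughout is that in $G'$ the only edges incident to the vertex set of a copy of $\T(\C-1,\D)$ inside $Q$ are the internal edges of that copy together with edges to $u_1=v_1$ and $u_2=v_2$; in particular, distinct copies are non-adjacent, and the subgraph of $G'$ induced on the vertex set of any one copy is exactly $\T(\C-1,\D)$. Hence the restriction to a copy of any $(\C,\D)$-coloring of $G'$ is itself a $(\C,\D)$-coloring of $\T(\C-1,\D)$.

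For the first (forcing) direction, I argue by contradiction: suppose a $(\C,\D)$-coloring of $G'$ assigns distinct colors to $u_1$ and $u_2$, and rename the colors so that $u_1$ gets color $1$ and $u_2$ gets color $2$. I first claim every one of the $\C\D+1$ copies of $\T(\C-1,\D)$ contains a vertex colored $1$ or $2$: otherwise the coloring restricted to that copy uses only the $\C-2$ colors $\{3,\dots,\C\}$, and since each color class there still induces maximum degree at most $\D$, this would be a $(\C-2,\D)$-coloring of $\T(\C-1,\D)$, which Lemma~\ref{lem:T} forbids. Since the copies are vertex-disjoint, at least $\C\D+1$ copy-vertices are colored $1$ or $2$. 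But every copy-vertex is adjacent to both $u_1$ and $u_2$, so each copy-vertex colored $1$ lies in $u_1$'s color class and each colored $2$ lies in $u_2$'s; as each of $u_1,u_2$ has at most $\D$ neighbours in its own class, at most $2\D$ copy-vertices are colored $1$ or $2$. Since $\C\ge 2$ gives $\C\D+1 \ge 2\D+1 > 2\D$, this is a contradiction, so $v_1$ and $v_2$ get the same color.

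For the second (extension) direction, take a $(\C,\D)$-coloring of $G$ in which $v_1,v_2$ share a color, say color $1$. By Lemma~\ref{lem:T} the graph $\T(\C-1,\D)$ admits an $(\C-1,0)$-coloring, i.e.\ a proper coloring with $\C-1$ colors; color every copy in $Q$ in this way using the palette $\{2,\dots,\C\}$, so that color $1$ is never used inside $Q$. I then verify this is a valid $(\C,\D)$-coloring of $G'$: within each copy every class induces maximum degree $0\le\D$; there are no edges between distinct copies; the edges joining copy-vertices to $u_1$ and $u_2$ are never monochromatic, since the copy endpoints avoid color $1$, so the class-degrees of $u_1$ and $u_2$ are exactly what they were in $G$ and hence at most $\D$; and all remaining vertices keep their colors from $G$.

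The extension direction is essentially bookkeeping once one decides to avoid color $1$ inside $Q$. The real content is in the forcing direction, and the crux there is the counting step: the number $\C\D+1$ of copies of $\T(\C-1,\D)$ is chosen precisely so that the forced supply of color-$1$-or-$2$ vertices exceeds the combined degree budget $2\D$ of $u_1$ and $u_2$. I expect this to be the only place where any care is needed; everything else follows formally from Lemma~\ref{lem:T}.
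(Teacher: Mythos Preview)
Your proof is correct. The extension direction matches the paper. For the forcing direction you take a genuinely different route.

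The paper does not argue by contradiction on the colors of $v_1,v_2$. Instead it observes that each copy of $\T(\C-1,\D)$ must use at least $\C-1$ distinct colors (else it would be $(\C-2,\D)$-colored, contradicting Lemma~\ref{lem:T}), and hence ``contains'' some $(\C-1)$-element subset of $\{1,\dots,\C\}$. There are $\binom{\C}{\C-1}=\C$ such subsets and $\C\D+1$ copies, so by pigeonhole some subset $C$ is contained in at least $\lceil(\C\D+1)/\C\rceil=\D+1$ copies; every color in $C$ therefore appears on at least $\D+1$ gadget vertices, so neither $v_1$ nor $v_2$ may use any color of $C$, leaving both of them the single remaining color.

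Your argument instead fixes the two hypothetically distinct colors of $v_1,v_2$ and counts copy-vertices forced to carry one of those two colors, obtaining the inequality $\C\D+1>2\D$, which suffices since $\C\ge 2$. This is arguably more elementary for the equality gadget in isolation. The advantage of the paper's pigeonhole formulation is that it transfers verbatim to the palette gadget (Lemma~\ref{lem:palette}): there one pigeonholes over $(\C-2)$-subsets among $\binom{\C}{2}\D+1$ copies of $\T(\C-2,\D)$ to leave only two colors available for three endpoints, with no change in the argument's structure. A contradiction argument in your style would need to be reworked for that case.
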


\begin{lemma}\label{lem:eq2} Let $G=(V,E)$ be a graph, $S\subseteq V$, and $G'$
be a graph obtained from $G$ by repeated applications of the following
operation: we select two vertices $v_1,v_2\in V$ such that $v_1\in S$, add a
new copy of $Q(u_1,u_2,\C,\D)$ and identify $u_i$ with $v_i$, for
$i\in\{1,2\}$.  Then $\td(G') \le \td(G\setminus S) + |S|+\C-1$.
Furthermore, if $\C=2$ we have $\fvs(G') \le \fvs(G\setminus S) + |S|$.
\end{lemma}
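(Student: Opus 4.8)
The plan is to reduce both inequalities to the elementary fact that deleting a vertex set $X$ from a graph $H$ lowers neither its tree-depth nor its feedback vertex number by more than $|X|$: one obtains an elimination forest of $H$ by stacking the vertices of $X$ into a path placed above an optimal elimination forest of $H\setminus X$, and a feedback vertex set of $H$ by adjoining $X$ to one of $H\setminus X$. Taking $H=G'$ and $X=S$, it therefore suffices to prove $\td(G'\setminus S)\le \td(G\setminus S)+\C-1$ and, in the case $\C=2$, $\fvs(G'\setminus S)\le \fvs(G\setminus S)$.

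Next I would pin down the shape of $G'\setminus S$. In each application of the operation the two endpoints $v_1,v_2$ are vertices of the original graph $G$, and $v_1\in S$; hence deleting $S$ destroys the vertex $u_1=v_1$ of every copy of $Q(u_1,u_2,\C,\D)$. What survives of such a copy is just its $\C\D+1$ pairwise-disjoint copies of $\T(\C-1,\D)$, each completely joined to the single vertex $v_2$ when $v_2\notin S$, or left as isolated components when $v_2\in S$. Thus $G'\setminus S$ is obtained from $G\setminus S$ by repeatedly taking a vertex $w\in V\setminus S$ and hanging off it some copies of $\T(\C-1,\D)$ whose only neighbour outside themselves is $w$ (plus possibly some free-floating copies of $\T(\C-1,\D)$).

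For the tree-depth bound I would start from an optimal elimination forest $F$ of $G\setminus S$ and, for every vertex $w\in V\setminus S$ that carries gadget material, append below $w$ — as extra subtrees rooted at children of $w$ — an optimal elimination tree of each attached copy of $\T(\C-1,\D)$; copies coming from operations with $v_2\in S$ are added as brand-new trees of the forest. This is a valid elimination forest of $G'\setminus S$: edges of $G\setminus S$ are covered by $F$, edges inside each copy of $\T(\C-1,\D)$ by its own elimination tree, and every edge joining such a copy to $w$ is an ancestor--descendant pair since $w$ lies above the whole appended subtree. By Lemma~\ref{lem:T} we have $\td(\T(\C-1,\D))=\C-1$, so each appended subtree has depth $\C-1$; a vertex placed below a node $w$ at depth $d\le\td(G\setminus S)$ ends up at depth at most $d+\C-1\le\td(G\setminus S)+\C-1$. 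Hence $\td(G'\setminus S)\le\td(G\setminus S)+\C-1$, which together with the first paragraph yields $\td(G')\le\td(G\setminus S)+|S|+\C-1$.

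For the feedback vertex set bound I would use that $\C=2$ forces $\T(\C-1,\D)=\T(1,\D)=K_1$, so after deleting $S$ all remaining gadget vertices have degree at most $1$ (a pendant neighbour of some $w$, or an isolated vertex). Adding degree-$\le 1$ vertices to a graph never creates a cycle, so any feedback vertex set $X$ of $G\setminus S$ is still one of $G'\setminus S$, giving $\fvs(G'\setminus S)\le\fvs(G\setminus S)$ and hence $\fvs(G')\le\fvs(G\setminus S)+|S|$. I expect the only genuinely delicate point to be the tree-depth accounting of the third paragraph: one must append the $\T(\C-1,\D)$-subtrees directly under the already-placed vertex $w$, not under a fresh private copy of it, so that the overhead stays $\C-1$ rather than $\C$ — and this is exactly what the hypotheses $v_1\in S$ and $v_1,v_2\in V(G)$ make possible.
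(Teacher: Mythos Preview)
Your proof is correct and follows essentially the same approach as the paper: reduce to bounding $\td(G'\setminus S)$ and $\fvs(G'\setminus S)$ by noting that deleting $S$ removes one endpoint of every gadget, then graft an elimination tree of each surviving copy of $\T(\C-1,\D)$ (depth $\C-1$ by Lemma~\ref{lem:T}) below the remaining endpoint in an optimal elimination forest of $G\setminus S$; for $\C=2$ observe that the surviving gadget vertices are pendants. Your treatment is in fact slightly more careful than the paper's sketch, since you explicitly handle the case $v_2\in S$ and verify that all edges from a $\T(\C-1,\D)$ copy to $w$ become ancestor--descendant pairs.
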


\begin{definition}\label{def:palette}(Palette Gadget) For $i\ge 3, j\ge 0$ we
define the graph $P(u_1,u_2,u_3,i,j)$ as follows: $P$ contains ${i\choose
2}j+1$ copies of $\T(i-2,j)$, as well as three vertices $u_1,u_2,u_3$ which are
connected to every vertex of $P$ except each other.  \end{definition}

\begin{lemma}\label{lem:palette} Let $G=(V,E)$ be a graph with $v_1,v_2,v_3\in
V$ and let $G'$ be the graph obtained from $G$ by adding to it a copy of
$P(u_1,u_2,u_3,\C,\D)$ and identifying $u_i$ with $v_i$ for $i\in\{1,2,3\}$.
Then, in any $(\C,\D)$-coloring of $G'$ at least two of the vertices of
$\{v_1,v_2,v_3\}$ must share a color. Furthermore, if there exists a
$(\C,\D)$-coloring of $G$ that gives the same color to two of the vertices of
$\{v_1,v_2,v_3\}$, this coloring can be extended to a $(\C,\D)$-coloring of
$G'$.  \end{lemma}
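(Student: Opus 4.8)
The plan is to prove the two implications of the lemma separately; both hinge on two small facts about the $\binom{\C}{2}\D+1$ copies of $\T(\C-2,\D)$ that make up the gadget.

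First I would handle necessity. Assume $G'$ has a $(\C,\D)$-coloring and, towards a contradiction, that $v_1,v_2,v_3$ (equivalently $u_1,u_2,u_3$ after the identification) receive three pairwise distinct colors; renaming colors, say these are $1,2,3$. The first fact is that every copy of $\T(\C-2,\D)$ in the gadget must contain a vertex colored $1$, $2$, or $3$: otherwise that copy is colored using only the $\C-3$ colors $\{4,\dots,\C\}$, and since the defect of any vertex inside the copy is at most its defect in $G'$, and hence at most $\D$, the restriction of the coloring to the copy would be a $(\C-3,\D)$-coloring of $\T(\C-2,\D)$, contradicting Lemma~\ref{lem:T}. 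The second fact is a counting bound: $u_1$ is adjacent to every vertex of every copy and has at most $\D$ same-colored neighbours, so at most $\D$ vertices of the gadget are colored $1$, and likewise at most $\D$ are colored $2$ and at most $\D$ are colored $3$. Since the copies are pairwise vertex-disjoint and each contributes at least one vertex colored from $\{1,2,3\}$, this gives $\binom{\C}{2}\D+1 \le 3\D$, which is impossible since $\C\ge 3$ forces $\binom{\C}{2}\ge 3$. (The same count also settles the degenerate case $\D=0$, where the single copy would have to be properly colored with $\C-3$ colors.)

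For sufficiency I would start from a $(\C,\D)$-coloring of $G$ in which two of $v_1,v_2,v_3$, say $v_1$ and $v_2$, get the same color. Then $u_1,u_2,u_3$ together use at most two colors, so at least $\C-2$ colors are used by none of them; I would color each copy of $\T(\C-2,\D)$ with a $(\C-2,0)$-coloring, which exists by Lemma~\ref{lem:T}, using only such avoided colors. Checking validity is then routine: each copy vertex is given a proper (defect-$0$) color and has no same-colored neighbour outside its own copy either, since its only outside neighbours $u_1,u_2,u_3$ all avoid its color; and each $u_m$ gains no new same-colored neighbour from the copies, so its defect stays at most $\D$.

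I expect the necessity direction to be the only real obstacle, and within it the crux is to combine the two facts correctly: banning all three \emph{palette} colors from even one copy already contradicts Lemma~\ref{lem:T}, yet the combined defect budget of $u_1,u_2,u_3$ is only $3\D$, far too small to let each of the $\binom{\C}{2}\D+1>3\D$ disjoint copies afford such a vertex. The one bookkeeping subtlety worth spelling out is that a vertex's defect inside a copy is bounded by its defect in the whole graph, which is exactly what allows the $(\C-3,\D)$-non-colorability of $\T(\C-2,\D)$ to be invoked. The sufficiency direction is essentially immediate once one observes that two of the $v_i$ agreeing leaves at least $\C-2$ colors free for the copies.
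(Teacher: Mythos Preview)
Your proof is correct, but the necessity direction follows a genuinely different route from the paper's. The paper does not argue by contradiction from the assumption that $v_1,v_2,v_3$ receive three distinct colors. Instead, it pigeonholes on $(\C-2)$-element color subsets: by Lemma~\ref{lem:T} each of the $\binom{\C}{2}\D+1$ copies of $\T(\C-2,\D)$ must realise at least one such subset, and since there are only $\binom{\C}{2}$ of them, some fixed subset $C$ occurs in at least $\D+1$ copies. Every color in $C$ then appears $\D+1$ times among the common neighbours of $v_1,v_2,v_3$, so none of the $v_i$ can use a color from $C$; with only the two remaining colors available, two of the $v_i$ must coincide. Your argument instead fixes the three hypothetical palette colors $1,2,3$, observes that a copy avoiding all three would be $(\C-3,\D)$-colored (impossible by Lemma~\ref{lem:T}), and then bounds the total number of gadget vertices colored $1,2,$ or $3$ by $3\D$ via the defect constraints on $u_1,u_2,u_3$, contradicting $\binom{\C}{2}\D+1>3\D$. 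Your counting is arguably more direct and avoids the subset pigeonhole; the paper's version has the advantage of being a direct (non-contradiction) proof that mirrors the equality-gadget argument verbatim, and it explicitly exhibits the set of $\C-2$ forbidden colors rather than just ruling out the bad case. The sufficiency direction is essentially identical in both proofs.
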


\begin{lemma}\label{lem:palette2} Let $G=(V,E)$ be a graph, $S\subseteq V$, and
$G'$ be a graph obtained from $G$ by repeated applications of the following
operation: we select three vertices $v_1,v_2,v_3\in V$ such that $v_1,v_2\in
S$, add a new copy of $P(u_1,u_2,u_3,\C,\D)$ and identify $u_i$ with $v_i$, for
$i\in\{1,2,3\}$.  Then $\td(G') \le \td(G\setminus S) + |S|+\C-2$.
\end{lemma}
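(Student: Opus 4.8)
The plan is to construct an elimination forest of $G'$ of the stated height directly. I would use the standard fact that for any graph $H$ and any $W\subseteq V(H)$ we have $\td(H)\le |W|+\td(H\setminus W)$: place the vertices of $W$ as a single path of ancestors on top of an optimal elimination forest of $H\setminus W$; every edge of $H$ is then either internal to $H\setminus W$, and handled by the sub-forest, or incident to $W$, and handled because $W$ sits above everything. Taking $H=G'$ and $W=S$, it suffices to prove that $\td(G'\setminus S)\le \td(G\setminus S)+\C-2$.

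To this end I would first describe $G'\setminus S$ explicitly. In each glued copy of $P(u_1,u_2,u_3,\C,\D)$, the vertices $u_1,u_2$ were identified with $v_1,v_2\in S$ and are therefore deleted; what remains of that copy is its collection of pairwise disjoint copies of $\T(\C-2,\D)$, each completely joined to $v_3$ and otherwise having no neighbour outside itself in $G'$. If $v_3\in S$ these copies become isolated components of $G'\setminus S$; if $v_3\notin S$ they hang off the vertex $v_3$ of $G\setminus S$. The key observation is that a vertex outside $S$ can only ever play the role of a $v_3$ (never a $v_1$ or $v_2$), so $G'\setminus S$ is precisely $G\setminus S$ together with, for each vertex $w$ of $G\setminus S$, a (possibly empty) family of mutually non-adjacent copies of $\T(\C-2,\D)$ each completely joined to $w$, plus some further isolated copies of $\T(\C-2,\D)$.

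Finally I would assemble the forest. Start with an optimal elimination forest $F$ of $G\setminus S$, of height $\td(G\setminus S)$, and for every vertex $w$ and every copy $K\cong\T(\C-2,\D)$ attached to $w$, graft an optimal elimination tree of $K$ below $w$ by making its root a child of $w$; by Lemma~\ref{lem:T} such a tree has height $\C-2$, and $K$ is connected, so one tree per copy suffices. The isolated copies are added as separate components. This is a valid elimination forest of $G'\setminus S$, since the only neighbours of $K$ outside $K$ are $w$ (an ancestor of all of $K$) and vertices of $S$ (which have been deleted), and since the copies grafted below a common $w$ are pairwise non-adjacent and hence do not interfere. Its height is at most $\mathrm{depth}(w)+(\C-2)\le \td(G\setminus S)+\C-2$, which gives the lemma. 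I do not expect a genuine obstacle here: this is essentially the argument for the equality gadget (Lemma~\ref{lem:eq2}), with $\T(\C-2,\D)$ in place of $\T(\C-1,\D)$, which is exactly why the additive term is $\C-2$ rather than $\C-1$; the only point that needs a little care is the bookkeeping that no vertex outside $S$ is ever forced to play two incompatible roles across the different grafted gadget copies.
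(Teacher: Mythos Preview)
Your proposal is correct and follows essentially the same route as the paper: first reduce to bounding $\td(G'\setminus S)$ via $\td(G')\le |S|+\td(G'\setminus S)$, then graft height-$(\C-2)$ elimination trees of the $\T(\C-2,\D)$ copies below the appropriate vertices of an optimal elimination forest of $G\setminus S$. The paper's proof is a one-sentence pointer to the analogous argument for Lemma~\ref{lem:eq2}; your version spells out the structure of $G'\setminus S$ and the grafting more carefully, but the underlying argument is identical.
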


\subsection{Construction}\label{sec:fvsred1}

We are now ready to present a reduction from \MCC. In this section we describe
a construction which, given an instance of this problem $(G,k)$ as well as an
integer $\C\ge 2$ produces an instance of \DC. Recall that we assume that in
the initial instance $G=(V,E)$ is given to us partitioned into $k$ independent
sets $V_1,\ldots, V_k$, all of which have size $n$.  We will produce a graph
$H(G,k,\C)$ and an integer $\D$ with the property that $H$ admits a
$(\C,\D)$-coloring if and only if $G$ has a $k$-clique. In the next section we
prove the correctness of the construction and give bounds on the values of
$\td(H)$ and $\fvs(H)$ to establish Theorem \ref{thm:fvs1}.

In our new instance we set $\D=|E|-{k\choose 2}$. Let us now describe the graph
$H$. Since we will repeatedly use the gadgets from Definitions \ref{def:eq} and
\ref{def:palette}, we will use the following convention: whenever $v_1,v_2$ are
two vertices we have already introduced to $H$, when we say that we add an
equality gadget $Q(v_1,v_2)$, this means that we add to $H$ a copy of
$Q(u_1,u_2,\C,\D)$ and then identify $u_1,u_2$ with $v_1,v_2$ respectively
(similarly for palette gadgets).  To ease presentation we will gradually build
the graph by describing its different conceptual parts. 

\smallskip

\noindent\textbf{Palette Part}: Informally, the goal of this part is to obtain
two vertices ($p_A,p_B$) which are guaranteed to have different colors. This
part contains the following:

\begin{enumerate}

\item \label{it1} Two vertices called $p_A, p_B$ which we will call the main
palette vertices.

\item \label{it2} For all $i\in \{1,\ldots,\D\}$, $j\in \{A,B\}$ a vertex
$p^i_j$.

\item \label{it3} For all $i\in \{1,\ldots,\D\}$, $j\in \{A,B\}$ we add an
equality gadget $Q(p_j,p^i_j)$.

\item \label{it4} An edge between $p_A, p_B$.

\item \label{it5} For all $i\in \{1,\ldots,\D\}$, $j\in \{A,B\}$ an edge from
$p_j$ to $p^i_j$.

\setcounter{count:constr}{\value{enumi}}

\end{enumerate}

\smallskip

\noindent\textbf{Choice Part}: Informally, the goal of this part is to encode a
choice of a vertex in each $V_i$. To this end we make $2n$ choice vertices for
each color class of the original instance. The selection will be encoded by
counting how many of the first $n$ of these vertices have the same color as
$p_A$. Formally, this part contains the following:

\begin{enumerate}

\setcounter{enumi}{\value{count:constr}}

\item \label{it6} For all $i\in \{1,\ldots, k\}$, $j\in \{1,\ldots,2n\}$ the
vertex $c^i_j$.  We call these the choice vertices.

\item \label{it7} For all $i\in \{1,\ldots, k\}$, $j\in \{A,B\}$ the vertex
$g^i_j$. We call these the guard vertices.

\item \label{it8} For all $i\in \{1,\ldots, k\}$, $j\in \{1,\ldots, 2n\}$ edges
between $c^i_j$ and the vertices $g^i_A$ and $g^i_B$.

\item \label{it9} For all $i\in \{1,\ldots, k\}$, $j\in \{A,B\}$ we add an
equality gadget $Q(p_j,g^i_j)$.

\item \label{it10} If $\C\ge 3$, for all $i\in \{1,\ldots, k\}$, $j\in
\{1,\ldots, 2n\}$ we add a palette gadget $P(p_A,p_B,c^i_j)$.

\setcounter{count:constr}{\value{enumi}}

\end{enumerate}

\smallskip

\noindent\textbf{Transfer Part}: Informally, the goal of this part is to
transfer the choices of the previous part to the rest of the graph. For each
color class of the original instance we make $(k-1)$ ``low'' transfer vertices,
whose deficiency will equal the choice made in the previous part, and $(k-1)$
``high'' transfer vertices, whose deficiency will equal the complement of the
same value. Formally, this part of $H$ contains the following:

\begin{enumerate}

\setcounter{enumi}{\value{count:constr}}

\item  \label{it11} For $i,j\in \{1,\ldots,k\}$, $i\neq j$ the vertex $h_{i,j}$
and the vertex $l_{i,j}$.  We call these the high and low transfer vertices. 

\item  \label{it12} For $i,j\in \{1,\ldots,k\}$, $i\neq j$ and for all
$l\in\{1,\ldots,n\}$ an edge from $l_{i,j}$ to $c^i_l$.

\item  \label{it13} For $i,j\in \{1,\ldots,k\}$, $i\neq j$ and for all
$l\in\{n+1,\ldots,2n\}$ an edge from $h_{i,j}$ to $c^i_l$.

\item  \label{it14} For all $i,j\in \{1,\ldots, k\}$, $i\neq j$ we add an
equality gadget $Q(p_A,l_{i,j})$ and an equality gadget $Q(p_A,h_{i,j})$.

\setcounter{count:constr}{\value{enumi}}

\end{enumerate}

\smallskip

\noindent \textbf{Edge representation}: Informally, this part contains a gadget
representing each edge of $G$. Each gadget will contain a special vertex which
will be able to receive the color of $p_B$ if and only if the corresponding
edge is part of the clique. Formally, we assume that all the vertices of each
$V_i$ are numbered $\{1,\ldots,n\}$. For each edge $e$ of $G$, if $e$ connects
the vertex with index $i_1$ from $V_{j_1}$ with the vertex with index $i_2$
from $V_{j_2}$ (assuming without loss of generality $j_1<j_2$) we add the
following vertices and edges to $H$:

\begin{enumerate}

\setcounter{enumi}{\value{count:constr}}

\item  \label{it15} Four independent sets $L^1_e, H^1_e, L^2_e, H^2_e$ with
respective sizes $n-i_1$, $i_1$, $n-i_2$, $i_2$. 

\item  \label{it16} Edges connecting the vertex $l_{j_1,j_2}$ (respectively,
$h_{j_1,j_2},l_{j_2,j_1},h_{j_2,j_1}$) with all vertices of the set $L^1_e$
(respectively the sets $H^1_e, L^2_e, H^2_e$).

\item  \label{it17} A vertex $c_e$, connected to all vertices in $L^1_e\cup
H^1_e \cup L^2_e \cup H^2_e$.

\item  \label{it18} If $\C\ge 3$, for each $v\in L^1_e\cup H^1_e\cup L^2_e\cup
H^2_e\cup \{c_e\}$ we add a palette gadget $P(p_A,p_B,v)$.

\setcounter{count:constr}{\value{enumi}}

\end{enumerate}

Finally, once we have added a gadget (as described above) for each $e\in E$, we
add the following structure to $H$ in order to ensure that we have a sufficient
number of edges included in our clique:

\begin{enumerate}

\setcounter{enumi}{\value{count:constr}}

\item  \label{it19} A vertex $c_U$ (universal checker) connected to all $c_e$
for $e\in E$.

\item  \label{it20} An equality gadget $Q(p_A,c_U)$.

\setcounter{count:constr}{\value{enumi}}

\end{enumerate} 

\smallskip 

\noindent \textbf{Budget-Setting}: Our construction is now almost done, except
for the fact that some crucial vertices have degree significantly lower than
$\D$ (and hence are always trivially colorable). To fix this, we will
effectively lower their deficiency budget by giving them some extra neighbors.
Formally, we add the following:

\begin{enumerate}

\setcounter{enumi}{\value{count:constr}}

\item  \label{it21} For each guard vertex $g^i_j$, with $j\in \{A,B\}$, we
construct an independent set $G^i_j$ of size $\D-n$ and connect it to $g^i_j$. For
each $v\in G^i_j$ we add an equality gadget $Q(p_j,v)$.

\item  \label{it22} For each transfer vertex $l_{i,j}$ (respectively
$h_{i,j}$), we construct an independent set of size $\D-n$ and connect all its
vertices to $l_{i,j}$ (or respectively to $h_{i,j}$). For each vertex $v$ of
this independent set we add an equality gadget $Q(p_A,v)$.

\item  \label{it24} For each vertex $c_e$ we add an independent set of size
$\D$ and connect all its vertices to $c_e$. For each vertex $v$ of this
independent set we add an equality gadget $Q(p_B,v)$.

\setcounter{count:constr}{\value{enumi}}

\end{enumerate}

This completes the construction of the graph $H$.

\subsection{Correctness}\label{sec:fvsred2}

To establish Theorem \ref{thm:fvs1} we need to establish three properties of
the graph $H(G,k,\C)$ described in the preceding section: that the existence of
a $k$-clique in $G$ implies that $H$ admits a $(\C,\D)$-coloring; that a
$(\C,\D)$-coloring of $H$ implies the existence of a $k$-clique in $G$; and
that the tree-depth and feedback vertex set of $G$ are bounded by some function
of $k$. These are established in the Lemmata below.

\begin{lemma}\label{lem:fvsred1} 

For any $\C\ge 2$, if $G$ contains a $k$-clique, then the graph $H(G,k,\C)$
described in the previous section admits a $(\C,\D)$-coloring.

\end{lemma}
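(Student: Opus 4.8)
The plan is to produce an explicit $(\C,\D)$-coloring of $H$ starting from a $k$-clique of $G$. Fix such a clique; since the $V_i$ are independent it contains exactly one vertex of each $V_i$, and for $i\in\{1,\dots,k\}$ let $a_i\in\{1,\dots,n\}$ be the index in $V_i$ of the clique vertex lying in $V_i$. Of the $\C$ available colors only two, which we call $A$ and $B$, will be used on the vertices of $H$ that do not lie in the interior of an equality or palette gadget — call these the \emph{core} vertices; the other colors are needed only inside the gadgets. I would first color the core and then extend the coloring to the gadgets.

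The forced part of the core receives its forced color: color $A$ is given to $p_A$, to every $p^i_A$ and $g^i_A$, to every transfer vertex $l_{i,j}$ and $h_{i,j}$, to $c_U$, and to every vertex of the budget-setting sets of item~\ref{it21} with $j=A$ and of item~\ref{it22}; symmetrically, color $B$ is given to $p_B$, every $p^i_B$, every $g^i_B$, the item~\ref{it21} sets with $j=B$, and the item~\ref{it24} sets attached to the $c_e$'s. The clique is encoded on the remaining core vertices. For each $i$ we color $c^i_1,\dots,c^i_{a_i}$ with $A$, color $c^i_{a_i+1},\dots,c^i_n$ with $B$, and among $c^i_{n+1},\dots,c^i_{2n}$ color exactly $n-a_i$ of them (say the first ones) with $A$ and the rest with $B$. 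For an edge $e$ of $G$ joining the $i_1$-th vertex of $V_{j_1}$ to the $i_2$-th vertex of $V_{j_2}$ with $j_1<j_2$: if $e$ is an edge of the clique (i.e.\ $i_1=a_{j_1}$ and $i_2=a_{j_2}$) we color $c_e$ with $B$ and every vertex of $L^1_e\cup H^1_e\cup L^2_e\cup H^2_e$ with $A$; otherwise we color $c_e$ with $A$ and every vertex of those four sets with $B$.

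By construction this core coloring already meets every constraint that a gadget is meant to express: both terminals of every equality gadget have received the same color, and in every palette gadget $P(p_A,p_B,w)$ (these are present only when $\C\ge3$) the third terminal $w$ — a choice vertex or an edge-representation vertex — has received $A$ or $B$ and therefore agrees with $p_A$ or with $p_B$. Introducing the gadgets one at a time and applying Lemma~\ref{lem:eq} and Lemma~\ref{lem:palette}, the core coloring extends to a $(\C,\D)$-coloring of all of $H$; moreover, in the extensions supplied by those lemmata each interior copy of $\T(\C-1,\D)$ (resp.\ $\T(\C-2,\D)$) is colored using $\C-1$ (resp.\ $\C-2$) colors that avoid the color(s) of the gadget's terminals, so no core vertex gains a neighbor of its own color from any gadget. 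It therefore remains only to verify that, counting core neighbors only, every core vertex has at most $\D$ neighbors of its own color.

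This last step is the heart of the argument. The crucial vertices are the transfer vertices, where the choice part and the edge part interact. For $j_1<j_2$, the vertex $l_{j_1,j_2}$ sees $a_{j_1}$ of the $A$-colored choice vertices $c^{j_1}_1,\dots,c^{j_1}_n$, its $\D-n$ budget-setting neighbors (item~\ref{it22}), and — only from the unique clique edge $e^\ast$ between $V_{j_1}$ and $V_{j_2}$ — the $|L^1_{e^\ast}|=n-a_{j_1}$ vertices of $L^1_{e^\ast}$ (every non-clique edge contributes nothing, since its set $L^1_e$ was colored $B$); this adds up to exactly $\D$. Symmetrically $h_{j_1,j_2}$ sees $n-a_{j_1}$ choice vertices, $\D-n$ budget vertices, and the $|H^1_{e^\ast}|=a_{j_1}$ vertices of $H^1_{e^\ast}$, again exactly $\D$; the same holds for $l_{j_2,j_1}$ and $h_{j_2,j_1}$ using $|L^2_{e^\ast}|=n-a_{j_2}$ and $|H^2_{e^\ast}|=a_{j_2}$. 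The remaining checks are routine: $c_U$ sees precisely the $|E|-{k\choose2}=\D$ vertices $c_e$ with $e$ outside the clique; each $c_e$ with $e$ in the clique sees precisely its $\D$ budget vertices in color $B$; each guard $g^i_j$ sees the $n$ same-colored choice vertices of $V_i$ together with the $\D-n$ vertices of $G^i_j$; $p_A$ (resp.\ $p_B$) sees exactly its $\D$ partners $p^i_A$ (resp.\ $p^i_B$); and each remaining core vertex — the $p^i_j$, the choice vertices, the $c_e$ with $e$ outside the clique, the vertices of the sets $L^j_e,H^j_e$, and the gadget-budget vertices — has at most $k$ neighbors of its own color. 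This establishes the $(\C,\D)$-coloring of $H$. The main obstacle is exactly this bookkeeping: it is the role of the budget-setting items~\ref{it21}--\ref{it24}, of the calibrated sizes $n-i_1,i_1,n-i_2,i_2$ of $L^1_e,H^1_e,L^2_e,H^2_e$, and of the choice $\D=|E|-{k\choose2}$ to make all of these counts be $\le\D$, and in fact \emph{exactly} $\D$ at the ``selector'' vertices — the tightness that is exploited in the converse direction. For all of this to be well-defined we assume, as we may after a trivial padding of the \MCC\ instance, that $\D\ge n$ and $\D\ge k$.
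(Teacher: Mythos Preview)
Your proof is correct and follows essentially the same approach as the paper's: you construct the same two-color core assignment encoding the clique via the indices $a_i$, extend it to gadget interiors via Lemmata~\ref{lem:eq} and~\ref{lem:palette}, and then verify the deficiency bounds vertex-by-vertex. Your verification is in fact somewhat more explicit than the paper's (particularly the transfer-vertex counts and the observation that gadget interiors contribute no same-color neighbors to the core), and your remark about padding the \MCC\ instance to ensure $\D\ge n$ and $\D\ge k$ is a useful sanity check the paper leaves implicit.
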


\begin{proof}

Consider a clique of size $k$ in $G$ that includes exactly one vertex from each
$V_i$. We will denote this clique by a function $f:\{1,\ldots,k\} \to
\{1,\ldots,n\}$, that is, we assume that the clique contains the vertex with
index $f(i)$ from $V_i$. We produce a $(\C,\D)$-coloring of $H$ as follows:
vertex $p_A$ receives color $1$, while vertex $p_B$ receives color 2.  All
vertices for which we have added an equality gadget with one endpoint
identified with $p_A$ (respectively $p_B$) take color 1 (respectively 2). We
use Lemma \ref{lem:eq} to properly color the internal vertices of the equality
gadgets. 

We have still left uncolored the choice vertices $c^i_j$ as well as the
internal vertices $L^1_e,H^1_e,L^2_e,H^2_e,c_e$ of the edge gadgets. We proceed
as follows: for all $i\in\{1,\ldots,k\}$ we use color $1$ on the vertices
$c^i_l$ such that $l\in\{1,\ldots,f(i)\}\cup \{n+1,\ldots, 2n-f(i)\}$; we use
color $2$ on all remaining choice vertices.  For every $e\in E$ that is
contained in the clique we color all vertices of the sets $L^1_e, H^1_e, L^2_e,
H^2_e$ with color $1$, and $c_e$ with color $2$.  For all other edges we use
the opposite coloring: we color all vertices of the sets $L^1_e, H^1_e, L^2_e,
H^2_e$ with color $2$, and $c_e$ with color $1$. We use Lemma \ref{lem:palette}
to properly color the internal vertices of palette gadgets, since all palette
gadgets that we add use either color $1$ or color $2$ twice in their endpoints.
This completes the coloring.

To see that the coloring we described is a $(\C,\D)$-coloring, first we note
that by Lemmata \ref{lem:eq},\ref{lem:palette} internal vertices of equality
and palette gadgets are properly colored. Vertices $p_A,p_B$ have exactly $\D$
neighbors with the same color; guard vertices $g^i_j$ have exactly $n$
neighbors with the same color among the choice vertices, hence exactly $\D$
neighbors with the same color overall; choice vertices have at most $k$
neighbors of the same color, and we can assume that $k<|E|-{k\choose 2}$; the
vertex $c_U$ has exactly $\D=|E|-{k\choose 2}$ neighbors with color $1$, since
the clique contains exactly $k\choose 2$ edges; all internal vertices of edge
gadgets have at most one neighbor of the same color.  Finally, for the transfer
vertices $l_{i,j}$ and $h_{i,j}$, we note that $l_{i,j}$ (respectively
$h_{i,j}$) has exactly $f(i)$ (respectively $n-f(i)$) neighbors with color $1$
among the choice vertices.  Furthermore, when $i<j$, $l_{i,j}$ (respectively
$h_{i,j}$) has $|L^1_e|$ (respectively $|H^1_e|$) neighbors with color $1$ in
the edge gadgets, those corresponding to the edge $e$ that belongs in the
clique between $V_i$ and $V_j$. But by construction $|L^1_e|=n-f(i)$ and
$|H^1_e|=f(i)$, and with similar observations for the case $j<i$ we conclude
that all vertices have deficiency at most $\D$.  \end{proof}

\begin{lemma}\label{lem:fvsred2} 

For any $\C\ge 2$, if the graph $H(G,k,\C)$ described in the previous section
admits a $(\C,\D)$-coloring, then $G$ contains a $k$-clique.

\end{lemma}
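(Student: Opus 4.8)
The plan is to take an arbitrary $(\C,\D)$-coloring of $H$ and reverse-engineer a $k$-clique in $G$. First I would use the Palette Part to establish that $p_A$ and $p_B$ receive distinct colors: by Lemma~\ref{lem:eq} each $p^i_j$ has the same color as $p_j$, and since $p_j$ is adjacent to $\D$ such vertices (items~\ref{it3},\ref{it5}) plus the edge $p_Ap_B$, vertices $p_A,p_B$ already have $\D$ same-colored neighbors, so the edge between them forces $p_A,p_B$ to differ; WLOG call these colors $1$ and $2$. I would then propagate: by Lemma~\ref{lem:eq} every guard $g^i_j$, transfer vertex $l_{i,j},h_{i,j}$, the checker $c_U$, and all the budget-setting vertices of items~\ref{it21}--\ref{it24} inherit the color of their designated palette endpoint. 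When $\C\ge 3$, Lemma~\ref{lem:palette} (via items~\ref{it10},\ref{it18}) forces every choice vertex $c^i_j$ and every edge-gadget internal vertex to also take color $1$ or color $2$; when $\C=2$ this is automatic. So the whole ``interesting'' part of the coloring is $2$-valued.

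Next I would read off the combinatorial meaning of the budgets. Each guard $g^i_A$ (color $1$) is adjacent to its size-$(\D-n)$ set of color-$1$ vertices, so among its $2n$ choice neighbors $c^i_1,\dots,c^i_{2n}$ at most $n$ can have color $1$; symmetrically $g^i_B$ forces at most $n$ of them to have color $2$; hence exactly $n$ choice vertices in block $i$ get color $1$. Let $a_i$ be the number of color-$1$ vertices among the \emph{first} $n$ choice vertices $c^i_1,\dots,c^i_n$; then exactly $n-a_i$ of the last $n$ are color $1$. Now the transfer vertices convert $a_i$ into a deficiency: $l_{i,j}$ has color $1$, is saturated by its own size-$(\D-n)$ gadget set, is adjacent to $c^i_1,\dots,c^i_n$ (item~\ref{it12}) contributing $a_i$ color-$1$ neighbors, and is adjacent to $L^1_e$ or $L^2_e$ for each edge $e$ between $V_i$ and $V_j$ (item~\ref{it16}); so the total number of color-$1$ vertices in those edge sets, summed over all $e$ between $V_i$ and $V_j$, is at most $n-a_i$. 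Symmetrically $h_{i,j}$ gives that the color-$1$ count in the $H^1_e/H^2_e$ sets is at most $a_i$.

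Then I would analyze the edge gadgets. For a fixed edge $e$ between index $i_1$ in $V_{j_1}$ and $i_2$ in $V_{j_2}$, its vertex $c_e$ has color $1$ or $2$. The budget of $c_e$ (saturated by its size-$\D$ gadget set of the opposite color in item~\ref{it24}, but that only bites when $c_e$ has color $2$): if $c_e$ has color $2$ it can have no color-$2$ neighbor among $L^1_e\cup H^1_e\cup L^2_e\cup H^2_e$, forcing \emph{all} $n$ of those vertices to color $1$; if $c_e$ has color $1$, symmetric budget-free reasoning plus the counting from the transfer vertices shows the sets are ``mostly'' color $2$. The key quantitative fact: if every edge of the clique-to-be has $c_e$ of color $2$, then for each such $e$ the color-$1$ contribution to the $L$-sets is exactly $|L^1_e|+|L^2_e| = (n-i_1)+(n-i_2)$ and to the $H$-sets is $i_1+i_2$. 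Combined with the transfer-vertex inequalities $\le n-a_{j_1}$, $\le n-a_{j_2}$ (for the $L$-sides) and $\le a_{j_1}$, $\le a_{j_2}$ (for the $H$-sides), and using that at most one edge between each pair $V_{j_1},V_{j_2}$ can be the ``active'' one, one deduces $i_1 = a_{j_1}$ and $i_2 = a_{j_2}$ --- i.e., the active edge between blocks $j_1,j_2$ must join the vertex of index $a_{j_1}$ in $V_{j_1}$ to the vertex of index $a_{j_2}$ in $V_{j_2}$. Finally $c_U$ (color $1$, saturated by item~\ref{it20}'s gadget) is adjacent to all $c_e$, and $\D = |E| - \binom{k}{2}$, so at least $\binom{k}{2}$ of the $c_e$'s have color $2$; counting shows exactly one active edge per pair, so the vertices $\{$index $a_i$ in $V_i : i=1,\dots,k\}$ are pairwise adjacent, giving the $k$-clique.

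The main obstacle will be the bookkeeping in the edge-gadget step: making the inequalities from the two transfer vertices on each side ($l$ and $h$, which push in opposite directions) combine tightly enough to pin $i_1 = a_{j_1}$ exactly, rather than just bounding it, and handling the interaction when more than one edge between a pair of blocks could a priori be ``active.'' One must argue that the global count forced by $c_U$ leaves no slack anywhere, so every inequality along the way is met with equality. The other parts --- forcing $p_A \ne p_B$, propagating colors through the equality/palette gadgets, and the guard-vertex argument that exactly $n$ choice vertices per block are color $1$ --- are routine given Lemmata~\ref{lem:eq} and~\ref{lem:palette}.
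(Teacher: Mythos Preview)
Your proposal is correct and follows essentially the same route as the paper's proof: fix the palette colors, propagate through the equality/palette gadgets, use the guards to count color-$1$ choice vertices per block, use the transfer vertices to bound the color-$1$ content of the edge-gadget sets, and conclude via $c_U$ that at least $\binom{k}{2}$ edges are ``active'' and mutually consistent. The obstacle you flag is easier than you anticipate and does not require a global slack argument: for a fixed ordered pair $(j_1,j_2)$, simply adding your $l_{j_1,j_2}$- and $h_{j_1,j_2}$-inequalities gives $n\cdot(\text{number of active edges between }V_{j_1}\text{ and }V_{j_2}) \le (n-a_{j_1})+a_{j_1}=n$, so at most one such edge is active, and then $n-i_1\le n-a_{j_1}$ together with $i_1\le a_{j_1}$ pin $i_1=a_{j_1}$ directly --- this local argument is exactly what the paper does.
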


\begin{lemma}\label{lem:fvsred3} 

For any $\C\ge 2$, the graph $H(G,k,\C)$ described in the previous section has
$\td(H) = O(k^2 + \C)$. Furthermore, if $\C=2$, then $\fvs(H) = O(k^2)$.

\end{lemma}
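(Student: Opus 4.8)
The idea is to delete a small ``hub'' set $S\subseteq V(H)$, chosen so that (i) $H$ minus $S$ and minus all gadgets falls apart into stars and isolated vertices, and (ii) every gadget is anchored to $S$, so that Lemmata~\ref{lem:eq2} and~\ref{lem:palette2} (and Lemma~\ref{lem:T}) control what the gadgets add. Concretely, let $H_0$ be $H$ with all equality and palette gadgets removed, i.e. only the explicitly named vertices, the budget independent sets, and the non-gadget edges (the edges $p_Ap_B$ and $p_j p^i_j$; the edges from choice vertices to guards; from transfer vertices to choice vertices and to the sets $L^i_e,H^i_e$; from $c_e$ to $L^1_e\cup H^1_e\cup L^2_e\cup H^2_e$ and to its own budget set; from $c_U$ to the $c_e$; and the budget-set edges). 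Let
\[
S=\{p_A,p_B\}\cup\{g^i_j: i\in[k],\,j\in\{A,B\}\}\cup\{l_{i,j},h_{i,j}: i\ne j\in[k]\}\cup\{c_U\},
\]
so $|S|=2k^2+O(k)=O(k^2)$.

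First I would check that $H_0\setminus S$ is a disjoint union of stars and isolated vertices. Indeed, once $S$ is removed, the vertices $p^i_j$, the choice vertices $c^i_j$, and every budget-set vertex attached to a guard or transfer vertex become isolated (all their $H_0$-neighbours lie in $S$); and for each edge $e$, the vertex $c_e$ together with $L^1_e\cup H^1_e\cup L^2_e\cup H^2_e$ and its own budget set forms a star centred at $c_e$, since the only other neighbours of these vertices ($c_U$ and the transfer vertices) are in $S$. Hence $\td(H_0\setminus S)\le 2$ and $\fvs(H_0\setminus S)=0$. Next, inspecting the construction, every equality gadget is identified with $p_A$ or $p_B$ on at least one side, and every palette gadget is identified with $p_A$ and $p_B$ on two of its three sides; thus $S$ meets every gadget in the way required by Lemmata~\ref{lem:eq2} and~\ref{lem:palette2}. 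Since $H$ is obtained from $H_0$ precisely by these gadget-attachment operations (and interleaving equality- and palette-attachments does not affect their proofs), we get $\td(H)\le \td(H_0\setminus S)+|S|+\C-1=O(k^2+\C)$. For $\C=2$ no palette gadget is used, so only Lemma~\ref{lem:eq2} is needed, and its feedback-vertex-set bound gives $\fvs(H)\le \fvs(H_0\setminus S)+|S|=O(k^2)$.

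The only genuine work lies in the case analysis of the second step — verifying that deleting $S$ really severs the choice/transfer/edge machinery so that what remains are stars and isolated vertices — and in confirming that no gadget was attached without touching $S$; the rest is a direct appeal to Lemmata~\ref{lem:eq2}, \ref{lem:palette2}. If one prefers a fully self-contained argument rather than a common generalisation of those lemmas, one can build an elimination forest of $H$ directly: put $S$ in the top $|S|$ levels; each component of $H\setminus S$ is then a star (or a single vertex) with a bounded number of gadget ``residues'' hanging off, and each residue is a vertex joined to disjoint copies of $\T(\C-1,\D)$ or $\T(\C-2,\D)$, of tree-depth at most $1+(\C-1)=\C$ by Lemma~\ref{lem:T}; so every such component has tree-depth $\le \C+1$ and $\td(H)\le |S|+\C+1=O(k^2+\C)$. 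When $\C=2$ these residues are the stars obtained from $K_{2,2\D+1}$ by deleting an anchor, so $H\setminus S$ is a forest and $\fvs(H)\le |S|=O(k^2)$.
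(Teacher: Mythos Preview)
Your proof is correct and follows essentially the same approach as the paper. The only cosmetic difference is packaging: the paper applies Lemmata~\ref{lem:eq2} and~\ref{lem:palette2} with the minimal set $S=\{p_A,p_B\}$ (which already meets every gadget as required), and only afterwards deletes the $2k^2$ guard and transfer vertices from the gadget-free graph $H'$ to observe that what remains is a depth-$3$ tree rooted at $c_U$ together with isolated vertices; you instead fold the guards, transfers, and $c_U$ into $S$ from the start, so that $H_0\setminus S$ is already a union of stars and isolated vertices. Both routes rely on the same structural observation and yield the same bound.
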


Theorem \ref{thm:fvs1} now follows directly from the reduction we have
described and Lemmata \ref{lem:fvsred1},\ref{lem:fvsred2},\ref{lem:fvsred3}.

\section{ETH-based Lower Bounds for Treewidth and Pathwidth}\label{sec:eth}

In this section we present a reduction which strengthens the results of Section
\ref{sec:whard} for the parameters treewidth and pathwidth. In particular, the
reduction we present here establishes that, under the ETH, the known algorithm
for \DC\ for these parameters is essentially best possible.

We use a similar presentation order as in the previous section, first giving
the construction and then the Lemmata that imply the result. Where possible, we
re-use the gadgets we have already presented. The main theorem of this section
states the following:

\begin{theorem}\label{thm:eth}

For any fixed $\C\ge 2$, if there exists an algorithm
which, given a graph $G=(V,E)$ and parameters $\C,\D$ decides if $G$ admits a
$(\C,\D)$-coloring in time $n^{o(\pw)}$, then the ETH is false.

\end{theorem}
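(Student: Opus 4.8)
The plan is to reduce from \MCC, which (as recalled above) does not admit an $n^{o(k)}$ algorithm unless the ETH fails. From an instance $(G,k)$ with parts $V_1,\dots,V_k$ of size $n$, I would build a graph $H$ and an integer $\D$, of size $\mathrm{poly}(n)$ (treating $\C$ as a fixed constant), such that $H$ has a $(\C,\D)$-coloring if and only if $G$ has a $k$-clique and — the whole point — such that $\pw(H)=O(k)$. An $n^{o(\pw)}$ algorithm for \DC\ would then decide \MCC\ in time $\mathrm{poly}(n)^{o(k)}=n^{o(k)}$, contradicting the ETH.

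The obstruction to reusing the reduction of Section~\ref{sec:whard} directly is that it employs $\Theta(k^2)$ transfer vertices $l_{i,j},h_{i,j}$, and (much as they must all be ancestors in a tree-depth decomposition) a quadratic number of them must be simultaneously present in any path decomposition, giving $\pw=\Theta(k^2)$ and hence only an $n^{o(\sqrt{\pw})}$ bound. The new idea is to handle the $\binom k2$ pairs of color classes \emph{one at a time} along a path-like layout: $H$ will consist of $\binom k2$ consecutive blocks, one per pair $(i,j)$. Each block contains its own private selection gadgets encoding a vertex chosen in $V_i$ and a vertex chosen in $V_j$ — built with the budget-setting device and the equality gadget $Q$ of Section~\ref{sec:fvsred1}, with the Palette Part reused to supply two vertices $p_A\neq p_B$ — together with a local edge-checking gadget, essentially the edge-representation gadget of Section~\ref{sec:fvsred1} plus a checker vertex, colorable precisely when the two chosen vertices are adjacent in $G$. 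To force the choice for color class $i$ to be the same across all blocks involving $i$, I add consistency gadgets (again made of copies of $Q$) linking the selection gadget of each block that involves $i$ to that of the next such block. If $\C\ge 3$, palette gadgets are inserted exactly as in Section~\ref{sec:fvsred1}. Correctness then follows the pattern of Lemmata~\ref{lem:fvsred1} and \ref{lem:fvsred2}: a $k$-clique gives a choice passing every edge check, and conversely any $(\C,\D)$-coloring of $H$ yields, through the consistency gadgets, one well-defined choice per color class which the edge checks certify to form a $k$-clique.

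For the pathwidth bound I would give an explicit path decomposition that sweeps the blocks left to right (and within a block sweeps its selection and checking gadgets), keeping in each bag only $p_A,p_B$, the $O(1)$ interface vertices of the current block, the $O(1)$ interface vertices of the at most two consistency links currently active on each color class, and $O(\C)$ vertices to absorb the internal vertices of one $Q$-, palette-, or $\T$-gadget at a time; this gives $\pw(H)=O(k+\C)=O(k)$. This step needs pathwidth analogues of Lemmata~\ref{lem:eq2} and~\ref{lem:palette2}, stating that attaching equality and palette gadgets (and pendant budget-setting independent sets) to a bounded set of vertices already present in the bag raises the pathwidth only by an additive $O(\C)$; these are proved as their tree-depth counterparts.

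The step I expect to be the real obstacle is making the selection-and-consistency machinery genuinely thin. A choice in $V_i$ is read off from degree deficiencies, so it is intrinsically carried by $\Omega(n)$ vertices; these must therefore be laid out \emph{along} the sweep direction, and a consistency link between two blocks must be processable slot by slot without ever simultaneously holding all $n$ of its vertices together with the interface of the far block. Since for $k\ge 3$ it is impossible to order the $\binom k2$ blocks so that, for \emph{every} color class, the blocks involving it form a contiguous interval, some consistency links must span long ranges of the sweep, and the link gadgets and the global block order have to be designed together so that still only $O(1)$ selection/consistency vertices per color class are ever live at once. Getting this interleaving right is the heart of the proof; the remaining ingredients are repackagings of the gadgets and arguments of Section~\ref{sec:whard}.
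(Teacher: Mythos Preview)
Your high-level plan---reduce from \MCC, lay the construction out linearly so that $\pw(H)=O(k)$, and reuse the gadgets of Section~\ref{sec:whard}---is correct, and you have correctly diagnosed that the quadratic blow-up in Section~\ref{sec:whard} comes from the $\Theta(k^2)$ transfer vertices. But your proposed layout is more convoluted than necessary, and the difficulty you flag as ``the real obstacle'' (long-range consistency links between non-contiguous blocks, and the impossibility of ordering $\binom{k}{2}$ pairs so every color class is contiguous) is an artifact of that layout rather than something intrinsic.

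The paper does not partition the construction into $\binom{k}{2}$ blocks, one per pair of color classes. Instead it builds a $k\times 2m$ grid of $n$-vertex independent sets $C_{i,j}$, devoting two \emph{columns} to each \emph{edge} of $G$ rather than one block to each pair. Row $i$ carries the choice in $V_i$ continuously through \emph{all} $2m$ columns, including those whose edge does not touch $V_i$. Propagation is therefore purely local: a pair of backbone vertices $b_{i,j}^A,b_{i,j}^B$ (pinned to the two palette colors via $Q$-gadgets and given the usual budget $\D-n$) connects $C_{i,j}$ to $C_{i,j+1}$ and forces the color-$1$ counts in adjacent independent sets to be complementary. The gadget for the $j$-th edge then reads the counts in columns $2j-1$ and $2j$ of only the two relevant rows, exactly as in Section~\ref{sec:fvsred1}, and a single global checker $c_U$ (kept in every bag) counts selected edges as before.

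With this layout there are no long-range consistency links at all, so the block-ordering problem you worry about never arises: every row is contiguous by construction. The path decomposition simply sweeps $j=1,\dots,m$; each bag holds $p_A,p_B,c_U$, the $O(k)$ backbone vertices of three consecutive columns, and $O(1)$ vertices of the current edge gadget, giving $\pw(H)=O(k+\C)$ directly (Lemma~\ref{lem:eth3}, together with Lemma~\ref{lem:eq-pw}, which is exactly the pathwidth analogue of Lemmata~\ref{lem:eq2} and~\ref{lem:palette2} that you anticipated needing). Your design could be pushed through---the long-range links would have to be realized as chains of backbone-style pairs threaded through every intermediate block---but once you do that you have essentially rebuilt the continuous-row grid with extra bookkeeping. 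The cleaner move is to let every row run through every column from the start.
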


\subsection{Basic Gadgets}

We use again the equality and palette gadgets of Section \ref{sec:whard}
(Definitions \ref{def:eq},\ref{def:palette}). Before proceeding, let us show
that adding these gadgets to the graph does not increase the pathwidth too
much. For the two types of gadget $Q,P$, we will call the vertices
$u_1,u_2(,u_3)$ the endpoints of the gadget.

\begin{lemma}\label{lem:eq-pw} Let $G=(V,E)$ be a graph and let $G'$ be the
graph obtained from $G$ by repeating the following operation: find a copy of
$Q(u_1,u_2,\C,\D)$, or $P(u_1,u_2,u_3,\C,\D)$; remove all its internal vertices
from the graph; and add all edges between its endpoints which are not already
connected.  Then $\tw(G)\le \max\{\tw(G'),\C\}$ and $\pw(G)\le \pw(G')+\C$.
\end{lemma}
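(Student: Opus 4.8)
The plan is to start from an optimal (tree or path) decomposition of $G'$ and to reinsert the internal vertices of each gadget \emph{locally}, exploiting two facts: that the endpoints of every gadget form a clique in $G'$ (this is why the operation adds all missing edges among them), and that the graphs $\T(\cdot,\D)$ occurring inside the gadgets have small (path)width by Lemma~\ref{lem:T}. First note that after the replacement the endpoints of an equality gadget form a $K_2$ and those of a palette gadget form a $K_3$ in $G'$, so in any tree decomposition of $G'$ some bag contains all endpoints of a given gadget copy; fix one such bag $B_\Gamma$ per gadget $\Gamma$. Also note that every edge of $G$ not incident to an internal gadget vertex is present in $G'$, so a decomposition of $G'$ already covers all of $G$ except for the interiors of the gadgets.

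For the bound $\tw(G)\le\max\{\tw(G'),\C\}$: take an optimal tree decomposition $\T'$ of $G'$. An equality gadget $\Gamma=Q(u_1,u_2,\C,\D)$ is $\C\D+1$ disjoint copies of $\T(\C-1,\D)$ each joined to $u_1,u_2$; by Lemma~\ref{lem:T} each copy has a tree decomposition with bags of size $\le\C-1$. Add $\{u_1,u_2\}$ to every bag of each such decomposition (now bags of size $\le\C+1$), create a fresh bag $B^\star=\{u_1,u_2\}$, make $B^\star$ a child of $B_\Gamma$ (legal since $\{u_1,u_2\}\subseteq B_\Gamma$), and attach the root of each augmented copy-decomposition as a child of $B^\star$. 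Palette gadgets are handled the same way, using the $\binom{\C}{2}\D+1$ copies of $\T(\C-2,\D)$ (bags of size $\le\C-2$) and adding the three endpoints, again reaching bags of size $\le\C+1$. The tree-decomposition axioms are easily checked: an internal vertex occurs only inside the pendant subtree of its own copy, and each endpoint $u_\ell$ occurs in a connected subtree meeting $B_\Gamma$, $B^\star$ and every copy-subtree. All new bags have size $\le\C+1$, so the result is a tree decomposition of $G$ of width $\le\max\{\tw(G'),\C\}$.

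For the bound $\pw(G)\le\pw(G')+\C$ the same idea is used, but the gadget material must be \emph{inserted into} the path rather than hung off it, which forces us to carry the separating bag along. Take an optimal path decomposition $\mathcal{P}'$ of $G'$ and fix bags $B_\Gamma$ as before. For each gadget $\Gamma$ build a path decomposition $S_\Gamma$ of its interior-plus-endpoints by taking, for every copy of $\T(\C-1,\D)$ (resp.\ $\T(\C-2,\D)$) a path decomposition of width $\C-2$ (resp.\ $\C-3$), adding the endpoints of $\Gamma$ to all its bags, and concatenating these over all copies; all bags of $S_\Gamma$ contain the endpoints and have size $\le\C+1$. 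Now process the gadgets one by one: to insert $S_\Gamma$, split $\mathcal{P}'$ at $B_\Gamma$, place $S_\Gamma$ between the two halves, and add all of $B_\Gamma$ to every bag of $S_\Gamma$. Since $B_\Gamma$ separates the two halves of $\mathcal{P}'$, any vertex of $G'$ occurring on both sides lies in $B_\Gamma$ and hence in every inserted bag, so the interval property is preserved; the inserted bags contain at most $\pw(G')+1$ vertices of $B_\Gamma$ plus at most $\C-1$ further (internal) vertices, hence have size $\le\pw(G')+\C$. Each $B_\Gamma$ is an original bag, so it survives all the other insertions and the process iterates; the final path decomposition of $G$ has width $\le\pw(G')+\C$.

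The only genuinely delicate point is this last step: one must make sure that inserting the segments $S_\Gamma$ does not violate the interval property for the vertices of $G'$, which is precisely why $B_\Gamma$ has to be duplicated into every bag of $S_\Gamma$; beyond that, everything reduces to a routine verification of the decomposition axioms together with the width estimates for $\T(i,\D)$ supplied by Lemma~\ref{lem:T}.
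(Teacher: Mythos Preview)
Your proposal is correct and follows essentially the same approach as the paper: build a small (path) decomposition of each gadget using Lemma~\ref{lem:T}, use the clique property of the endpoints in $G'$ to locate a host bag $B_\Gamma$, and then either hang the gadget decomposition off $B_\Gamma$ (for treewidth) or splice it into the path with $B_\Gamma$ added to every inserted bag (for pathwidth). Your write-up is in fact a bit more careful than the paper's on the bookkeeping (the auxiliary bag $B^\star$, the explicit check that the interval property survives repeated insertions), but the underlying argument is identical.
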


\subsection{Construction}

We now describe a construction which, given an instance $G=(V,E),\ k$, of \MCC\
and a constant $\C$ returns a graph $H(G,k,\C)$ and an integer $\D$ such that
$H$ admits a $(\C,\D)$-coloring if and only if $G$ has a $k$-clique, and the
pathwidth of $H$ is $O(k+\C)$. We use $m$ to denote $|E|$, and we set
$\D=m-{k\choose 2}$. As in Section \ref{sec:whard} we present the construction
in steps to ease presentation, and we use the same conventions regarding adding
$Q$ and $P$ gadgets to the graph.

\smallskip

\noindent \textbf{Palette Part}: This part repeats steps \ref{it1}-\ref{it5} of
the construction of Section \ref{sec:whard}. We recall that this creates two
main palette vertices $p_A,p_B$ (which are eventually guaranteed to have
different colors).

\smallskip

\noindent \textbf{Choice Part}: In this part we construct a sequence of
independent sets, arranged in what can be thought of as a $k\times 2m$ grid.
The idea is that the choice we make in coloring the first independent set of
every row will be propagated throughout the row. We can therefore encode $k$
choices of a number between $1$ and $n$, which will encode the clique.

\begin{enumerate}

\setcounter{enumi}{5}

\item \label{itb6} For each $i\in\{1,\ldots,k\}$, for each
$j\in\{1,\ldots,2m\}$ we construct an independent set $C_{i,j}$ of size $n$.

\item (Backbone vertices) For each $i\in\{1,\ldots,k\}$, for each
$j\in\{1,\ldots,2m-1\}$, for each $l\in\{A.B\}$ we construct a vertex
$b_{i,j}^l$. We connect $b_{i,j}^l$ to all vertices of $C_{i,j}$ and all
vertices of $C_{i,j+1}$.

\item For each backbone vertex $b_{i,j}^l$ added in the previous step, for
$l\in\{A,B\}$, we add an equality gadget $Q(p_l,b_{i,j}^l)$.

\setcounter{count:constr2}{\value{enumi}}

\end{enumerate}

\smallskip

\noindent \textbf{Edge Representation}: In the $k\times 2m$ grid of independent
sets we have constructed we devote two columns to represent each edge of $G$.
In the remainder we assume some numbering of the edges of $E$ with the numbers
$\{1,\ldots,m\}$, as well as a numbering of each $V_i$ with the numbers
$\{1,\ldots,n\}$. Suppose that the $j$-th edge of $E$, where
$j\in\{1,\ldots,m\}$ connects the $j_1$-th vertex of $V_{i_1}$ to the $j_2$-th
vertex of $V_{i_2}$, where $j_1,j_2\in\{1,\ldots,n\}$ and
$i_1,i_2\in\{1,\ldots,k\}$. We perform the following steps for each such edge.

\begin{enumerate}

\setcounter{enumi}{\value{count:constr2}}

\item  We construct four independent sets $H_j^1,L_j^1,H_j^2,L_j^2$ with
respective sizes $n-j_1,j_1,n-j_2,j_2$.

\item \label{itb10} We construct four vertices $h_j^1,l_j^1,h_j^2,l_j^2$.  We
connect $h_j^1$ (respectively $l_j^1,h_j^2,l_j^2$) with all vertices of $H_j^1$
(respectively $L_j^1,H_j^2,L_j^2$).

\item We connect $h_j^1$ to all vertices of $C_{i_1,2j-1}$, $l_j^1$ to all
vertices of $C_{i_1,2j}$, $h_j^2$ to all vertices of $C_{i_2,2j-1}$, $l_j^2$ to
all vertices of $C_{i_2,2j}$.

\item \label{itb12} We add equality gadgets $Q(p_A,h_j^1), Q(p_A,l_j^1),
Q(p_A,h_j^2), Q(p_A,l_j^2)$.

\item \label{itb13} We add a checker vertex $c_j$ and connect it to all
vertices of $H_j^1\cup L_j^1\cup H_j^2\cup L_j^2$.

\setcounter{count:constr2}{\value{enumi}}

\end{enumerate}

\smallskip

\noindent \textbf{Validation and Budget-Setting}: Finally, we add a vertex that
counts how many edges we have included in our clique, as well as appropriate
vertices to diminish the deficiency budget of various parts of our
construction.

\begin{enumerate}

\setcounter{enumi}{\value{count:constr2}}

\item \label{itb14} We add a universal checker vertex $c_U$ and connect it to
all vertices $c_j$ added in step \ref{itb13}. We add an equality gadget
$Q(p_A,c_U)$.

\item \label{itb15} For every vertex $c_j$ added in step \ref{itb13} we
construct an independent set of size $\D$ and connect all its vertices to
$c_j$. For each vertex $v$ in this set we add an equality gadget $Q(p_B,v)$.

\item For each vertex constructed in step \ref{itb10} ($h_j^1, l_j^1, h_j^2,
l_j^2$), we construct an independent set of size $\D-n$ and connect it to the
vertex. For each vertex $v$ of this independent set we add an equality gadget
$Q(p_A,v)$.

\item \label{itb17} For each backbone vertex $b_{i,j}^l$, with $l\in\{A,B\}$,
we construct an independent set of size $\D-n$ and connect it to $b_{i,j}^l$.
For each vertex $v$ of this independent set we add an equality gadget
$Q(p_l,v)$.

\item \label{itb18} If $\C\ge3$, for each vertex $v$ added in steps
\ref{itb6}-\ref{itb17} we add a palette gadget $P(p_A,p_B,v)$.

\end{enumerate}

\subsection{Correctness}

\begin{lemma}\label{lem:eth1}

For any $\C\ge 2$, if $G$ contains a $k$-clique then the graph $H(G,k,\C)$
described in the previous section admits a $(\C,\D)$-coloring.  \end{lemma}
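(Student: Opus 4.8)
The plan is to follow the strategy of the proof of Lemma~\ref{lem:fvsred1}. Suppose $G$ has a clique meeting every part $V_i$, and encode it by a function $f:\{1,\dots,k\}\to\{1,\dots,n\}$, where the clique uses the vertex of index $f(i)$ in $V_i$. I would build the $(\C,\D)$-coloring of $H$ in layers. First, give $p_A$ colour $1$ and $p_B$ colour $2$, and give colour $1$ (resp.\ colour $2$) to every vertex tied to $p_A$ (resp.\ to $p_B$) by an equality gadget; this fixes the colours of every backbone vertex $b_{i,j}^A, b_{i,j}^B$, every transfer vertex $h_j^t, l_j^t$, the vertex $c_U$, and every auxiliary budget vertex, all of which thus get colour $1$ or $2$. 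Then colour the internal vertices of all equality and palette gadgets using Lemmata~\ref{lem:eq} and~\ref{lem:palette}: this is legitimate because every equality gadget we inserted has both endpoints of the same colour, and — since all the palette gadgets we inserted have the form $P(p_A,p_B,v)$ with $v$ coloured $1$ or $2$ — each palette gadget has two of its three endpoints sharing a colour. Moreover these extensions can be chosen so that no endpoint acquires a same-coloured neighbour inside its gadget (colour the building blocks of the gadget with colours avoiding those of the endpoints), so from here on I only have to count ``real'' (non-gadget) neighbours.

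What remains is to colour the grid sets $C_{i,j}$ and the edge-gadget sets $H_j^t, L_j^t$ and checkers $c_j$, using only colours $1$ and $2$. For the grid: in each odd column $C_{i,j}$ colour an arbitrary $f(i)$ of its $n$ vertices with colour $1$ and the rest with colour $2$; in each even column colour $n-f(i)$ of them with colour $1$. Then $C_{i,j}\cup C_{i,j+1}$ always holds exactly $n$ vertices of each colour, so a backbone vertex $b_{i,j}^A$ (colour $1$) is adjacent to exactly $n$ colour-$1$ vertices here plus its $\D-n$ colour-$1$ budget vertices, i.e.\ to exactly $\D$ same-coloured vertices; symmetrically for $b_{i,j}^B$. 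For the edge part: for every edge $e$ of $G$ lying in the clique, colour all of $H_e^1\cup L_e^1\cup H_e^2\cup L_e^2$ with colour $1$ and $c_e$ with colour $2$; for every other edge use the reverse colouring.

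Then I would check deficiencies. As the clique has exactly ${k\choose2}$ edges, $c_U$ (colour $1$) has exactly ${k\choose2}$ colour-$2$ neighbours among the $c_j$, hence exactly $m-{k\choose2}=\D$ colour-$1$ neighbours. A checker $c_e$ of a clique edge has colour $2$ and exactly $\D$ colour-$2$ neighbours (its attached budget set), all its other neighbours being colour $1$; a checker of a non-clique edge has colour $1$ and a single same-coloured neighbour, namely $c_U$. For a transfer vertex, say $h_e^1$ where $e$ joins the $j_1$-th vertex of $V_{i_1}$ to the $j_2$-th vertex of $V_{i_2}$: it has colour $1$ and is adjacent to $H_e^1$ (of size $n-j_1$), to the odd column $C_{i_1,2e-1}$ (which carries $f(i_1)$ colour-$1$ vertices) and to $\D-n$ colour-$1$ budget vertices, so if $e$ is in the clique then $f(i_1)=j_1$ and the count is $(n-j_1)+j_1+(\D-n)=\D$, while if $e$ is not in the clique it is at most $f(i_1)+(\D-n)\le\D$. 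The vertices $l_e^1, h_e^2, l_e^2$ are handled identically, using that even columns carry $n-f(i)$ colour-$1$ vertices and that $|L_e^1|=j_1$, $|H_e^2|=n-j_2$, $|L_e^2|=j_2$. Finally, $p_A$ and $p_B$ have exactly $\D$ same-coloured neighbours (their private vertices $p^1_A,\dots,p^{\D}_A$, resp.\ $p^1_B,\dots,p^{\D}_B$), each grid vertex has at most five real neighbours (four backbone, one transfer), and each vertex of an $H_e^t$ or $L_e^t$ has at most two; all these are $\le\D$ (we may assume $\D\ge n$, as the construction's budget sets have size $\D-n$).

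The main obstacle, exactly as in Lemma~\ref{lem:fvsred1}, is getting this last bookkeeping precisely right: the sizes $n-j_1, j_1, n-j_2, j_2$ of the edge-gadget sets together with the alternation of colour counts along each row are tuned so that a transfer or checker vertex reaches its budget $\D$ exactly when its edge is in the clique and stays strictly below it otherwise, and one must in parallel check that the various attached budget sets leave exactly the intended residual budget ($n$ on backbone and transfer vertices, $0$ on checkers, and nothing beyond the private vertices on $p_A, p_B$).
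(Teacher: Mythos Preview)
Your proposal is correct and follows essentially the same approach as the paper's own proof: the paper also fixes the clique by a function $\sigma$ (your $f$), colours $p_A,p_B$ with $1,2$, propagates through the equality gadgets, colours $\sigma(i)$ vertices of $C_{i,j}$ with colour~$1$ in odd columns and $n-\sigma(i)$ in even columns, uses the same clique/non-clique colouring of the edge-gadget sets and checkers, and then performs the identical deficiency bookkeeping. Your write-up is in fact slightly more explicit than the paper's in noting that gadget interiors contribute no same-coloured neighbours to their endpoints and in bounding the degree of grid vertices.
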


\begin{lemma}\label{lem:eth2}

For any $\C\ge2$, if the graph $H(G,k,\C)$ described in the previous section
admits a $(\C,\D)$-coloring, then $G$ contains a $k$-clique.  \end{lemma}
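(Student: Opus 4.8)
The plan is to reverse-engineer a $k$-clique from a hypothetical $(\C,\D)$-coloring of $H$; the argument runs in parallel with that of Lemma~\ref{lem:fvsred2}, with the choice grid now playing the role of the choice vertices.

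First I would fix notation and record the structural consequences of the gadgets. By the palette part (steps~\ref{it1}--\ref{it5}) and Lemma~\ref{lem:eq}, every $p^i_A$ (resp.\ $p^i_B$) receives the colour of $p_A$ (resp.\ $p_B$); since $p_A$ is adjacent to its $\D$ equally-coloured copies $p^i_A$ and also to $p_B$, the colours of $p_A$ and $p_B$ must differ, and I name them $1$ and $2$. In particular $p_A$ then has \emph{exactly} $\D$ colour-$1$ neighbours, so every other neighbour of $p_A$ — notably the entire interior of each gadget $Q(p_A,\cdot)$ and $P(p_A,p_B,\cdot)$ — avoids colour $1$, and symmetrically every non-$p^i_B$ neighbour of $p_B$ avoids colour $2$. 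Consequently, by Lemmata~\ref{lem:eq} and~\ref{lem:palette}, every vertex carrying an equality gadget to $p_A$ (resp.\ $p_B$) has colour $1$ (resp.\ $2$), and every vertex carrying a palette gadget $P(p_A,p_B,\cdot)$ has colour $1$ or $2$ (for $\C=2$ this last point is trivial). Hence all backbone, transfer and checker vertices, and all vertices of every $C_{i,j}$, $H^t_j$, $L^t_j$, are coloured $1$ or $2$.

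Next I would decode the choice grid. Let $a_{i,j}$ be the number of colour-$1$ vertices of $C_{i,j}$. The backbone vertex $b^A_{i,j}$ (colour $1$) has $\D-n$ forced colour-$1$ neighbours in its attached independent set, is joined to all $2n$ vertices of $C_{i,j}\cup C_{i,j+1}$, and its gadget interiors avoid colour $1$; validity forces $a_{i,j}+a_{i,j+1}\le n$, and the symmetric bound from $b^B_{i,j}$ together with $|C_{i,j}\cup C_{i,j+1}|=2n$ gives $a_{i,j}+a_{i,j+1}=n$. Thus $a_{i,j}$ is constant on the odd columns and constant on the even columns of row $i$; put $f(i):=a_{i,1}$, so $a_{i,j}=f(i)$ for $j$ odd and $a_{i,j}=n-f(i)$ for $j$ even. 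The same degree count at the transfer vertices $h^1_j,l^1_j,h^2_j,l^2_j$ of the gadget of the $j$-th edge (joining vertex $j_1$ of $V_{i_1}$ to vertex $j_2$ of $V_{i_2}$) shows that $H^1_j$ contains at most $n-f(i_1)$ colour-$1$ vertices, $L^1_j$ at most $f(i_1)$, and likewise $H^2_j,L^2_j$ with $f(i_2)$.

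Then I would extract the clique. Call the $j$-th edge \emph{selected} if $c_j$ has colour $2$. If it is, then $c_j$ already has $\D$ colour-$2$ neighbours in its attached set, its neighbour $c_U$ has colour $1$, and its palette interior avoids colour $2$, so all $2n$ vertices of $H^1_j\cup L^1_j\cup H^2_j\cup L^2_j$ must have colour $1$; comparing with the bounds above and the sizes $|H^1_j|=n-j_1$, $|L^1_j|=j_1$ (and the analogues for $i_2$) forces $f(i_1)=j_1$ and $f(i_2)=j_2$. Meanwhile $c_U$ has colour $1$, is adjacent to all $m$ checkers, and its equality interior avoids colour $1$, so at most $\D=m-{k\choose 2}$ of the $c_j$ have colour $1$; hence at least ${k\choose 2}$ edges are selected. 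Since a selected edge between $V_{i_1}$ and $V_{i_2}$ must join the $f(i_1)$-th vertex of $V_{i_1}$ to the $f(i_2)$-th vertex of $V_{i_2}$ and $G$ is simple, at most one edge is selected per pair $\{i_1,i_2\}$; therefore exactly ${k\choose 2}$ are selected, one per pair, and the vertices $\{\,f(i)\text{-th vertex of }V_i : i\in\{1,\dots,k\}\,\}$ form a $k$-clique in $G$. The step I expect to be the main obstacle is the degree bookkeeping in these last two paragraphs: one must check that every ``working'' vertex spends its deficiency budget $\D$ exactly as intended and that no gadget interior leaks an extra same-coloured neighbour — which is exactly what the observation that $p_A$ and $p_B$ have no spare budget is there to guarantee.
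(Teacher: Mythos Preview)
Your proposal is correct and follows essentially the same route as the paper's proof: fix the palette colours, propagate the count of colour-$1$ vertices through each row of the choice grid via the backbone vertices, use $c_U$ to force at least ${k\choose 2}$ selected edges, and then use the transfer vertices $h_j^t,l_j^t$ to pin the endpoints of each selected edge to the decoded vertices $f(i)$. The only notable difference is that you explicitly argue why the interiors of the equality and palette gadgets cannot contribute a same-coloured neighbour (via the observation that $p_A$ and $p_B$ have already exhausted their deficiency on the $p^i_A,p^i_B$), a point the paper's proof uses implicitly but does not spell out; this extra care is appropriate and makes your degree bookkeeping airtight.
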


\begin{lemma}\label{lem:eth3}

For the graph $H(G,k,\C)$ described in the previous section $\pw(H)=O(k+\C)$.

\end{lemma}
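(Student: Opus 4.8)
The plan is to first eliminate the equality and palette gadgets using Lemma~\ref{lem:eq-pw}, and then exhibit an explicit path decomposition of the graph that remains. Let $H'$ be obtained from $H$ by repeatedly picking a copy of a $Q$- or $P$-gadget, deleting all of its internal vertices, and turning its endpoints into a clique; by Lemma~\ref{lem:eq-pw} this gives $\pw(H)\le\pw(H')+\C$, so it suffices to show $\pw(H')=O(k)$. In $H'$ every former gadget of the form $Q(p_A,v)$ or $Q(p_B,v)$ has collapsed to a single edge and every former $P(p_A,p_B,v)$ to a triangle on $\{p_A,p_B,v\}$, so the only vertices of $H'$ with ``long-range'' neighbourhoods are $p_A$ and $p_B$ (adjacent to almost every other vertex) together with $c_U$ (adjacent to every checker $c_j$).

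Set $M=\{p_A,p_B,c_U\}$. I would prove that $\pw(H'\setminus M)=O(k)$; since putting $M$ back into every bag increases the width by at most $3$, this gives $\pw(H')=O(k)$ and hence $\pw(H)=O(k+\C)$. Removing $M$ makes the graph genuinely ``linear'' along the $2m$ columns: the vertices $p^i_j$ become isolated; each budget independent set becomes a set of pendant vertices attached to a single backbone vertex, a single transfer vertex ($h^1_j,l^1_j,h^2_j,l^2_j$), or a single checker $c_j$; each vertex of a column $C_{i,j}$ keeps at most five neighbours (the at most four backbone vertices touching its column, plus possibly one transfer vertex when the column belongs to an edge gadget); each vertex of $H^1_j,L^1_j,H^2_j,L^2_j$ keeps exactly two neighbours (its transfer vertex and the checker $c_j$); and every remaining ``hub'' vertex (a backbone vertex, a transfer vertex, or a checker) is adjacent only to a constant number of such independent sets.

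To build the path decomposition I would sweep through the edges $j=1,\dots,m$ of $G$ --- equivalently through the column pairs $(2j-1,2j)$ --- processing all $k$ rows in lockstep. While handling edge $e_j$ (joining rows $i_1$ and $i_2$) the current bag consists of: (i) for every row $i$, the at most four backbone vertices of the column currently being traversed in row $i$, that is $O(k)$ vertices in total; (ii) the at most five hub vertices $h^1_j,l^1_j,h^2_j,l^2_j,c_j$ of the gadget of $e_j$; and (iii) one vertex currently being introduced from a large independent set (a column $C_{i,j}$, one of $H^1_j,L^1_j,H^2_j,L^2_j$, or a budget set). Since every such independent-set vertex has all of its $O(1)$ neighbours in $H'\setminus M$ among the hub and backbone vertices kept resident during that phase, we can introduce it and then immediately forget it; thus these large sets are consumed one vertex at a time and the bag never exceeds $O(k)$, while the isolated vertices $p^i_j$ are appended as singleton bags. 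Restoring $M$ to all bags gives a path decomposition of $H'$ of width $O(k)$.

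The part needing care is the bookkeeping in (iii): the introductions must be ordered so that, throughout the phase for $e_j$, only the five hub vertices of that one gadget (plus the $O(k)$ backbone frontier) are resident, and each large independent set is processed precisely while its unique ``owner'' hub is in the bag, so that no two such sets are open simultaneously. This is possible because the gadget of $e_j$ interacts only with columns $2j-1$ and $2j$ of the two rows $i_1,i_2$: its hub vertices therefore never have to coexist with a large independent set other than the one currently being consumed, nor with the hub vertices of any other edge gadget.
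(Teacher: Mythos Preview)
Your proposal is correct and takes essentially the same approach as the paper: invoke Lemma~\ref{lem:eq-pw}, delete $\{p_A,p_B,c_U\}$ (to be restored in every bag), and then sweep left-to-right through the $m$ column pairs, keeping in the current bag the $O(k)$ backbone vertices together with the five hub vertices $h^1_j,l^1_j,h^2_j,l^2_j,c_j$ of the active edge, while introducing and immediately forgetting the vertices of the large independent sets one at a time. The only cosmetic difference is that the paper first strips off all leaves and isolated vertices (thereby removing the budget sets and the $p^i_j$ in one stroke) before building the skeleton path decomposition, whereas you keep them around and consume them during the sweep; both arrive at a width-$O(k)$ decomposition of $H'$ and hence $\pw(H)=O(k+\C)$.
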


The proof of Theorem \ref{thm:eth} now follows directly from Lemmata
\ref{lem:eth1},\ref{lem:eth2},\ref{lem:eth3}.

\section{Exact Algorithms for Treewidth and Other Parameters}\label{sec:exact}

In this section we present several exact algorithms for \DC. Theorem
\ref{thm:tw-exact} gives a treewidth-based algorithm which can be obtained
using standard techniques. Essentially the same algorithm was already sketched
in \cite{BelmonteLM17}, but we give another version here for the sake of
completeness and because it is a building block for the approximation algorithm
of Theorem \ref{thm:tw-approx1}.  Theorem \ref{thm:alg-fvs} uses a win/win
argument to show that the problem is FPT parameterized by $\fvs$ when $\C\neq
2$ and therefore explains why the reduction presented in Section
\ref{sec:whard} only works for $2$ colors.  Theorem \ref{thm:alg-vc} uses a
similar argument to show that the problem is FPT parameterized by $\vc$ (for
any $\C$). 

\begin{theorem}\label{thm:tw-exact}

There is an algorithm which, given a graph $G=(V,E)$, parameters $\C,\D$, and a
tree decomposition of $G$ of width $\tw$, decides if $G$ admits a
$(\C,\D)$-coloring in time $(\C\D)^{O(\tw)}n^{O(1)}$.

\end{theorem}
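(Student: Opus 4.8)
The plan is a standard leaf-to-root dynamic program over a nice tree decomposition. I would first transform the given decomposition of width $\tw$ into a nice one in polynomial time (with \emph{leaf}, \emph{introduce vertex}, \emph{introduce edge}, \emph{forget} and \emph{join} nodes, and $O(\tw\cdot n)$ nodes in total), without increasing the width. For a node $t$ let $G_t$ denote the graph on the vertices appearing in the subtree of $t$, whose edge set consists of the edges introduced at an introduce-edge node of that subtree. A \emph{signature} of $t$ is a function that assigns to each $v\in X_t$ a pair $\bigl(c(v),d(v)\bigr)$ with $c(v)\in\{1,\dots,\C\}$ (a color) and $d(v)\in\{0,1,\dots,\D\}$; there are at most $\bigl(\C(\D+1)\bigr)^{\tw+1}$ signatures. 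The DP table $D_t$ stores, for each signature $\sigma$, a bit that is $1$ iff there is a $\C$-coloring $\phi$ of $V(G_t)$ such that (i) $\phi$ restricted to $X_t$ equals $c$; (ii) in $G_t$, every vertex of $V(G_t)\setminus X_t$ has at most $\D$ neighbors of its own color; and (iii) in $G_t$, every $v\in X_t$ has exactly $d(v)$ neighbors of color $c(v)$.

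Next I would spell out the transitions, all of which are immediate. A leaf has empty bag, and its single empty signature is feasible. At an introduce-vertex node adding $v$, each feasible child signature is extended by picking an arbitrary color for $v$ and setting $d(v)=0$. At an introduce-edge node for $\{u,v\}$, a child signature survives iff, after adding $1$ to both $d(u)$ and $d(v)$ in case $c(u)=c(v)$, both values are still $\le\D$ (otherwise the partial coloring cannot be completed and is discarded). At a forget node removing $v$, $D_t(\sigma)=\bigvee D_{t'}(\sigma')$ over all feasible child signatures $\sigma'$ that agree with $\sigma$ on $X_t$; this is sound because, by the properties of a nice tree decomposition with introduce-edge nodes, every edge incident to $v$ has already been introduced in the subtree, so $d(v)$ already equals the same-color degree of $v$ in $G$ and the constraint $d(v)\le\D$ has been enforced. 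At a join node with children $t_1,t_2$ (of the same bag), $\sigma$ is feasible iff there are feasible $\sigma_1$ at $t_1$ and $\sigma_2$ at $t_2$ agreeing with $\sigma$ on colors and with $d_\sigma(v)=d_{\sigma_1}(v)+d_{\sigma_2}(v)\le\D$ for all $v\in X_t$; adding the counters is correct since $V(G_{t_1})$ and $V(G_{t_2})$ intersect exactly in $X_t$ and $G_{t_1},G_{t_2}$ share no edges. Finally, $G$ admits a $(\C,\D)$-coloring iff the (unique, empty) signature is feasible at the root.

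Correctness is a routine induction on the tree decomposition proving that $D_t$ agrees with the semantic condition above; the running time is $\bigl(\C(\D+1)\bigr)^{\tw+1}\cdot n^{O(1)}$ per introduce/forget node and $\bigl(\C(\D+1)\bigr)^{2(\tw+1)}\cdot n^{O(1)}$ per join node (naively iterating over pairs of child signatures), which over the $O(\tw\cdot n)$ nodes yields total time $(\C\D)^{O(\tw)}n^{O(1)}$. I do not expect any real obstacle: the only point that needs care is the bookkeeping of the deficiency counters, namely guaranteeing that each edge contributes to the counter of each of its endpoints exactly once over the whole execution, so that the join step may simply add counters and the forget step may validate a vertex without over- or undercounting; using introduce-edge nodes is precisely what makes this transparent.
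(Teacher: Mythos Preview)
Your proposal is correct and follows essentially the same approach as the paper: a standard dynamic program over a nice tree decomposition storing, for each bag vertex, its color and a same-color-neighbor counter bounded by $\D$. The only (cosmetic) difference is that you use introduce-edge nodes and bump the counters there, whereas the paper uses the older four-node nice decomposition and accounts for edges at the Forget step; both yield the same table size $(\C(\D{+}1))^{O(\tw)}$ and running time.
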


\begin{theorem}\label{thm:alg-fvs}

\DC\ is FPT parameterized by $\fvs$ for $\C\neq 2$. More precisely, there
exists an algorithm which given a graph $G=(V,E)$, parameters $\C, \D$, with
$\C\neq2$, and a feedback vertex set of $G$ of size $\fvs$, decides if $G$
admits a $(\C,\D)$-coloring in time $\fvs^{O(\fvs)}n^{O(1)}$.

\end{theorem}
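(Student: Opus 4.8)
The plan is to use a win/win argument based on the size of the feedback vertex set $F$ relative to $\C$ and $\D$. Let $F$ be the given feedback vertex set, so $G \setminus F$ is a forest, and write $f = |F|$. First I would handle the case $\C \ge 3$ (the case $\C = 1$ is trivial: just check that $\Delta(G) \le \D$). Since we may assume $\C \ge 3$, a forest always admits a $(\C, 0)$-coloring, indeed a $(2,0)$-coloring, so the ``difficult'' part of the instance is concentrated around $F$. The key dichotomy I would set up is between the case where $\D$ is large compared to $f$ and the case where $\D$ is small. More precisely: if $\D \ge f$ (or some similar threshold like $\D \ge 2f$), then I claim the instance is always a \textbf{yes}-instance, and we answer directly. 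Otherwise $\D < f$, in which case $\C + \D = O(f)$, and we can run the algorithm of Theorem \ref{thm:tw-exact} on the whole graph: since $\tw(G) \le \fvs(G) + 1 = f+1$ by Lemma \ref{lem:basic}, the running time $(\C\D)^{O(\tw)} n^{O(1)}$ becomes $f^{O(f)} n^{O(1)}$, as desired.

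The heart of the argument is therefore the claim that when $\D$ is sufficiently large relative to $f$ and $\C \ge 3$, the graph always admits a $(\C,\D)$-coloring. To prove this I would proceed as follows. Take the forest $G \setminus F$ and $2$-color it properly with colors $\{1,2\}$; this uses at most $2 \le \C$ colors and every vertex outside $F$ has $0$ same-colored neighbors in $G \setminus F$. Now I need to extend this to $F$. The idea is to give the vertices of $F$ a third color, color $3$ (available since $\C \ge 3$), so that within $F$ each vertex has at most $f - 1 \le \D$ same-colored neighbors — fine — but the real issue is the neighbors of $F$-vertices that lie in the forest: a vertex $v \in F$ may have many neighbors in $G\setminus F$, and those got colors $1$ or $2$, so coloring $F$ with color $3$ causes no conflict there for $v$ itself; but each forest vertex $u \notin F$ might now see several $F$-neighbors, all of color $3$ — that is fine for $u$ only if $u$ has color $1$ or $2$, which it does. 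Wait — the genuine subtlety is the reverse: an $F$-vertex of color $3$ only conflicts with other color-$3$ vertices, all in $F$, so it has $\le f-1 \le \D$ conflicts; a forest vertex $u$ of color $1$ or $2$ conflicts only with forest neighbors of the same color, of which it has $0$ after the proper $2$-coloring, plus possibly $F$-neighbors, but those have color $3$, so $0$ conflicts there too. Hence the coloring is valid whenever $\D \ge f - 1$. So the correct threshold is $\D \ge f-1$: if $\D \ge f - 1$ and $\C \ge 3$, output \textbf{yes}; otherwise $\D \le f - 2$, so $\C\D \le$ a function of $f$ when combined with...

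Here I must be slightly more careful, since $\C$ is unbounded in general. If $\C \ge n$ the answer is trivially \textbf{yes} (put every vertex in its own class), so we may assume $\C \le n$; but that does not bound $\C$ by $f$. The fix: when $\C \ge 3$ is large, observe that a $(3, \D)$-coloring is in particular a $(\C,\D)$-coloring, so it suffices to decide the $\C = 3$ case, and we may always replace $\C$ by $\min(\C, 3)$ when $\C \ge 3$. Thus the only regime where we actually invoke Theorem \ref{thm:tw-exact} is $\C = 3$ and $\D \le f - 2$, giving $\C \D = O(f)$ and running time $(\C\D)^{O(\tw)} n^{O(1)} = f^{O(f)} n^{O(1)}$ as promised. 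I would organize the final write-up as: (i) reduce to $\C \in \{1,3\}$; (ii) dispose of $\C = 1$; (iii) for $\C = 3$, if $\D \ge f - 1$ answer yes via the explicit $3$-coloring above; (iv) else run Theorem \ref{thm:tw-exact}.

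The main obstacle I anticipate is getting the explicit coloring argument in step (iii) exactly right, in particular pinning down the precise threshold on $\D$ and making sure the budget accounting for vertices in $F$ (which can see both forest neighbors and other $F$ neighbors) is airtight; the rest is bookkeeping with Lemma \ref{lem:basic} and Theorem \ref{thm:tw-exact}. It is also worth double-checking that the reduction ``replace $\C$ by $3$'' is valid — it is, precisely because a coloring with fewer colors and the same $\D$ is a fortiori a coloring with more colors and the same $\D$, and this is exactly the monotonicity that fails for $\C = 2$, which is why the theorem excludes that case.
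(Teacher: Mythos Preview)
Your overall strategy---a win/win on $\D$ versus $f$, with an explicit $3$-coloring when $\D$ is large and Theorem~\ref{thm:tw-exact} otherwise---matches the paper's, and your explicit coloring in step (iii) is correct (your threshold $\D \ge f-1$ is in fact slightly sharper than the paper's $\D > \fvs$). However, step (i), ``replace $\C$ by $3$'', is a genuine error. The monotonicity you invoke goes only one way: if the $(3,\D)$-instance is \textsc{Yes}, so is the $(\C,\D)$-instance for any $\C \ge 3$. To \emph{decide} the original instance you need the converse as well, and it is false. Concretely, take $G = K_4$, so $f = 2$, and set $\D = 0$; then $\D \le f - 2$, so you are in regime (iv), you replace $\C$ by $3$, and Theorem~\ref{thm:tw-exact} correctly reports that $K_4$ has no $(3,0)$-coloring---but for the original input $\C = 4$ the answer is \textsc{Yes}. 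Your algorithm as written is therefore incorrect on such instances.

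The paper handles the unbounded-$\C$ issue differently and correctly: if $\C \ge f + 2$, answer \textsc{Yes} immediately by giving each vertex of $F$ a distinct color and properly $2$-coloring the forest. Otherwise $\C \le f + 1$, so in the small-$\D$ branch both $\C$ and $\D$ are $O(f)$ and $(\C\D)^{O(\tw)} = f^{O(f)}$ as required. Your closing remark that ``the monotonicity fails for $\C = 2$'' is also a misdiagnosis: monotonicity in $\C$ holds for every $\C$; what is special about $\C = 2$ is only that the explicit always-\textsc{Yes} coloring needs a third color for $F$.
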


\begin{theorem}\label{thm:alg-vc}

\DC\ is FPT parameterized by $\vc$. More precisely, there exists an algorithm
which, given a graph $G=(V,E)$, parameters $\C,\D$, and a vertex cover of $G$
of size $\vc$, decides if $G$ admits a $(\C,\D)$-coloring in time
$\vc^{O(\vc)}n^{O(1)}$.

\end{theorem}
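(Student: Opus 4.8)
The plan is to follow the same win/win philosophy used for $\fvs$ in Theorem~\ref{thm:alg-fvs}, but exploiting the stronger structure of a vertex cover. Let $X$ be the given vertex cover, so $|X|=\vc$ and $I=V\setminus X$ is an independent set. The key observation is that every vertex of $I$ has \emph{all} of its neighbors inside $X$, so once we fix the colors of the vertices of $X$, the only remaining constraint on a vertex $u\in I$ is its own deficiency: $u$ may be placed in color class $c$ precisely when the number of neighbors of $u$ already colored $c$ is at most $\D$. In particular, two vertices $u,u'\in I$ with the same neighborhood in $X$ and, more to the point, with the same ``color profile'' (the multiset counting how many $X$-neighbors get each color) are interchangeable. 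The algorithm will therefore branch over the colorings of $X$ and then solve the extension problem on $I$ efficiently.

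The steps, in order. First I would enumerate all $\C^{\vc}$ colorings of $X$; but since $\C$ is part of the input and unbounded, this is not directly FPT, so I would first argue that we may assume $\C\le \vc+1$: if $\C> \vc+1$ then we can color each vertex of $X$ with its own color and put all of $I$ into one additional color class (an independent set has maximum degree $0$), giving a trivial yes-instance. Hence $\C\le \vc+1$ and enumerating colorings of $X$ costs $(\vc+1)^{\vc}=\vc^{O(\vc)}$. Second, for a fixed coloring of $X$, check that it is internally valid (each vertex of $X$ has at most $\D$ same-colored neighbors within $X\cup$ the part of $I$ we will add --- handled below). Third, decide whether the coloring of $X$ extends to $I$: each $u\in I$ independently has a set $A(u)\subseteq\{1,\dots,\C\}$ of colors it is \emph{allowed} to take based on its $X$-neighbors' colors, but we must also respect the constraint that a vertex $x\in X$ does not exceed $\D$ same-colored neighbors \emph{counting neighbors in $I$}. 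This is the one genuinely global constraint, and I would model it as a flow / matching problem: we have a bipartite-like assignment where each $u\in I$ must be sent to a color in $A(u)$, and each pair $(x,c)$ with $x\in X$ has a capacity equal to $\D$ minus the number of same-colored $X$-neighbors of $x$, limiting how many neighbors of $x$ in $I$ may receive color $c$. Feasibility of this assignment is exactly a bipartite $b$-matching / integral flow instance of polynomial size, solvable in $n^{O(1)}$ time. Combining, the total running time is $\vc^{O(\vc)}n^{O(1)}$.

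The main obstacle is step three: naively each $u\in I$ has $\C$ choices and the choices interact through the degree caps at vertices of $X$, so one cannot simply pick a best color for each $u$ greedily. The resolution is that, with the colors of $X$ fixed, the allowed-color sets $A(u)$ are determined and static, and the only coupling is the per-$(x,c)$ capacity; this is precisely the structure of a transportation/flow problem, for which polynomial algorithms exist. One must be a little careful that ``number of same-colored neighbors'' for a vertex $x\in X$ correctly splits into a fixed part (neighbors in $X$) plus a variable part (neighbors in $I$), and that the fixed part already does not exceed $\D$, otherwise the current coloring of $X$ is rejected outright. With that bookkeeping in place the correctness is routine, and the running time bound follows from the bound $\C\le\vc+1$ together with the polynomial-time flow subroutine.
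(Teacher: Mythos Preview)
Your high–level plan (bound $\C\le\vc+1$, then branch over colorings of the vertex cover $X$ and extend to the independent set $I$) is natural, but Step~3 contains a real gap. You assert that, once the coloring of $X$ is fixed, deciding whether it extends to $I$ is a bipartite $b$-matching / flow instance. This is not correct. When a vertex $u\in I$ is assigned a color $c$, it simultaneously increases the same-color degree of \emph{every} $x\in N(u)$ with $c(x)=c$; one ``assignment'' may therefore consume capacity at several $x$-nodes at once. Standard single-commodity flow cannot model this, since a single unit of flow leaving $u$ can be routed to only one capacity node. Concretely, the constraints you need are, for every $x\in X$,
\[
\bigl|\{\,u\in N(x)\cap I:\ f(u)=c(x)\,\}\bigr|\ \le\ \D - d_X(x),
\]
together with $f(u)\in A(u)$ for all $u$. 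These are general threshold constraints on arbitrary subsets of $I$, and without a bound on $|X|$ this extension problem is NP-hard (one can encode, e.g., \textsc{NAE-3SAT} by letting each clause contribute two $X$-vertices, one per color, each with capacity~$2$). So the polynomial-time flow subroutine you invoke does not exist in the form stated.

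The paper sidesteps this difficulty entirely with a second win/win, this time on $\D$: if $\D>\vc$, colour all of $X$ with one colour and all of $I$ with a second; every $x\in X$ then has at most $\vc-1<\D$ same-coloured neighbours and every $u\in I$ has none, so the answer is trivially yes. If instead $\D\le\vc$ (and, as you already argued, $\C\le\vc+1$), one simply runs the treewidth dynamic program of Theorem~\ref{thm:tw-exact}, which takes time $(\C\D)^{O(\tw)}n^{O(1)}\le\vc^{O(\vc)}n^{O(1)}$ since $\tw\le\vc$. This two-line argument replaces your Step~3 completely. If you want to keep your branching framework, the clean fix is to insert the same observation: after fixing a colouring of $X$, either $\D\ge\vc-1$ and any assignment of $I$ to a colour unused in $X$ works, or $\D<\vc-1$ and you can fall back on the exact DP instead of flow.
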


\section{Approximation Algorithms and Lower Bounds}\label{sec:approx}

In this section we present two approximation algorithms which run in FPT time
parameterized by treewidth. The first algorithm (Theorem \ref{thm:tw-approx1})
is an \emph{FPT approximation scheme} which, given a desired number of colors
$\C$, is able to approximate the minimum feasible value of $\D$ for this value
of $\C$ arbitrarily well (that is, within a factor $(1+\epsilon)$). The second
algorithm, which also runs in FPT time parameterized by treewidth, given a
desired value for $\D$, produces a solution that approximates the minimum
number of colors $\C$ within a factor of $2$. 

These results raise the question of whether it is possible to approximate $\C$
as well as we can approximate $\D$, that is, whether there exists an algorithm
which comes within a factor $(1+\epsilon)$ (rather than $2$) of the optimal
number of colors. As a first response, one could observe that such an algorithm
probably cannot exist, because the problem is already hard when $\C=2$, and
therefore an FPT algorithm with multiplicative error less than $3/2$ would
imply that FPT=W[1]. However, this does not satisfactorily settle the problem
as it does not rule out an algorithm that achieves a much better approximation
ratio, if we allow it to also have a small additive error in the number of
colors. Indeed, as we observe in Corollary \ref{thm:fvs-approx}, it is possible
to obtain an algorithm which runs in FPT time parameterized by feedback vertex
set and has an additive error of only $1$, as a consequence of the fact that
the problem is FPT for $\C\ge 3$. This poses the question of whether we can
design an FPT algorithm parameterized by treewidth which, given a
$(\C,\D)$-colorable graph, produces a coloring with $\rho\C+O(1)$ colors, for
$\rho<3/2$.

In the second part of this section we settle this question negatively by
showing, using a recursive construction that builds on Theorem \ref{thm:fvs1},
that such an algorithm cannot exist. More precisely, we present a
gap-introducing version of our reduction: the ratio between the number of
colors needed to color Yes and No instances remains $3/2$, even as the given
$\C$ increases.  This shows that the ``correct'' multiplicative approximation
ratio for this problem really lies somewhere between $3/2$ and $2$, or in other
words, that there are significant barriers impeding the design of a better than
$3/2$ FPT approximation for $\C$, beyond the simple fact that $2$-coloring is
hard.

\subsection{Approximation Algorithms}

Our first approximation algorithm, which is an approximation scheme for the
optimal value of $\D$, relies on a method introduced in \cite{Lampis14} (see
also \cite{AngelBEL16}), and a theorem of \cite{BodlaenderH98}. The high-level
idea is the following: intuitively, the obstacle that stops us from obtaining
an FPT running time with the dynamic programming algorithm of Theorem
\ref{thm:tw-exact} is that the dynamic program is forced to store some
potentially large values for each vertex. More specifically, to characterize a
partial solution we need to remember not just the color of each vertex in a
bag, but also how many neighbors with the same color this vertex has already
seen (which is a value that can go up to $\D$). The main trick now is to
``round'' these values in order to decrease the number of possible states a
vertex can be found in. To do this, we select an appropriate value $\delta$
(polynomial in $\frac{\epsilon}{\log n}$), and try to replace every value that
the dynamic program would calculate with the next higher integer power of
$(1+\delta)$. This has the advantage of limiting the number of possible values
from $\D$ to $\log_{(1+\delta)}\D \approx \frac{\log \D}{\delta}$, and this is
sufficient to obtain the promised running time. The problem is now that the
rounding we applied introduces an approximation error, which is initially a
factor of at most $(1+\delta)$, but may increase each time we apply an
arithmetic operation as part of the algorithm. To show that this error does not
get out of control we show that in any bag of the tree all values stored are
within a factor $(1+\delta)^h$ of the correct ones, where $h$ is the height of
the bag. We then use a theorem of Bodlaender and Hagerup \cite{BodlaenderH98} which states that any
tree decomposition can be balanced in such a way that its height is at most
$O(\log n)$, and as a result we obtain that all values are sufficiently close
to being correct.

The second algorithm we present in this section (Theorem \ref{thm:tw-approx2})
uses the approximation scheme for $\D$ to obtain an FPT $2$-approximation for
$\C$. The idea here is that, given a $(\C,\D)$-colorable graph, we first
produce a $(\C,(1+\epsilon)\D)$-coloring using the algorithm of Theorem
\ref{thm:tw-approx1}, and then apply a procedure which uses $2$ colors for each
color class of this solution but manages to divide by two the number of
neighbors with the same color of every vertex. This is achieved with a simple
polynomial-time local search procedure.

\begin{theorem}\label{thm:bodlaender} \cite{BodlaenderH98} There is a
polynomial-time algorithm which, given a graph $G=(V,E)$ and a tree
decomposition of $G$ of width $\tw$, produces a tree decomposition of $G$ of
width at most $3\tw+2$ and height $O(\log n)$.  \end{theorem}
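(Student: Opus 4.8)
The plan is to prove this purely combinatorial balancing statement directly (it is the theorem of Bodlaender and Hagerup~\cite{BodlaenderH98}). First I would preprocess the given width-$\tw$ tree decomposition into a \emph{rooted binary} one -- every node has at most two children -- with only $O(n)$ nodes and without increasing the width; this is a standard manipulation (introduce copies of a bag to lower the degree of a node, and contract an edge between two bags, one of which contains the other, to remove redundant nodes). From now on I work with a rooted binary tree decomposition $(T,\beta)$ of width $\tw$ with $|V(T)|=O(n)$, so that $\log|V(T)| = O(\log n)$.

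The heart of the argument is a divide-and-conquer recursion on $T$. The only combinatorial fact I need is that every tree on $N$ nodes admits a \emph{balanced separator}: a node, or (since $T$ is binary) an edge, whose removal leaves every component with at most $2N/3$ nodes. I would process ``pieces'' of $T$, where a piece is a connected subtree $T_P$ together with a \emph{boundary} $B_P\subseteq V(G)$ consisting of those vertices that appear both in $T_P$ and in parts of $T$ already cut away, and which must therefore be placed in every bag of the decomposition eventually produced for $T_P$. The invariant I maintain is that $B_P$ is always covered by at most two bags of the original decomposition, so $|B_P|\le 2(\tw+1)$; since every bag of the decomposition produced for $T_P$ has the form $\beta(t)\cup B_P$ for some $t\in T_P$, all bags have size at most $3(\tw+1)$, i.e. width at most $3\tw+2$.

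The recursion itself: given a piece $(T_P,B_P)$ with $|V(T_P)|=N$, if $N$ is below a fixed constant, output $T_P$ with $B_P$ added to every bag (constant height); otherwise pick a balanced separator $x$, split $T_P$ into the at most three pieces obtained by deleting $x$ (or the two pieces obtained by deleting a balanced edge), hand the separator on $x$ down as part of the new boundary of each sub-piece, recurse on each, and glue the results as the children of a single new node whose bag is $B_P\cup\beta(x)$, of size at most $3(\tw+1)$. Each sub-piece has at most $2N/3$ nodes, so the recursion depth is $O(\log n)$, and since the branching at each glue step is bounded by a constant, gluing contributes only $O(1)$ to the height per level; hence the final decomposition has height $O(\log n)$. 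Checking the three tree-decomposition axioms is then routine: every original edge and vertex remains covered because we only ever add vertices to bags, and connectivity across the interface of a cut holds precisely because the shared vertices lie in the separator, which the invariant copies into every bag of each side.

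The main obstacle -- and the place where the constant $3$ (rather than something larger) is won -- is the bookkeeping that keeps $|B_P|\le 2(\tw+1)$ while every cut stays balanced. When a piece already carries two boundary bags, a naive balanced cut could force a sub-piece to inherit both of them \emph{plus} the new separator bag, giving size $3(\tw+1)$; to avoid this one must choose the balanced separator to lie on the path of $T_P$ between the two attachment points of the current boundary (resorting to a node-separator instead of an edge-separator when no balanced edge on that path exists), so that each resulting sub-piece inherits at most one old boundary bag together with the single new one. Carrying out this case analysis, and verifying that such a choice is always available, is the only genuinely delicate point; once it is in place the polynomial running time is immediate, as the recursion tree has $O(n)$ leaves and each step does polynomial work.
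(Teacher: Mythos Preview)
The paper does not give its own proof of this statement: Theorem~\ref{thm:bodlaender} is quoted from Bodlaender and Hagerup~\cite{BodlaenderH98} and invoked as a black box in the proof of Theorem~\ref{thm:tw-approx1}. So there is nothing in the paper to compare your argument against.

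That said, your sketch is a faithful outline of the original Bodlaender--Hagerup construction: preprocess to a rooted binary decomposition with $O(n)$ nodes, recurse via a $1/3$--$2/3$ separator, and maintain the invariant that the boundary of each piece is contained in the union of at most two original bags, which is exactly what yields the width bound $3(\tw+1)-1 = 3\tw+2$. You have also correctly identified the one delicate point, namely that when a piece already carries two boundary bags the separator must be chosen on the tree path between their attachment points so that each child piece inherits only one old boundary bag plus the new one; this is precisely the case analysis in~\cite{BodlaenderH98}. For the purposes of this paper nothing more is needed, since the theorem is only used as a tool.
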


\begin{theorem}\label{thm:tw-approx1}

There is an algorithm which, given a graph $G=(V,E)$, parameters $\C,\D$, a
tree decomposition of $G$ of width $\tw$, and an error parameter $\epsilon>0$,
either returns a $(\C,(1+\epsilon)\D)$-coloring of $G$, or correctly concludes
that $G$ does not admit a $(\C,\D)$-coloring, in time
$(\tw/\epsilon)^{O(\tw)}n^{O(1)}$.

\end{theorem}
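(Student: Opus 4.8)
The plan is to adapt the exact dynamic programming algorithm of Theorem~\ref{thm:tw-exact} using the rounding technique of \cite{Lampis14}. Recall that this algorithm processes a nice tree decomposition bottom-up and stores, for every node, a table indexed by \emph{signatures} of the corresponding bag: a color in $\{1,\ldots,\C\}$ for each bag vertex, together with, for each bag vertex $v$, the number of already-introduced neighbors of $v$ that were given the same color as $v$ (a value in $\{0,\ldots,\D\}$). The only reason this is not FPT in $\tw$ is that the second coordinate has $\D+1$ possible values. First I would apply Theorem~\ref{thm:bodlaender} to replace the input decomposition by one of width $O(\tw)$ and height $h=O(\log n)$. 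Then, fixing $\delta>0$ with $(1+\delta)^{2h}\le 1+\epsilon$ (hence $\delta=\Theta(\epsilon/\log n)$), I would restrict the signatures so that each deficiency coordinate is either $0$ or an integer power of $(1+\delta)$: whenever a transition would produce a deficiency $x\ge 1$ for some vertex, the algorithm records instead the least power of $(1+\delta)$ that is at least $x$, discarding the partial solution if this value would exceed $(1+\epsilon)\D$. Since the number of admissible deficiency values is now $O(\log_{1+\delta}\D)=O(\epsilon^{-1}\log n\log\D)$, and since we may assume $\C\le\tw$ and $\D<n$ (otherwise the instance is trivially colorable, by Lemma~\ref{lem:basic}), every table has at most $(\tw\epsilon^{-1}\log n)^{O(\tw)}$ entries; the standard case analysis of \cite{Lampis14} (distinguishing whether $\tw$ is below or above $\sqrt{\log n}$) absorbs the resulting $(\log n)^{O(\tw)}$ factor into $\tw^{O(\tw)}n^{O(1)}$, which gives the claimed $(\tw/\epsilon)^{O(\tw)}n^{O(1)}$ running time.

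The transitions themselves are exactly those of the exact algorithm --- colors are guessed at introduce nodes, a deficiency counter is incremented at an edge-introduce node whenever the two endpoints share a color, and two counters are added at a join node --- except that after each such arithmetic step the affected value is rounded up to the next power of $(1+\delta)$. The key structural fact to establish is an invariant tying the rounded table of each node to the exact one: for every node $t$ at height $h_t$ in the balanced decomposition and every coloring of the vertices introduced below $t$ that is the restriction of some valid $(\C,\D)$-coloring of $G$, the table of $t$ contains a signature that agrees with this coloring on the bag of $t$ and in which the recorded deficiency of each bag vertex $v$ lies between the true number of same-colored neighbors of $v$ introduced so far and $(1+\delta)^{h_t}$ times that number. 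The lower bound is immediate because we always round up; the upper bound follows by induction on $h_t$: at a join node the two child tables are, by induction, within a factor $(1+\delta)^{h_t-1}$ of the truth, so the sum of two counters is within the same factor, and the subsequent rounding adds one more factor of $(1+\delta)$; introduce, forget and leaf nodes are handled analogously and do not increase $h_t$. At the root, where $h_t=h$, the invariant guarantees that every recorded deficiency is within a factor $(1+\delta)^h\le 1+\epsilon$ of the true one.

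Correctness then follows in both directions. If $G$ admits a $(\C,\D)$-coloring, the signatures obtained by restricting it to the subtrees have recorded deficiencies at most $(1+\delta)^h\D\le(1+\epsilon)\D$, are therefore never discarded, and lead to an accepting signature at the root; backtracking through the tables in the usual way reconstructs an explicit coloring of $G$ whose \emph{actual} deficiencies are each at most the corresponding recorded value, hence at most $(1+\epsilon)\D$ --- a genuine $(\C,(1+\epsilon)\D)$-coloring. Conversely, any coloring the algorithm ever outputs is, by the same comparison between actual and recorded deficiencies and by the discarding threshold, a valid $(\C,(1+\epsilon)\D)$-coloring; so whenever the algorithm fails to output a coloring, $G$ has no $(\C,(1+\epsilon)\D)$-coloring, and a fortiori no $(\C,\D)$-coloring, and reporting the latter is correct. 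I expect the main difficulty to lie in the error-accumulation analysis of the second paragraph: one must state the invariant so that it composes cleanly across join nodes, and the sole purpose of first balancing the decomposition with Theorem~\ref{thm:bodlaender} is to cap the number of such compositions at $O(\log n)$ rather than $\Omega(n)$, so that the overall blow-up $(1+\delta)^{O(\log n)}$ can be kept below $1+\epsilon$ with a $\delta$ that is only polynomially small. The rest --- nice decompositions with edge-introduce nodes, the explicit transition formulas, and the backtracking step --- is a routine adaptation of the algorithm of Theorem~\ref{thm:tw-exact}.
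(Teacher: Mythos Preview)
Your proposal is correct and follows essentially the same approach as the paper: balance the decomposition via Theorem~\ref{thm:bodlaender}, restrict deficiency values to powers of $(1+\delta)$ with $\delta$ polynomial in $\epsilon/\log n$, maintain the two-sided invariant that recorded deficiencies lie between the true value and $(1+\delta)^{h_t}$ times it, and finish with the $\tw$ versus $\sqrt{\log n}$ win/win to absorb the polylogarithmic factor. The only cosmetic differences are that the paper takes $\delta=\epsilon/\log^2 n$ (a slightly more conservative choice than yours) and performs the increment-and-round step at Forget nodes rather than at edge-introduce nodes, but these are equivalent formulations of the same algorithm.
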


\begin{lemma}\label{lem:max-cut} There exists a polynomial-time algorithm
which, given a graph with maximum degree $\Delta$, produces a two-coloring of
that graph where all vertices have at most $\Delta/2$ neighbors of the same
color.\end{lemma}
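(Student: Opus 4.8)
The plan is to use the classic local-search argument for \textsc{Max-Cut}, but applied in a way that controls the \emph{monochromatic} degree of every vertex rather than the global cut size. First I would start with an arbitrary partition of the vertex set into two classes $V_1, V_2$. For a vertex $v$ in class $V_i$, call the neighbors of $v$ lying in $V_i$ its \emph{bad} neighbors and the neighbors in the other class its \emph{good} neighbors. The key observation is the following local improvement step: if some vertex $v$ has strictly more bad neighbors than good neighbors, move $v$ to the other class. This strictly decreases the number of monochromatic edges (the edges from $v$ to its former bad neighbors become bichromatic, and only the edges to its former good neighbors become monochromatic, and there are strictly fewer of the latter). Since the number of monochromatic edges is a non-negative integer that strictly decreases with every step, this process terminates in polynomially many steps.

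The second step is to verify that the resulting partition has the desired property. When the process halts, no vertex can be improved, so every vertex $v$ has at most as many bad neighbors as good neighbors; since bad plus good neighbors equals $\deg(v) \le \Delta$, the number of bad neighbors is at most $\deg(v)/2 \le \Delta/2$. But the number of bad neighbors of $v$ is exactly the number of neighbors of $v$ receiving the same color as $v$ in the two-coloring, which is what we wanted to bound.

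For the running time, I would note that each improvement step can be found and performed in polynomial time (scan all vertices, compute bad/good degrees, move one if improvable), and the total number of steps is at most $|E|$ since the monochromatic edge count starts below $|E|$ and strictly decreases. Hence the whole procedure runs in polynomial time. I do not expect any genuine obstacle here: the only mild point to be careful about is ensuring the improvement step is a \emph{strict} decrease, which it is because the improvement condition is strict, and that we use $\deg(v) \le \Delta$ rather than $\deg(v) = \Delta$, so the argument is uniform over graphs that are not regular. The statement as phrased asks only for an upper bound of $\Delta/2$ on the same-color degree, which this argument delivers exactly.
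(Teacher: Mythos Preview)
Your proposal is correct and follows essentially the same approach as the paper: both run the standard local-search procedure for \textsc{Max-Cut}, flipping a vertex whenever a strict majority of its neighbors share its color, and argue termination via the monotone decrease of the number of monochromatic edges (equivalently, increase of the cut). The only cosmetic difference is that the paper starts from the trivial all-one coloring while you start from an arbitrary partition, which is immaterial.
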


\begin{theorem}\label{thm:tw-approx2}

There is an algorithm which, given a graph $G=(V,E)$, parameters $\C,\D$, and a
tree decomposition of $G$ of width $\tw$, either returns a $(2\C,\D)$-coloring
of $G$, or correctly concludes that $G$ does not admit a $(\C,\D)$-coloring, in
time $(\tw)^{O(\tw)}n^{O(1)}$.

\end{theorem}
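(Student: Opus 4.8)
The plan is to combine the FPT approximation scheme for $\D$ (Theorem~\ref{thm:tw-approx1}) with the local-search splitting procedure of Lemma~\ref{lem:max-cut}. First I would invoke the algorithm of Theorem~\ref{thm:tw-approx1} on $G$ with the error parameter set to the constant $\epsilon = 1$. This runs in time $(\tw/1)^{O(\tw)}n^{O(1)} = \tw^{O(\tw)}n^{O(1)}$, which is within the claimed budget, and it either reports that $G$ admits no $(\C,\D)$-coloring --- in which case we output exactly that and stop --- or it returns a concrete $(\C,2\D)$-coloring of $G$, i.e.\ a partition of $V$ into color classes $W_1,\dots,W_\C$ such that each $G[W_i]$ has maximum degree at most $2\D$.

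Next I would process each class $W_i$ separately. The induced subgraph $G[W_i]$ has maximum degree at most $2\D$, so Lemma~\ref{lem:max-cut} applied to $G[W_i]$ produces in polynomial time a partition of $W_i$ into two parts $W_i^1, W_i^2$ such that every vertex of $W_i$ has at most $(2\D)/2 = \D$ neighbors inside its own part. Doing this for all $i\in\{1,\dots,\C\}$ yields a partition of $V$ into the $2\C$ classes $\{W_i^1, W_i^2 : 1\le i \le \C\}$. The point to verify is that this is a valid $(2\C,\D)$-coloring: take any vertex $v$, say $v\in W_i^b$; since $W_i^1, W_i^2$ refine $W_i$ and none of the new classes is shared across different $W_i$'s, the neighbors of $v$ receiving the same new color as $v$ are exactly the neighbors of $v$ lying in $W_i^b$, and by Lemma~\ref{lem:max-cut} there are at most $\D$ of them. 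Hence every vertex has at most $\D$ same-colored neighbors, as required.

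Finally I would bound the running time: the single call to the algorithm of Theorem~\ref{thm:tw-approx1} dominates and costs $\tw^{O(\tw)}n^{O(1)}$, while the at most $n$ applications of the polynomial-time procedure of Lemma~\ref{lem:max-cut} add only polynomial overhead, so the whole algorithm runs in $\tw^{O(\tw)}n^{O(1)}$ time. I do not expect a genuine obstacle here; the two things that need care are (i) choosing the error parameter so that $(1+\epsilon)\D$ is exactly $2\D$, which is why $\epsilon = 1$ is the right value, and (ii) checking that splitting the classes independently does not reintroduce high same-color degree --- which holds precisely because the $2\C$ output classes are pairwise disjoint and each is contained in a single original class, so no vertex ever acquires same-color neighbors from outside its original class. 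The only mild point worth stating explicitly in the write-up is that Lemma~\ref{lem:max-cut} is invoked on $G[W_i]$ rather than on $G$, which is legitimate since a two-coloring of $G[W_i]$ with the stated property is exactly what we need to refine $W_i$.
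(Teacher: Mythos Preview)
Your proposal is correct and follows essentially the same approach as the paper: invoke Theorem~\ref{thm:tw-approx1} with a constant $\epsilon$, then split each resulting color class in two via Lemma~\ref{lem:max-cut}. The only difference is cosmetic --- the paper sets $\epsilon = 1/10$ (and separately dispatches very small $\D$ with the exact algorithm of Theorem~\ref{thm:tw-exact}), whereas your choice $\epsilon = 1$ makes the halving bound land exactly at $\D$ and removes the need for the small-$\D$ case.
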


\subsection{Hardness of Approximation}

The main result of this section is that $\C$ cannot be approximated with a
factor better than $3/2$ in FPT time (for parameters tree-depth, pathwidth, or
treewidth), even if we allow the algorithm to also have a constant additive
error. We remark that an FPT algorithm with additive error $1$ is easy to
obtain for feedback vertex set (Corollary \ref{thm:fvs-approx}).

\begin{theorem}\label{thm:apxhard}

For any fixed $\C>0$, if there exists an algorithm which,
given a graph $G=(V,E)$ and a $\D\ge 0$, correctly distinguishes between the
case that $G$ admits a $(2\C,\D)$-coloring, and the case that $G$ does not
admit a $(3\C-1,\D)$-coloring in FPT time parameterized by $\td(G)$, then
FPT=W[1].

\end{theorem}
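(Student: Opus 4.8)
The plan is to reuse the construction $H(G,k,\C)$ from Section~\ref{sec:whard} in a gap-amplified form, applying it recursively. The base observation is the following dichotomy that we can extract from the proof of Theorem~\ref{thm:fvs1}: if we run the construction with a target of $2$ colors, then a Yes-instance of \MCC\ yields a graph with a $(2,\D)$-coloring, while a No-instance yields a graph that does not even admit a $(2,\D)$-coloring; but the No-instance graph \emph{does} admit a $(3,\D)$-coloring, since one can use a third color to ``fix'' the single troublesome vertex (e.g.\ recolor $c_U$ or a conflicting edge-checker). So the first step is to establish cleanly that the graph $H(G,k,2)$ satisfies: (Yes) $G$ has a $k$-clique $\Rightarrow$ $H$ is $(2,\D)$-colorable; (No) $G$ has no $k$-clique $\Rightarrow$ $H$ is not $(2,\D)$-colorable but is $(3,\D)$-colorable. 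This gives a gap of $2$ versus $3$, i.e.\ the $\C=1$ case of the theorem (with the understanding that $2\C=2$, $3\C-1=2$ — so actually this already gives a $2$ vs.\ $3$ gap which is the $\C=1$ statement restated).

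The heart of the argument is an amplification step that boosts a $(2i,3i-1)$-type gap into a $(2(i+1),3(i+1)-1)$-type gap, or more directly a self-composition that multiplies the gap. The natural idea is a disjoint-union / join style product: given a base gadget $H_0$ with the $2i$-vs-$3i$ dichotomy and a fresh copy of the core construction, we combine them so that a valid coloring of the composite must ``spend'' $2i$ colors on one part and $2$ more on the other in the Yes-case, yielding $2(i+1)$, whereas the No-case forces at least $3i + (3-1) = 3i+2 \ge 3(i+1)-1$. One must be careful that the palette and equality gadgets, which are defined relative to a fixed $\C$, are instantiated with the correct target number of colors at each level, and that the deficiency parameter $\D$ is shared consistently across levels (the budget-setting steps \ref{it21}--\ref{it24} must be redone for the combined graph). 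Iterating this step a constant number of times (since $\C$ is fixed) produces, from an \MCC\ instance $(G,k)$, a graph $H_i$ such that $G$ has a $k$-clique $\Rightarrow H_i$ is $(2\C,\D)$-colorable, and $G$ has no $k$-clique $\Rightarrow H_i$ is not $(3\C-1,\D)$-colorable.

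Next I would bound $\td(H_i)$. At each amplification level we add only a constant number of ``new'' structural vertices (the main palette pair of that level, a universal checker, guard/transfer vertices) on top of copies of the core construction from Section~\ref{sec:whard}, whose tree-depth is $O(k^2+\C)$ by Lemma~\ref{lem:fvsred3}, together with equality and palette gadgets whose effect on tree-depth is controlled by Lemmata~\ref{lem:eq2} and \ref{lem:palette2}. Since the recursion depth is a constant (depending only on $\C$) and each level multiplies the relevant vertex set by a constant, the total tree-depth remains $O_\C(k^2)$, i.e.\ bounded by a function of the parameter. Combined with the W[1]-hardness and $n^{o(k)}$-lower bound for \MCC, an FPT algorithm (parameterized by $\td$) distinguishing the $(2\C,\D)$-colorable case from the non-$(3\C-1,\D)$-colorable case would give an FPT algorithm for \MCC, hence FPT$=$W[1].

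The main obstacle I expect is the amplification step itself: making precise how two gadgets are glued so that the color counts add in the Yes-direction and also add in the No-direction, without introducing a coloring that ``cheats'' by reusing colors across the two parts. The equality gadgets tie specific vertices to fixed colors, but one must ensure that the two sub-constructions are forced to use \emph{disjoint} palettes — this is exactly what the palette gadgets $P(p_A,p_B,\cdot)$ are for, and the delicate point is to verify that, in the No-case, no clever global coloring can avoid the blow-up, which requires re-proving analogues of Lemmata~\ref{lem:fvsred1} and \ref{lem:fvsred2} at each recursion level with the shared $\D$. Getting the arithmetic of $\D$ right across levels (so that the same $\D$ works simultaneously as the budget for all sub-instances) is the other fiddly part, handled by an appropriate choice of the budget-setting independent-set sizes.
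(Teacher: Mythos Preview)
Your high-level plan—recursive amplification from the $\C=1$ base case, with tree-depth growing only by a constant factor at each level—matches the paper's strategy. However, you are missing the central technical idea, and the concrete suggestion you do make would not work.

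The difficulty you correctly flag is how to glue a copy of $G^1$ to copies of the inductively built $G^\C$ so that in the No case the color demands \emph{add}. Your suggestion is that ``this is exactly what the palette gadgets $P(p_A,p_B,\cdot)$ are for.'' It is not. The gadget $P(u_1,u_2,u_3,\C,\D)$ only forces two of three endpoints to share a color when the total palette has exactly $\C$ colors; it gives no control over which $3\C$ colors are consumed by the $G^\C$ part versus which $2$ are left for $G^1$. A disjoint union allows full color reuse, and a join destroys the degree budgets. Nothing in your proposal prevents a $(3\C,\D)$-coloring of the composite that simply reuses the same $3\C$ colors on both parts.

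The paper's key move is to use \emph{many copies of $G^\C$ itself} as a generalized palette gadget. For each vertex $v$ of the main copy of $G^1$, one attaches ${3\C+2 \choose 3\C}\D+1$ disjoint copies of $G^\C$, each fully connected to the triple $p_A,p_B,v$ (and analogous triples inside the palette part). In the No case, by induction every copy of $G^\C$ uses at least $3\C$ colors; pigeonhole then gives a fixed set of $3\C$ colors appearing in $\D+1$ of the attached copies, so $p_A,p_B,v$ must all avoid that set. With only $3\C+2$ colors total and $p_A,p_B$ already forced to be distinct, $v$ is pinned to one of two colors—hence the main $G^1$ would be $(2,\D)$-colorable, contradicting the No assumption. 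This pigeonhole-on-copies argument is the missing ingredient; once you have it, the Yes direction and the tree-depth bound are straightforward (your sketch of those parts is fine). Note also that your aside about the No-instance of $G^1$ being $(3,\D)$-colorable is neither established nor needed.
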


\begin{corollary}\label{cor:apxhard}

For any constants $\delta_1,\delta_2>0$, if there exists an algorithm which,
given a graph $G=(V,E)$ that admits a $(\C,\D)$-coloring and parameters
$\C,\D$, is able to produce a $((\frac{3}{2}-\delta_1)\C+\delta_2,\D)$-coloring
of $G$ in FPT time parameterized by $\td(G)$, then FPT=W[1].

\end{corollary}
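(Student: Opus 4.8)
The plan is to obtain the corollary as an immediate consequence of Theorem \ref{thm:apxhard}, by feeding a hypothetical approximation algorithm a large enough \emph{constant} number of colors, so that its multiplicative-plus-additive guarantee collapses into the $3/2$ gap of that theorem. So suppose, for contradiction, that such an algorithm $\mathcal{A}$ exists for some fixed constants $\delta_1,\delta_2>0$, running in FPT time parameterized by $\td(G)$. First I would fix the constant $i := \lceil (\delta_2+1)/(2\delta_1) \rceil$, which is a well-defined positive integer because $\delta_1,\delta_2$ are constants, chosen precisely so that $3i - 2\delta_1 i + \delta_2 \le 3i-1$, i.e.\ so that $2\delta_1 i \ge \delta_2 + 1$.

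Next I would use $\mathcal{A}$ to decide the promise problem of Theorem \ref{thm:apxhard} with its fixed parameter set to $i$: given a graph $G$ that is either $(2i,\D)$-colorable or not $(3i-1,\D)$-colorable, run $\mathcal{A}$ on input $(G,2i,\D)$ and then check in polynomial time whether the coloring it returns is a valid $(3i-1,\D)$-coloring of $G$; answer ``Yes'' iff it is. Correctness in the Yes case follows from the guarantee of $\mathcal{A}$: since $G$ is $(2i,\D)$-colorable, $\mathcal{A}$ outputs a coloring using at most $(\frac{3}{2}-\delta_1)(2i)+\delta_2 = 3i - 2\delta_1 i + \delta_2 \le 3i-1$ colors in which every vertex has deficiency at most $\D$, and such a coloring is in particular a valid $(3i-1,\D)$-coloring (not using some available colors never hurts). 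In the No case, $G$ admits no $(3i-1,\D)$-coloring at all, so the verification necessarily fails and we correctly answer ``No''. Since $i$ is a fixed constant, the whole procedure runs in FPT time parameterized by $\td(G)$, and Theorem \ref{thm:apxhard} then yields FPT=W[1].

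The one point requiring a little care --- though I do not expect it to be a genuine obstacle --- is that on a No instance we are invoking $\mathcal{A}$ outside of its promise (the graph need not be $(2i,\D)$-colorable), so $\mathcal{A}$ may behave arbitrarily; this is handled cleanly by explicitly verifying the returned coloring rather than trusting it, which is why the argument goes through regardless of $\mathcal{A}$'s behaviour on non-promise inputs. The only other thing to confirm is the elementary inequality $2\delta_1 i \ge \delta_2 + 1$ for the chosen $i$, which holds by construction; taking any larger integer works equally well and provides extra slack in case the approximation ratio is interpreted with a floor. This completes the reduction.
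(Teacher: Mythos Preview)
Your proof is correct and follows essentially the same approach as the paper: fix a large enough constant so that the multiplicative-plus-additive guarantee of the hypothetical algorithm falls below the $3i-1$ threshold of Theorem~\ref{thm:apxhard}, then use it to distinguish the two cases. Your choice $i=\lceil(\delta_2+1)/(2\delta_1)\rceil$ is slightly tighter than the paper's $\lceil(\delta_2+1)/\delta_1\rceil$, and your explicit verification step to handle the algorithm's behaviour on non-promise inputs is a detail the paper leaves implicit, but neither changes the substance of the argument.
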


\begin{corollary}\label{thm:fvs-approx}

There is an algorithm which, given a graph $G=(V,E)$, parameters $\C,\D$, and a
feedback vertex set of $G$ of size $\fvs$, either returns a
$(\C+1,\D)$-coloring of $G$, or correctly concludes that $G$ does not admit a
$(\C,\D)$-coloring, in time $(\fvs)^{O(\fvs)}n^{O(1)}$.

\end{corollary}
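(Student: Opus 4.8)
The plan is to obtain the corollary essentially for free from the exact FPT algorithm of Theorem~\ref{thm:alg-fvs}, exploiting the fact that that algorithm fails to apply only for the single value $\C=2$. We may assume $\C\ge 1$. The driving observation is monotonicity in the number of colors: every $(\C,\D)$-coloring of $G$ is in particular a $(\C+1,\D)$-coloring. Consequently, an algorithm that merely \emph{decides} whether $G$ admits a $(\C+1,\D)$-coloring is already enough for our purposes --- if it answers ``no'' then a fortiori $G$ has no $(\C,\D)$-coloring and we may report exactly that, while if it answers ``yes'' then the $(\C+1,\D)$-coloring it produces is precisely what the corollary asks us to output.

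Concretely, I would proceed as follows. First dispose of the degenerate case $\C=1$ by hand, since there $\C+1=2$ lies outside the scope of Theorem~\ref{thm:alg-fvs}: compute $\Delta(G)$ in polynomial time, return the single color class $V$ (a $(1,\D)$- and hence also a $(2,\D)$-coloring) if $\Delta(G)\le\D$, and otherwise correctly conclude that $G$ has no $(1,\D)$-coloring. For every $\C\ge 2$ we have $\C+1\ge 3\neq 2$, so I would run the algorithm of Theorem~\ref{thm:alg-fvs} on the instance $(G,\C+1,\D)$ with the given feedback vertex set of size $\fvs$; if it accepts, return the $(\C+1,\D)$-coloring, and if it rejects, return the (correct, by monotonicity) conclusion that $G$ admits no $(\C,\D)$-coloring. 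Correctness is then immediate. For the running time, the case $\C=1$ is polynomial, while for $\C\ge 2$ we make a single call to Theorem~\ref{thm:alg-fvs}; here we may assume $\C+1\le n+1$ (otherwise Lemma~\ref{lem:basic} already furnishes a $(\C+1,0)$-coloring), so that the number of colors is swallowed by the $n^{O(1)}$ factor and the total running time is $(\fvs)^{O(\fvs)}n^{O(1)}$ as claimed.

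The only genuine technical point --- and the step I would expect to require a sentence or two of justification, although it is not a serious obstacle --- is that Theorem~\ref{thm:alg-fvs} is phrased as a decision procedure, whereas here we must actually exhibit a $(\C+1,\D)$-coloring whenever one exists. This is routine: the win/win behind Theorem~\ref{thm:alg-fvs} is constructive on both sides of its dichotomy, since in one branch it outputs an explicit coloring and in the other it invokes the bounded-treewidth dynamic program of Theorem~\ref{thm:tw-exact} (recall $\tw(G)\le\fvs(G)+1$ by Lemma~\ref{lem:basic}), from whose table a witnessing coloring is recovered by the usual trace-back. Hence the decision algorithm can be upgraded to return a coloring within the same time bound, which completes the proof.
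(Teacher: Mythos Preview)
Your proposal is correct and follows essentially the same approach as the paper: both reduce to Theorem~\ref{thm:alg-fvs} using monotonicity in the number of colors. The only cosmetic difference is that the paper invokes Theorem~\ref{thm:alg-fvs} with parameter $\C$ itself when $\C\ge 3$ (and with $3$ only when $\C=2$), whereas you uniformly call it with $\C+1$; your version is arguably cleaner, and your explicit treatment of $\C=1$ and of constructiveness are points the paper's two-line proof glosses over.
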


\section{Conclusions}

In this paper we classified the  complexity of \DC\ with respect to some of the
most well-studied graph parameters, given essentially tight ETH-based lower
bounds for pathwidth and treewidth, and explored the parameterized
approximability of the problem. Though this gives a good first overview of the
problem's parameterized complexity landscape, there are several questions worth
investigating next. First, is it possible to make the lower bounds of Section
\ref{sec:eth} even tighter, by precisely determining the base of the exponent
in the algorithm's dependence? This would presumably rely on a stronger
complexity assumption such as the SETH, as in \cite{LokshtanovMS11a}.  Second,
can we determine the complexity of the problem with respect to other structural
parameters, such as clique-width \cite{CourcelleMR00}, modular-width
\cite{GajarskyLO13}, or neighborhood diversity \cite{Lampis12}? For some of
these parameters the existence of FPT algorithms is already ruled out by the
fact that \DC\ is NP-hard on cographs \cite{BelmonteLM17}, however the
complexity of the problem is unknown if we also add $\C$ or $\D$ as a
parameter. Finally, it would be very interesting to close the gap between $2$
and $3/2$ on the performance of the best treewidth-parameterized FPT
approximation for $\C$.



\newpage

\bibliography{defective}



\appendix

\section{Omitted Material}

\subsection{Omitted Preliminaries}

We recall here some standard definitions for the reader's convenience.

A tree decomposition of a graph $G=(V,E)$ is a (rooted) tree $T=(X,I)$ such
that each node of $T$ is a subset of $V$. We call the elements of $X$ bags. $T$
must obey the following constraints: $\forall v\in V\  \exists B\in X$ such
that $v\in B$; $\forall (u,v)\in E\ \exists B\in X$ such that $u,v\in B$;
$\forall v\in V$ the bags of $X$ that contain $v$ induce a connected sub-tree.
The width of a tree decomposition is $\max_{B\in X}|B|-1$, and $\tw(G)$ is the
minimum width of a tree decomposition of $G$. Pathwidth is defined similarly,
except the decomposition is required to be a path instead of a tree. 

For a rooted tree $T$ we define its height as the number of vertices in the
longest path from the root to a leaf, and its completion as the graph obtained
by connecting each node to all of its ancestors. For a graph $G$ we define
$\td(G)$ as the minimum height of any tree whose completion contains $G$ as a
subgraph. An equivalent recursive definition is the following: $\td(K_1)=1$; if
$G$ is disconnected then $\td(G)$ is equal to the maximum tree-depth of $G$'s
connected components; otherwise $\td(G)=1+\min_{v\in V}\td(G[V\setminus{v}])$.

A graph's feedback vertex set (respectively vertex cover) is the smallest set
of vertices whose removal leaves the graph acyclic (respectively edge-less). 

\begin{proof}[Proof of Lemma \ref{lem:basic}]

All stated relations are standard but we recall here the proofs for the sake of
completeness. To obtain $\tw(G)-1\le \fvs(G)$, if $S\subseteq V$ is a feedback
vertex set, we can construct a tree decomposition of $G$ by including all
vertices of $S$ in a tree decomposition (of width $1$) of $G[V\setminus S]$.
$\fvs(G)\le vc(G)$ follows because every vertex cover is also a feedback vertex
set. $\tw(G)\le \pw(G)$ because all path decompositions are also valid tree
decompositions. $\pw(G)\le \td(G)-1$ can be seen by recalling that, if $G$ is
connected $\exists v\in V$ such that $\td(G) = 1+\td(G[V\setminus v])$. We can
now take a path decomposition of $G[V\setminus v]$ and add $v$ to every bag. To
see that $\td(G)\le vc(G)+1$ we observe that $G$ is a subgraph of the rooted
tree we construct if we connect all the vertices of a vertex cover in a path,
and attach all the other vertices to the path's last vertex.

For the coloring statements, we recall that a graph with treewidth $\tw$ is
$(\tw+1)$-degenerate, that is, there exists an ordering of its vertices such
that each vertex has at most $\tw+1$ neighbors among the vertices that precede
it \cite{BodlaenderK08}. To see that $\td(G)$ colors suffice to color $G$ if it
is connected, we recall that $\exists v\in V$ such that $\td(G) =
1+\td(G[V\setminus v])$, use a unique color for $v$ and $\td(G)-1$ for the rest
of the graph. $\fvs(G)+2$ colors are always sufficient to properly color a
graph because we can use distinct colors for the feedback vertex set, and
two-color the remaining forest.  \end{proof}

\subsection{Omitted Proofs from Section \ref{sec:whard}}

\begin{proof}[Proof of Lemma \ref{lem:T}]

We begin by the last statement: clearly $\td(\T(1,j)) = \pw(\T(1,j)) +1 =
\tw(\T(1,j)) + 1= 1$, while it can be seen that $\tw(\T(i,j)) + 1 \le
\pw(\T(i,j))+1\le \td(\T(i,j)) \le 1 + \td(\T(i-1,j))$ by removing the
universal vertex. We also observe that $\td(\T(i,j)) \ge \pw(\T(i,j))+1\ge
\tw(\T(i,j))+1 \ge i$ because $\T(i,j)$ contains a clique of size $i$. The
third statement implies the first by Lemma \ref{lem:basic}.  Finally, to see
that $\T(i,j)$ does not admit an $(i-1,j)$-coloring, we do induction on $i$.
Clearly, $\T(1,j)$ requires at least one color. Suppose now that $\T(i,j)$ does
not admit an $(i-1,j)$-coloring but, for the sake of contradiction, $\T(i+1,j)$
admits an $(i,j)$-coloring. By assumption, each of the $j+1$ copies of
$\T(i,j)$ contained in $\T(i+1,j)$ must be using all $i$ available colors.
Hence, each color appears at least $j+1$ times, which implies that there is no
available color for the universal vertex.  \end{proof}

\begin{proof}[Proof of Lemma \ref{lem:eq}] For the first statement, consider a $(\C,\D)$-coloring of $G'$
and examine the copies of $\T(\C-1,\D)$ contained in the equality gadget added
to $G$. For a set $C\subseteq \{1,\ldots,\C\}$ with size $|C|=\C-1$ we say that
$C$ is contained in a copy of $\T(\C-1,\D)$ if all the colors of $C$ appear in
this copy in the coloring of $G'$. There are ${\C \choose \C-1}= \C$ such sets
of colors $C$, and every copy of $\T(\C-1,\D)$ contains at least one by Lemma
\ref{lem:T}. Hence, the set of colors $C$ that is contained in the largest
number of copies is contained in at least $\lceil \frac{\C\D+1}{\C}\rceil =
\D+1$ copies, therefore all its colors appear at least $\D+1$ times. This means
that $v_1,v_2$ cannot take any of the colors in $C$, and therefore must use the
same color.

For the second statement, recall that by Lemma \ref{lem:T}, $\T(\C-1,\D)$ can
be properly colored with $\C-1$ colors, and $\C-1$ colors are available if
$v_1,v_2$ use the same colors.  \end{proof}

\begin{proof}[Proof of Lemma \ref{lem:eq2}] For the first inequality, we begin by observing that $\td(G') \le
\td(G'\setminus S) + |S|$, so it suffices to show that $\td(G'\setminus S) \le
\td(G\setminus S)+\C-1$. Observe now that in $G'\setminus S$, in every copy of
$Q$ one of the vertices $u_1,u_2$ has been removed. 

By definition, there must exist a rooted tree $T_1$ with $\td(G\setminus S)$
levels such that if we complete the tree (that is, connect each node of $T_1$
to all its descendants), $G\setminus S$ is a subgraph of the resulting graph.
Similarly, there exists a rooted tree $T_2$ with $\C-1$ levels such that
$\T(\C-1,\D)$ is a subgraph of its completion.  We now observe that if we take
$T_1$ and attach to each of its nodes a copy of $T_2$ we have a tree with
$\td(G\setminus S)+\C-1$ levels whose completion contains $G'\setminus S$ as a
subgraph.

%
For the final statement, if $\C=2$ the equality gadgets we have added to $G$
contain copies of $\T(1,\Delta)=K_1$. If we remove $S$ from $G'$, and therefore
remove one endpoint of each equality gadget, these vertices become leaves, and
hence do not affect the size of the graph's minimum feedback vertex set.
Deleting them gives us the graph $G\setminus S$, so we conclude that
$\fvs(G'\setminus S) = \fvs(G\setminus S)$ which, together with the fact that
$\fvs(G')\le \fvs(G'\setminus S) + |S|$ completes the proof.  \end{proof}

\begin{proof}[Proof of Lemma \ref{lem:palette}] For the first statement, consider a $(\C,\D)$-coloring of $G'$
and examine the copies of $\T(\C-2,\D)$ contained in the palette gadget added
to $G$.  For a set $C\subseteq \{1,\ldots,\C\}$ with size $|C|=\C-2$ we say
that $C$ is contained in a copy of $\T(\C-2,\D)$ if all the colors of $C$
appear in this copy in the coloring of $G'$. There are ${\C \choose \C-2} =
{\C\choose 2}$ such sets of colors $C$, and every copy of $\T(\C-2,\D)$
contains at least one by Lemma \ref{lem:T}. Hence, the set of colors $C$ that
is contained in the largest number of copies, is contained in at least $\lceil
\frac{{\C\choose 2}\D+1}{{\C\choose 2}}\rceil = \D+1$ copies, therefore all its
colors appear at least $\D+1$ times. This means that $v_1,v_2,v_3$ cannot take
any of the colors in $C$, and therefore have only two colors available for
them. By pigeonhole principle, two of them must share a color.

For the second statement, recall that by Lemma \ref{lem:T}, $\T(\C-2,\D)$ can
be properly colored with $\C-2$ colors, and $\C-2$ colors are available if
$v_1,v_2,v_3$ use at most two colors.  \end{proof}

\begin{proof}[Proof of Lemma \ref{lem:palette2}] The proof follows along the same lines as the proof of Lemma
\ref{lem:eq2}. First, we observe that $\td(G')\le \td(G'\setminus S)+ |S|$ and
then show that $\td(G'\setminus S)\le \td(G\setminus S) + \C-2$ by taking a
tree $T_1$ with $\td(G\setminus S)$ levels whose completion contains
$G\setminus S$ and attaching to each node a tree $T_2$ with $\C-2$ levels whose
completion contains $\T(\C-2,\D)$.  
\end{proof}

\begin{proof}[Proof of Lemma \ref{lem:fvsred2}] Suppose that we are given a
$(\C,\D)$-coloring $c: V(H) \to \{1,\ldots,\C\}$ of $H$. We first establish
that $c(p_A)\neq c(p_B)$. Indeed, because of the equality gadgets added in Step
\ref{it3} we have $c(p^i_j)=c(p_j)$ for all $i\in\{1,\ldots,\D\}, j\in\{A,B\}$.
Because of the edges added in Step \ref{it5} we then know that $p_A,p_B$ each
has at least $\D$ neighbors with the same color. Therefore, because of the edge
connecting them, we conclude that $c(p_A)\neq c(p_B)$. Without loss of
generality we will assume below that $c(p_A)=1$ and $c(p_B)=2$.

Because of the equality gadget of Step \ref{it20} we have $c(c_U)=1$. Because
$c_U$ has degree $|E|$, we conclude that it has at least $k\choose 2$ neighbors
with color $2$. These correspond to a set $E'\subseteq E$ of edges of the
original graph with $|E'|\ge {k\choose 2}$. We will prove that, in fact, $E'$
induces a $k$-clique in $G$.

Let $e\in E'$ be an edge such that $c(c_e)=2$. This implies that all the
vertices of $L^1_e\cup H^1_e \cup L^2_e\cup H^2_e$ must take color $1$, because
by Step \ref{it24} $c_e$ already has $\D$ neighbors with color $2$. In case
$\C\ge 3$ we have also used here the fact that, by Step \ref{it18}, every
internal vertex of the gadget representing $e$ must take color $1$ or $2$.

Suppose that $e\in E'$ connects the vertex with index $i_1$ in $V_{j_1}$ to the
vertex with index $i_2$ in $V_{j_2}$, $j_1<j_2$. We first show that, for an
$e'\in E$ also connecting $V_{j_1}$ to $V_{j_2}$ it must be that $e'\not\in
E'$. Suppose for contradiction that $e'\in E'$, and let $i_1',i_2'$ be the
indices of the endpoints of $e'$. We observe that $l_{j_1,j_2}$ has at least
$|L^1_e|+|L^1_{e'}| = 2n - i_1 - i_1'$ neighbors with color $1$ in the edge
gadgets, while $h_{j_1,j_2}$ has at least $|H^1_e|+|H^1_{e'}| = i_1 + i_1'$
such neighbors. Both $l_{j_1,j_2}$ and $h_{j_1,j_2}$ had $\D-n$ neighbors of
color $1$ added in Step \ref{it22}. Finally, among the $2n$ choice vertices
$c^{j_1}_j$ which are neighbors of either $l_{j_1,j_2}$ or $h_{j_1,j_2}$ there
are at least $n$ which received color $1$, because all the choice vertices have
colors $1$ or $2$ (Step \ref{it10}) and $g^{j_1}_B$, which has color $2$ (Step
\ref{it9}), is connected to all of them and also has $\D-n$ other neighbors of
color $2$ (Step \ref{it21}). Hence, the total number of vertices in
$N(l_{j_1,j_2})\cup N(h_{j_1,j_2})$ with color $1$ is at least $2n + 2(\D-n) +
n >2\D$, hence one of these two vertices has deficiency higher than $\D$,
contradiction. We conclude that $e'\not\in E'$.

To complete the proof, let us show that the $k\choose 2$ edges of $E'$, each of
which connects a different pair of parts of $V$, are incident on the same
endpoints. Take $e\in E'$ as in the previous paragraph, and $e'\in E'$
connecting vertices with indices $i_1',i_3'$ from the parts $V_{j_1},V_{j_3}$,
for $j_3\neq j_2$. It suffices to show that $i_1=i_1'$. Suppose for
contradiction $i_1\neq i_1'$. Consider now the vertices
$l_{j_1,j_2},h_{j_1,j_2},l_{j_1,j_3},h_{j_1,j_3}$, which, by similar reasoning
as before, have $n-i_1$, $i_1$, $n-i_1'$, $i_1'$ color-$1$ neighbors in the
edge gadgets respectively. If there are strictly more than $i_1$ vertices with
color $1$ among the choice vertices $c^{j_1}_j$, $j\in\{1,\ldots,n\}$, then
$l_{j_1,j_2}$ would have deficiency more than $\D$. If there are strictly more
than $n-i_1$ vertices with color $1$ among the choice vertices $c^{j_1}_j$,
$j\in\{n+1,\ldots,2n\}$, then $h_{j_1,j_2}$ would have deficiency more than
$\D$. Since, by the same reasoning as previously, there are at least $n$
vertices with color $i$ among the choice vertices $c^{j_1}_j$, we conclude that
there are exactly $i_1$ vertices with color $1$ among the $c^{j_1}_j$ for
$j\in\{1,\ldots,n\}$, and exaclty $n-i_1$ such vertices in the rest. We can now
conclude that the only way not to violate the deficiency of $l_{j_1,j_3}$ or
$h_{j_1,j_3}$ is for $i_1=i_1'$.  \end{proof}

\begin{proof}[Proof of Lemma \ref{lem:fvsred3}] We first observe that all
equality and palette gadgets added to the graph (Steps \ref{it3}, \ref{it9},
\ref{it10}, \ref{it14}, \ref{it18}, \ref{it20}-\ref{it24}) have at most one
endpoint outside $\{p_A,p_B\}$. Hence, by Lemmata \ref{lem:eq2},
\ref{lem:palette2}, we can conclude that $\td(H) = td(H'\setminus\{p_A,p_B\}) +
\C + 1$ and, for $\C=2$ we have $\fvs(H)\le \fvs(H'\setminus\{p_A,p_B\}) + 2$,
where $H'$ is the graph we obtain from $H$ if we remove all the equality and
palette gadgets. It therefore suffices to show that $\td(H') = O(k^2)$ and, if
$\C=2$, $\fvs(H') = O(k^2)$.

For both parameters we start by removing from the graph all the guard and
transfer vertices, which are $2k + 2k(k-1) = 2k^2$ in total. We now have that
all vertices $p^i_j$, as well as all choice vertices are isolated. Furthermore,
all vertices added to represent edges, as well as the budget-setting vertices,
form a tree with root at $c_U$ and $3$ levels. We conclude that $H'$ has
$\td(H') \le 2k^2 + 4$ and $\fvs(H') \le 2k^2$.  \end{proof}

\subsection{Omitted Proofs from Section \ref{sec:eth}}

\begin{proof}[Proof of Lemma \ref{lem:eq-pw}] First, we observe that there is a path decomposition of
$Q(u_1,u_2,\C,\D)$ with width $\C$, as by Lemma \ref{lem:T} there is a path
decomposition of $\T(\C-1,\D)$ of width $\C-2$, and we can add to all its bags
the vertices $u_1,u_2$. Call this path decomposition $T_Q$. In the same way,
there is a path decomposition of width $\C$ for $P(u_1,u_2,u_3,\C,\D)$, call it
$T_P$. 

We now take an optimal tree or path decomposition of $G'$, call it $T'$, and
construct from it a decomposition of $G$.  Consider a gadget $H\in \{Q,P\}$
that appears in $G$ with endpoints $u_1,u_2(,u_3)$. Since in $G'$ these
endpoints form a clique, there is a bag in $T'$ that contains all of them. Let
$B$ be the smallest such bag. Now, if $T'$ is a tree decomposition, we take
$T_H$ and attach it to $B$.  If $T'$ is a path decomposition, we insert in the
decomposition immediately after $B$ the decomposition $T_H$ where we have added
all vertices of $B$ in all bags of $T_H$. It is not hard to see that in both
cases the decompositions remain valid, and we can repeat this process for every
$H$ until we have a decomposition of $G$. \end{proof}

\begin{proof}[Proof of Lemma \ref{lem:eth1}]
Suppose that $G$ has a $k$-clique, given by a function $\sigma:
\{1,\ldots,k\}\to \{1,\ldots,n\}$, meaning that the clique contains vertex
$\sigma(i)$ from the set $V_i$. We color $H$ as follows: $p_A$ receives color
$1$, $p_B$ receives color $2$, and all vertices on which we have attached
equality gadgets receive the appropriate color, according to Lemma
\ref{lem:eq}. By Lemmata \ref{lem:eq},\ref{lem:palette} we can extend this
coloring to the internal vertices of equality and palette gadgets. For every
independent set $C_{i,j}$, we color $\sigma(i)$ of its vertices with $1$ if $j$
is odd, otherwise we color $n-\sigma(i)$ of its vertices with $1$; we color the
remaining vertices of independent sets $C_{i,j}$ with $2$. For the $j$-th edge
of $E$, if it is contained in the clique then we color $c_j$ with $2$ and
$H_j^1,L_j^1,H_j^2,L_j^2$ with $1$, otherwise we color $c_j$ with $1$ and
$H_j^1,L_j^1,H_j^2,L_j^2$ with $2$. This completes the coloring.

To see that this coloring is valid, observe that the vertices in the palette
part have each at most $\D$ neighbors of the same color; the backbone vertices
$b_{i,j}^l$ have exactly $\D$ neighbors of the same color ($\sigma(i)$ in one
grid independent set and $n-\sigma(i)$ in the other, plus $\D-n$ from step
\ref{itb17}); the vertices $l_j^1,h_j^1,l_j^2,h_j^2$ if the $j$-th edge belongs
to the clique have exactly $\D$ neighbors with the same color; the same
vertices for an edge that does not belong to the clique have strictly fewer
than $\D$ neighbors of the same color; all vertices $c_j$ have at most $\D$
neighbors with the same color; and vertex $c_U$ has $m-{k\choose 2} = \D$
neighbors with the same color.  \end{proof}

\begin{proof}[Proof of Lemma \ref{lem:eth2}] Suppose that we have a valid
$(\C,\D)$-coloring of $H$. As in Lemma \ref{lem:fvsred2} we can assume that
$p_A,p_B$ receive distinct colors, without loss of generality, colors $1$ and
$2$ respectively. Because of step \ref{itb18} we can assume that all the main
vertices of the graph also receive colors $1$ or $2$. Because of the equality
gadget added in \ref{itb14} we know that vertex $c_U$ received color $1$. Since
it has $m$ neighbors, there must exist at least $m-\D = {k\choose 2}$ vertices
$c_j$ which received color $2$.  We call the corresponding edges of $G$ the
selected edges and we will eventually prove that they induce a clique.

We define a set of $k$ vertices of $G$, one from each $V_i$, as follows: in
$V_i$ we select the vertex $\sigma(i)$ if there are $\sigma(i)$ vertices with
color $1$ in $C_{i,1}$. We call these $k$ vertices the selected vertices of
$G$.

We now observe that if there are $\sigma(i)$ vertices with color $1$ in
$C_{i,j}$, then there are $n-\sigma(i)$ vertices with color $1$ in $C_{i,j+1}$.
To see this observe that if there were more than $n-\sigma(i)$ vertices with
color $1$ in $C_{i,j+1}$ this would violate vertex $b_{i,j}^A$, which also has
color $1$ and is connected to $C_{i,j}\cup C_{i,j+1}$. If there were fewer,
this would violate the vertex $b_{i,j}^B$, which has color $2$. Hence, for any
$j\in\{1,\ldots,m\}$ we have that $C_{i,2j-1}$ contains $\sigma(i)$ vertices
with color $1$, while $C_{i,2j}$ contains $n-\sigma(i)$ vertices with color
$1$.

We now want to show that every active edge is incident on two active vertices
to complete the proof. Consider a $c_j$ that corresponds to an active edge.
Since $c_j$ received color $2$, because of step \ref{itb15} all vertices of
$H_j^1,L_j^1,H_j^2,L_j^2$ must have color $1$. Consider now the vertices
$h_j^1,l_j^1$, which also have color $1$ because of step $12$. If $h_j^1$ is
connected to $C_{i_1,2j-1}$ and $l_j^1$ is connected to $C_{i_1,2j}$, then
$h_j^1$ has $(\D-n) + |H_j^1| + \sigma(i_1)$ neighbors with color $1$, while
$l_j^1$ has $(\D-n) + |L_j^1| + n-\sigma(i_1)$ such neighbors. But $|L_j^1| = n
- |H_j^1|$. We therefore have $\sigma(i_1) \le n - |H_j^1|$ as well as
  $\sigma(i_1) \ge |L_j^1| = n-|H_j^1|$.  Therefore, $\sigma(i_1) = |L_j^1|$
and this implies by construction that edge $j$ is incident on vertex
$\sigma(i_1)$ of $V_{i_1}$.  \end{proof}

\begin{proof}[Proof of Lemma \ref{lem:eth3}] We first invoke Lemma
\ref{lem:eq-pw} to replace all palette and equality gadgets with edges. It
suffices to show that the pathwidth of the resulting graph is $O(k)$. We
continue by removing from the graph the vertices $p_A,p_B,c_U$.  This does not
decrease the pathwidth by more than $3$, since these vertices can be added to
all bags. In the remaining graph we remove all leaves and isolated vertices. It
is not hard to see that this does not decrease pathwidth by more than $1$,
since if we find a path decomposition of the remaining graph, we can reinsert
the leaves as follows: for each leaf $v$ we find the smallest bag in the
decomposition that contains its neighbor and insert after it a copy of the same
bag with $v$ added. We note that removing all leaves deletes from the graph all
vertices added for budget-setting, as well as the remaining vertices of the
palette part.

What remains then is to bound the pathwidth of the graph induced by the
backbone vertices $b_{i,j}^l$, the choice vertices in sets $C_{i,j}$, and the
edge representation vertices. We construct a backbone of a path decomposition
as follows: for each $j\in\{1,\ldots,m\}$ we construct a bag that contains all
$b_{i,2j-1}^l, b_{i,2j}^l$, and $b_{i,2j+1}^l$ (if they exist), as well as
$h_j^1, l_j^1, h_j^2, l_j^2, c_j$. We connect these bags in a path in
increasing order of $j$. All these bags have with at most $O(k)$.

We now observe that for every remaining vertex of the graph, there is a bag in
the path decomposition that we have constructed that contains all its
neighbors. We therefore do the following: for every remaining vertex $v$, we
find the smallest bag of the path decomposition that contains its neighborhood,
and insert after it a copy of this bag with $v$ added. This process results in
a valid path decomposition, and it does not increase the size of the largest
bag by more than $1$.  \end{proof}

\subsection{Omitted Proofs from Section \ref{sec:exact}}

\begin{proof}[Proof of Theorem \ref{thm:tw-exact}] The algorithm uses standard
dynamic programming techniques, so we sketch some of the details. We assume we
are given a nice tree decomposition, as defined in \cite{BodlaenderK08}. For
each bag $B_t$ of the decomposition we denote by $B_t^\downarrow$ the set of
vertices included in bags in the sub-tree of the decomposition rooted at $B_t$.
We will maintain in each bag $B_t$ a dynamic programming table $D_t\subseteq
(\{1,\ldots,\C\}\times \{0,\ldots,\D\})^{|B_t|}$. Informally, each element
$s\in (\{1,\ldots,\C\}\times \{0,\ldots,\D\})^{|B_t|}$ is the signature of a
partial solution: we interpret $s$ as a function which, for each vertex in
$B_t$ tells us its color, as well as the number of neighbors this vertex has in
$B_t^\downarrow\setminus B_t$ that share the same color. The invariant we want
to maintain is that $s\in D_t$ if and only if there exists a coloring of
$B_t^\downarrow$ with signature $s$. We can now build the DP table inductively:

\begin{itemize}

\item For a Leaf node $B_t=\{u\}$, $D_t$ contains all signatures $s=(c_u,0)$,
for any $c_u\in\{1,\ldots,\C\}$.

\item For an Introduce node $B_t$ with child $B_{t'}$ such that
$B_t=B_{t'}\cup\{u\}$, for any $s'\in D_{t'}$, and for any
$c_u\in\{1,\ldots,\C\}$, we add to $D_t$ a signature $s$ which agrees with $s'$
on $B_{t'}$ and contains the pair $(c_u,0)$ for vertex $u$.

\item For a Forget node $B_t$ with child $B_{t'}$ such that
$B_t=B_{t'}\setminus\{u\}$ for every signature $s'\in D_{t'}$ we do the
following: let $(c_u,d_u)$ be the pair contained in $s'$ corresponding to
vertex $u$. Let $S_u\subseteq B_{t'}$ be the set of vertices of $B_{t'}$ which
are given color $c_u$ according to $s'$ and which are neighbors of $u$. We
check two conditions: first that $d_u+|S_u|\le \D$; second, that for all $v\in
S_u$ such that $s'$ contains the pair $(c_u,d_v)$ we have $d_v\le \D-1$. If
both conditions hold, we add to $D_t$ a signature $s$ that agrees with $s'$ on
$B_t\setminus S_u$, and that for each $v\in S_u$ such that $s'$ returns
$(c_u,d_v)$, returns the pair $(c_u,d_v+1)$.

\item For a Join node $B_t$ with children $B_{t_1}, B_{t_2}$, (such that
$B_t=B_{t_1}=B_{t_2}$) we do the following: for each $s_1\in D_{t_1}$ and each
$s_2\in D_{t_2}$ we check the following two conditions for all $u\in B_t$: if
$s_1$ returns $(c_{u_1},d_{u_1})$ for $u$ and $s_2$ returns $(c_{u_2},d_{u_2})$
we check if $c_{u_1}=c_{u_2}$; and we check if $d_{u_1}+d_{u_2}\le \D$. If both
conditions hold for all $u\in B_t$ we say that $s_1,s_2$ are compatible, and we
add to $D_t$ a signature $s$ which for $u\in B_t$ contains the pair
$(c_{u_1},d_{u_1}+d_{u_2})$.

\end{itemize}

It is not hard to see that the above operations can be performed in time
polynomial in the size of the table, which is upper-bounded by
$(\C(\D+1))^\tw$. We can then prove by induction that a signature appears in a
table $D_t$ if and only if a coloring with this signature exists for
$B_t^\downarrow$.  If we assume, without loss of generality, that the root bag
contains a single vertex, we can check if the graph admits a $(\C,\D)$-coloring
by checking if the table of the root bag is non-empty.  \end{proof}

\begin{proof}[Proof of Theorem \ref{thm:alg-fvs}] We use a win/win argument.
First, note that we can assume that $\C\ge 3$, since if $\C=1$ the problem is
trivial. Furthermore, if $\C\ge \fvs+2$ then we can produce a
$(\C,\D)$-coloring by giving a distinct color to each vertex of the feedback
vertex set and properly two-coloring the remaining graph. Hence, we assume in
the remainder that $3\le \C\le \fvs+2$.

Now, if $\D\le \fvs$, then we can use the algorithm of Theorem
\ref{thm:tw-exact}. Because of Lemma \ref{lem:basic} this algorithm will run in
time $\fvs^{O(\fvs)}n^{O(1)}$.

Finally, suppose that $\D>\fvs$. In this case the answer is always Yes. To see
this we can produce a coloring as follows: we use a single color for all the
vertices of the feedback vertex set.  Since $\C\ge 3$, there are at least two
other colors available, so we use them to properly color the remaining forest.
This is a valid $(\C,\D)$-coloring, since the only vertices that may have
neighbors of the same color belong in the feedback vertex set, and these can
have at most $\fvs-1<\D$ neighbors with the same color.  \end{proof}

\begin{proof}[Proof of Theorem \ref{thm:alg-vc}] The proof is essentially
identical to that of Theorem \ref{thm:alg-fvs}. We can assume that $\C\le \vc$
(otherwise we use a distinct color for each vertex of the vertex cover, and a
single color for the independent set), and that $\C\ge 2$ (otherwise the
problem is trivial). If $\D\le \vc$ we can use the algorithm of Theorem
\ref{thm:tw-exact}, otherwise we can use a single color for the vertex cover
and another for the independent set.  \end{proof}

\subsection{Omitted Proofs from Section \ref{sec:approx}}

\begin{proof}[Proof of Theorem \ref{thm:tw-approx1}] Our first step is to
invoke Theorem \ref{thm:bodlaender} to obtain a tree decomposition of width
$O(\tw)$ and height $O(\log n)$. We then define a value $\delta =
\frac{\epsilon}{\log^2n}$ and the set $\Sigma = \{0\} \cup \{ (1+\delta)^i\ |
i\in\mathbb{N}, (1+\delta)^i\le (1+\epsilon)\D\}$. In other words, the set
$\Sigma$ contains (in addition to $0$), all positive integer powers of
$(1+\delta)$ with value at most $(1+\epsilon)\D$. We note that $|\Sigma| \le 1+
\log_{(1+\delta)}((1+\epsilon)\D) = O(\log \D/\delta)$, where we have used the
properties $\log_ab = \ln b/\ln a$, and $\ln(1+x)\ge x/2$ for $x$ a
sufficiently small positive constant (that is, for sufficiently large $n$).
Taking into account the value of $\delta$ we have selected, and the fact that
$\D\le n$, we have $|\Sigma| = O(\log^3n/\epsilon)$.

We now follow the outline of the algorithm of Theorem \ref{thm:tw-exact}, with
the difference that we now define a DP table for bag $B_t$ as $D_t \subseteq
(\{1,\ldots,\C\}\times \Sigma)^{|B_t|}$. Again, we interpret the elements of
$D_t$ as functions which, for each vertex in $B_t$ return a color and an
\emph{approximate} number of neighbors that have the same color as this vertex
in $B^\downarrow_t\setminus B_t$.

More precisely, if a bag $B_t$ is at height $h$ (that is, its maximum distance
from a leaf bag in the sub-tree rooted at $B_t$ is $h$) we will maintains the
following two invariants: 

\begin{enumerate}

\item If there exists a coloring $\mathbf{c}$ of $B^\downarrow_t$ such that all
vertices of $B^\downarrow_t\setminus B_t$ have at most $\D$ neighbors of the
same color, and all vertices of $B_t$ have at most $\D$ neighbors of the same
color in $B^\downarrow_t\setminus B_t$, then there exists $s\in D_t$ which
assigns the same colors as $\mathbf{c}$ to $B_t$; and which, if $u\in B_t$ has
$d_u'$ neighbors with the same color in $B^\downarrow_t\setminus B_t$ in
$\mathbf{c}$, returns value $d_u\le (1+\delta)^h d_u'$ for vertex $u$, where
$d'_u\in\Sigma$. 

\item If there exists a signature $s\in D_t$, then there exists a coloring
$\mathbf{c}$ of $B^\downarrow_t$ such that all vertices of
$B^\downarrow_t\setminus B_t$ have at most $(1+\epsilon)\D$ neighbors; all
vertices of $B_t$ take in $\mathbf{c}$ the colors described in $s$; if $s$
dictates that a vertex $u\in B_t$ has $d_u$ neighbors with the same color in
$B^\downarrow_t\setminus B_t$, then $u$ has at most $d_u$ neighbors with the
same color in $B^\downarrow_t\setminus B_t$ according to coloring $\mathbf{c}$.

\end{enumerate}

The first of the two properties above implies that, if there exists a
$(\C,\D)$-coloring of $G$, the algorithm will be able to find some entry in the
table of the root bag that will allows us to construct a
$(\C,(1+\delta)^H)$-coloring, where $H$ is the height of the tree
decomposition. We recall now that $H=O(\log n)$, therefore, $(1+\delta)^H \le
e^{\delta H} \le e^{O(\epsilon/\log n)} \le 1+\epsilon$. Hence, if we establish
the first property, we know that if a $(\C,\D)$-coloring exists, the algorithm
will be able to find a $(\C,(1+\epsilon)\D)$-coloring. Conversely, the second
property assures us that, if the algorithm places a signature $s$ in a DP
table, there must exist a coloring that matches this signature.

In order to establish these invariants we must make a further modification to
the algorithm of Theorem \ref{thm:tw-exact}. We recall that the algorithm makes
some arithmetic calculation in Forget nodes (where the value $d_v$ of neighbors
of the forgotten node with the same color is increased by $1$); and in Join
nodes (where values $d_{u_1},d_{u_2}$ corresponding to the same node are
added). The problem here is that even if the values stored are integer powers
of $(1+\delta)$, the results of these additions are not necessarily such
integer powers. Hence, our algorithm will simply ``round up'' the result of
these additions to the closest integer power of $(1+\delta)$. Formally, instead
of the value $d_v+1$ we use the value $(1+\delta)^{\lceil \log_{(1+\delta)}
(d_v+1) \rceil}$, and instead of the value $d_{u_1}+d_{u_2}$ we use the value
$(1+\delta)^{\lceil \log_{(1+\delta)} (d_{u_1}+d_{u_2}) \rceil}$.

We can now establish the two properties by induction. The two interesting cases
are Forget and Join nodes. For a Join node of height $h$ and the first
property, if we have established by induction that for the two values
$d_{u_1},d_{u_2}$ stored in the children's tables we have $d_{u_1}\le
(1+\delta)^{h-1}d_{u_1}'$, $d_{u_2}\le(1+\delta)^{h-1}d_{u_2}'$, where
$d_{u_1}',d_{u_2}'$ are as described in the first property, then
$d_{u_1}+d_{u_2}\le (1+\delta)^{h-1}(d_{u_1}'+d_{u_2}')$. However, for the new
value we calculate we have $d_u\le (1+\delta) (d_{u_1}+d_{u_2}) \le
(1+\delta)^h (d_{u_1}'+d_{u_2}') = (1+\delta)^h d_u'$. For the second property,
observe that since we always round up, the value stored in the table will
always be at least as high as the true number of neighbors of a vertex in the
coloring $\mathbf{c}$. Calculations are similar for Forget nodes.

Because of the above we have an algorithm that runs in time polynomial in
$|D_t| = (\C |\Sigma|)^{O(\tw)}$. We can assume without loss of generality that
$\C\le \tw+1$, otherwise by Lemma \ref{lem:basic} the graph can be easily
properly colors. By the observations of $|\Sigma|$ we therefore have that the
running time is $(\tw\log n/\epsilon)^{O(\tw)}$. A well-known win/win argument
allows us to obtain the promised bound as follows: if $\tw \le \sqrt{\log n}$,
this running time is in fact polynomial in $n,1/\epsilon$, so we are done; if
$\sqrt{log n} \le tw$ then $\log n \le \tw^2$ and the running time is upper
bounded by $(\tw/\epsilon)^{O(\tw)}$.  \end{proof}

\begin{proof}[Proof of Lemma \ref{lem:max-cut}] We run what is essentially a
local search algorithm for \textsc{Max Cut}.  Initially, color all vertices
with color $1$. Then, as long as there exists a vertex $u$ such that the
majority of its neighbors have the same color as $u$, we change the color of
$u$. We continue with this process until all vertices have a majority of their
neighbors with a different color. In that case the claim follows. To see that
this procedure terminates in polynomial time, observe that in each step we
increase the number of edges that connect vertices of different colors.
\end{proof}

\begin{proof}[Proof of Theorem \ref{thm:tw-approx2}] We assume without loss of
generality that $\D$ is sufficiently large (e.g.  $\D\ge 20$), otherwise we can
solve the problem exactly by using the fact that $\C$ is bounded by $\tw$ (by
Lemma \ref{lem:basic}) and the algorithm of Theorem \ref{thm:tw-exact}. We
invoke the algorithm of Theorem \ref{thm:tw-approx1}, setting $\epsilon=1/10$.
The algorithm runs in the promised running time. If it reports that $G$ does
not admit a $(\C,\D)$-coloring, we output the same answer and we are done.

Suppose that the algorithm of Theorem \ref{thm:tw-approx1} returned a
$(\C,\frac{11}{10}\D)$-coloring of $G$. We transform this to a
$(2\C,\Delta)$-coloring by using Lemma \ref{lem:max-cut}.

We consider each color class in the returned coloring of $G$ separately. Each
class induces a graph with maximum degree $\frac{11}{10}\D$. According to Lemma
\ref{lem:max-cut}, we can two-color this graph so that no vertex has more than
$\frac{11}{20}\D\le \D$ neighbors with the same color.  We produce such a
two-coloring for the graph induced by each color class using two new colors.
Hence, the end result is a $(2\C,\frac{11}{20}\D)$-coloring of $G$, which is
also a valid $(2\C,\D)$-coloring.  \end{proof}

\begin{proof}[Proof of Theorem \ref{thm:apxhard}] First, observe that the
theorem already follows for $\C=1$ by Theorem \ref{thm:fvs1}, which states that
it is W[1]-hard parameterized by $\td(G)$ to decide if a graph admits a
$(2,\D)$-coloring. Let $G^1$ be the graph produced in the reduction of Theorem
\ref{thm:fvs1}. By repeated composition we will construct, for any $\C$, a
graph $G^{\C}$ such that either $G^\C$ admits a $(2\C,\D)$-coloring, or it does
not admit a $(3\C-1,\D)$-coloring, depending on whether $G^1$ admits a
$(2,\D)$-coloring.

Suppose that we have constructed the graph $G^\C$, for some $\C$. We describe
how to build the graph $G^{\C+1}$. We start with a copy of $G^1$, which we call
the main part of our construction. We will add to this many disjoint copies of
$G^\C$ and appropriately connect them to $G^1$ to obtain $G^{\C+1}$.

Recall that the graph $G^1$ contains two palette vertices $p_A,p_B$, each
connected to $\D$ neighbors $p^i_j$, $i\in\{1,\ldots,\D\}$, $j\in\{A,B\}$ with
both edges and equality gadgets. Furthermore, recall that for two colors, an
equality gadget with endpoints $p_j, p^i_j$ is an independent set on $2\D+1$
vertices which are common neighbors of $p_j$ and $p^i_j$.

For each $j\in\{A,B\}$, each $i\in\{1,\ldots,\D\}$, and each internal vertex
$v$ of the equality gadget $Q(p_j,p^i_j)$ added in step \ref{it3} we add to the
main graph ${3\C+2 \choose 3\C}\D+1$ disjoint copies of $G^\C$ and connect all
their vertices to $p_j,p^i_j$, and $v$.

Now, for every vertex $v$ of $G^1$ that is not part of the palette (that is,
every vertex that was not constructed in steps \ref{it1}-\ref{it5}), we add
another ${3\C+2 \choose 3\C}\D+1$ disjoint copies of $G^\C$ and connect all
their vertices to $p_A,p_B$, and $v$.

This completes the construction. We now need to establish three properties:
that if $G^1$ admits a $(2,\D)$-coloring then $G^{\C+1}$ admits a
$(2\C+2,\D)$-coloring; that if $G^1$ does not admit a $(2,\D)$-coloring then
$G^{\C+1}$ does not admit a $(3\C+2,\D)$-coloring; and that the tree-depth of
$G^{\C+1}$ did not increase too much. 

We proceed by induction and assume that all the above have been shown for
$G^\C$. For the first property, if $G^1$ admits a $(2,\Delta)$-coloring and
$G^\C$ admits a $(2\C,\D)$-coloring, then we can construct a coloring of
$G^{\C+1}$ by taking the same coloring with $2\C$ colors for all the copies of
$G^\C$, and using two new colors to color the main graph $G^1$.

For the second property, suppose that we know that a $(3\C-1,\D)$-coloring of
$G^\C$ implies the existence of a $(2,\D)$-coloring of $G^1$. We want to show
that a $(3\C+2,\D)$-coloring of $G^{\C+1}$ also implies a $(2,\D)$-coloring of
$G^1$. Suppose then that we have such a $(3\C+2,\D)$-coloring of $G^{\C+1}$. If
a copy of $G^\C$ included in $G^{\C+1}$ uses at most $3\C-1$ colors, we are
done, since this implies the existence of a $(2,\D)$-coloring of $G^1$.
Therefore, assume that all copies of $G^{\C+1}$ use at least $3\C$ colors.

Consider now two vertices $p_j, p^i_j$, for some $j\in\{A,B\}$,
$i\in\{1,\ldots,\D\}$. We claim that they must receive the same color. To see
this, take an internal vertex $v$ of the equality gadget $Q(p_j,p^i_j)$ and
recall that we have added ${3\C+2\choose 3\C}\D+1$ disjoint copies of $G^\C$
connected to $p_j,p^i_j,v$. Hence, there is some set of $3\C$ colors that
appears in at least $\D+1$ of these copies, and therefore cannot be used in
$p_j,p^i_j,v$. Therefore, if $p_j,p^i_j$ do not share a color, all the $2\D+1$
internal vertices of the equality gadget share the color of one of the two,
which violates the correctness of the coloring. We conclude that $p_A$ has $\D$
neighbors with its own color, as does $p_B$, therefore, since they are
connected, $p_A,p_B$ use distinct colors.

Consider now any other vertex $v$ of the main graph. Again, we have added
${3\C+2\choose 3\C}\D+1$ disjoint copies of $G^\C$ connected to $p_A,p_B,v$,
hence there is a set of $3\C$ colors which appears in $\D+1$ copies and is
therefore not used by $p_A,p_B,v$. Since there are $3\C+2$ colors overall and
$p_A,p_B$ use distinct colors, we conclude that $v$ uses either the color of
$p_A$ or that of $p_B$. Hence, the coloring of $G^{\C+1}$ contains a 2-coloring
of $G^1$.

For the final property, suppose that $\td(G^\C) \le \C \td(G^1) + 2\C$. We want
to establish that $\td(G^{\C+1}) \le (\C+1)\td(G^1) +2\C + 2$. To see this, we
construct a tree for $G^{\C+1}$ as follows, the two top vertices are $p_A,p_B$,
and below these we place a tree whose completion contains $G^1$ (hence we have
at most $\td(G^1)+2$ levels now). For every copy of $G^\C$ that was connected
to $p_A,p_B$, and a vertex $v$, we find $v$ and attach below it a tree whose
completion contains $G^\C$. Similarly, for every copy of $G^\C$ attached to
$p_j,p^i_j$, and a vertex $v$, for some $j\in\{A,B\}$, $i\in\{1,\ldots,\D\}$,
one of the vertices $v,p^i_j$ is a descendant of the other in the current tree
(since they are connected); we attach a tree containing $G^\C$ to this
descendant. The total number of levels of the tree is therefore $\td(G^1)+2 +
\td(G^\C) \le (\C+1)\td(G^1)+\C+2$, as desired.  \end{proof}

\begin{proof}[Proof of Corollary \ref{cor:apxhard}] Fix some constants
$\delta_1,\delta_2$. We invoke Theorem \ref{thm:apxhard} with $\C = \lceil
\frac{\delta_2+1}{\delta_1}\rceil$. The graph produced either admits a
$(2\C,\Delta)$-coloring or does not admit a $(3\C-1,\Delta)$-coloring.  Suppose
that the algorithm described in this corollary exists. Then, in the former case
it produces a coloring with at most $(\frac{3}{2}-\delta_1)\cdot 2\lceil
\frac{\delta_2+1}{\delta_1}\rceil + \delta_2 = 3\lceil
\frac{\delta_2+1}{\delta_1}\rceil - 2\delta_1 \lceil
\frac{\delta_2+1}{\delta_1}\rceil + \delta_2 \le 3\C - 2(\delta_2+1) + \delta_2
\le 3\C-1$ colors. Hence, the algorithm would be able to distinguish the two
cases of a W[1]-hard problem.  \end{proof}

\begin{proof}[Proof of Corollary \ref{thm:fvs-approx}] If $\C\ge 3$ we simply
invoke Theorem \ref{thm:alg-fvs}. If $\C=2$ we invoke the same algorithm with
$\C=3$. If the algorithm produces a coloring, we output that as the solution,
otherwise we can report that no $(\C,\D)$-coloring exists.  \end{proof}

\end{document}